\documentclass{cccg26}
\usepackage{graphicx,amssymb,amsmath}
\usepackage{enumitem}
\usepackage{tikz}
\usepackage{pgfplots}
\pgfplotsset{compat=1.18}
\usepackage[colorlinks=true,
            linkcolor=black,
            citecolor=black,
            urlcolor=blue]{hyperref}

\title{Minimum-Weight Steiner Triangulation of Convex Polygons\\ Requires Interior Steiner Points}
\author{
David Eppstein\thanks{Department of Computer Science, University of California, Irvine, \texttt{eppstein@uci.edu}}
\and
Zahra Hadizadeh\thanks{Department of Computer Science, University of California, Irvine, \texttt{zhadizad@uci.edu}}
}

\index{Hadizadeh, Zahra}
\index{Eppstein, David}

\begin{document}
\thispagestyle{empty}
\maketitle

\begin{abstract}
We construct a convex polygon for which the minimum-weight Steiner triangulation requires an interior Steiner point. This provides a counterexample to a 1994 conjecture of  Eppstein that minimum-weight Steiner triangulation of convex polygons needs only Steiner points on the boundary of the polygon.
\end{abstract}

\section{Introduction}

We study the problem of \emph{minimum-weight Steiner triangulation}: triangulating a geometric input, allowing the addition of \emph{Steiner points} as vertices of the triangulation, with the goal of minimizing the total edge length of the triangulation. For the version of the problem studied here, the input is a convex polygon. Approximation algorithms for this problem, and for the related problem of minimum-weight Steiner triangulation of point sets, were considered by Eppstein~\cite{Eppstein94}. Eppstein observed that there exist convex polygons for which optimal or approximately-optimal triangulations require Steiner points: a triangulation without Steiner points is $\Omega(\log n)$ times as expensive as a triangulation with Steiner points. Additionally, he found quadtree-based approximation algorithms for minimum-weight Steiner triangulation, both for point sets and for convex polygons, that achieve a constant (but large) approximation ratio. In the case of convex polygons, his approximation uses only Steiner points on the boundary of the polygon. He conjectured that, in fact, only boundary Steiner points are necessary for the optimal solution to the minimum-weight Steiner triangulation on convex polygons. Triangulations that only use vertices on the boundary have a tree-structured weak dual (the graph of triangles and their adjacencies) through which it might be possible to find the optimal such triangulation in polynomial time by dynamic programming.

In this paper, we give a counterexample to this conjecture. We construct a convex polygon for which adding a single interior Steiner point strictly improves over all triangulations that use only boundary Steiner points (and also over triangulations with no Steiner points). This shows that interior Steiner points can be strictly more powerful than boundary-only Steiner points, even for convex polygons.

In any counterexample to this conjecture, with one Steiner point (as ours has), the Steiner point must have high degree ($\ge 7$), because a Steiner point of degree at most six could be contracted to its closest neighbor, reducing the total weight of the triangulation by the triangle inequality. The intuitive idea behind our counterexample is to construct a convex polygon whose minimum-weight non-Steiner triangulation has a vertex $v_0$ that also has high degree, with neighbors at exponentially increasing distances. For this triangulation, an interior Steiner point near $v_0$ can take over many of the adjacencies to neighbors of $v_0$, shortening those edges, while adding only a small amount of length between the new vertex and $v_0$. The difficult part of our construction is proving that, for this example, one cannot obtain this reduction in length using Steiner points that belong only to the polygon boundary.

\subsection{New results}

Our main result is the following theorem.

\begin{theorem}\label{thm:main}
There exists a convex polygon $P$ such that:
\begin{enumerate}[label=(\roman*)]
    \item \label{thm:part1} the minimum cost of a triangulation of $P$ without Steiner points is
    \[
    \mathrm{OPT}_{\mathrm{noSteiner}}(P);
    \]
    \item \label{thm:part2} allowing Steiner points only on the boundary does not improve this value, i.e.,
    \[
    \mathrm{OPT}_{\mathrm{boundary}}(P)= \mathrm{OPT}_{\mathrm{noSteiner}}(P);
    \]
    \item \label{thm:part3} allowing interior Steiner points strictly improves the optimum, i.e.,
    \[
    \mathrm{OPT}_{\mathrm{interior}}(P)
    < \mathrm{OPT}_{\mathrm{boundary}}(P).
    \]
\end{enumerate}
\end{theorem}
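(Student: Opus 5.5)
The plan is to build $P$ explicitly as a fan-like convex polygon around a distinguished vertex $v_0$. Its other vertices $v_1,\dots,v_k$ lie at distances from $v_0$ growing geometrically, say $|v_0v_i| = r^i$ for a fixed ratio $r>1$. Their angular positions are chosen so that the polygon is convex and every short chord avoids $v_0$. Concretely, I would take the $v_i$ nearly along a flat convex arc, with $v_0$ sitting at the apex of a narrow wedge. With a moderate $k$ (around $7$ to $10$), everything becomes a finite numeric instance, and the three parts can then be handled in turn.

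For part \ref{thm:part1}, I would compute $\mathrm{OPT}_{\mathrm{noSteiner}}(P)$ by the standard $O(n^3)$ dynamic program over polygon diagonals, which is exact for polygons without Steiner points. The goal is to verify that the optimum is the fan from $v_0$, giving a precise value $W$. Part \ref{thm:part3} is then an explicit construction. Place a Steiner point $s$ in the interior at small distance $\varepsilon$ from $v_0$, on the side facing the far vertices. Reconnect the far neighbors $v_j$ (with $j\ge 7$ or so) to $s$ instead of $v_0$, and add the edge $v_0s$ plus the two boundary-neighbor edges needed to keep everything triangulated. Each replaced edge shortens by roughly $\varepsilon\cos\theta_j$, where $\theta_j$ is the angle between $v_0s$ and $v_0v_j$. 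Since there are at least seven such edges, all within a narrow wedge, the total savings exceed the roughly $3\varepsilon$ of added length. A first-order expansion in $\varepsilon$, followed by an explicit numeric check at a concrete $\varepsilon$, gives a triangulation of weight strictly less than $W$.

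The main obstacle is part \ref{thm:part2}: showing that no finite set of boundary Steiner points beats $W$. I would first prove a reduction lemma. In a triangulation whose vertices all lie on $\partial P$, every edge is either a boundary subsegment or a chord between two boundary points, so the weak dual is a tree. Moreover, a Steiner point $p$ in the relative interior of a side $ab$ can be slid along $ab$. The total weight is a sum of distances $|pq|$ to its chord-neighbors $q$, and each of these is a convex function of $p$'s position. A convex function on a segment is maximized at an endpoint, but we want a minimum, so I would instead argue that the chords incident to $p$ form a fan. Then replacing $p$ together with its fan by the fan from $a$ or from $b$ does not increase the weight, by a triangle-inequality argument on consecutive triangles $p q_i q_{i+1}$. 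If the general reduction fails, the fallback is a lower bound argument: write the weight as (perimeter) plus (sum of chords). Each chord separates $P$ into two convex pieces. A charging argument then shows that the chords must have total length at least the fan's diagonal sum, using that every chord crossing the wedge near $v_0$ has length at least that of the corresponding fan edge because of the geometric distance growth. Once the reduction is established, part \ref{thm:part2} follows from part \ref{thm:part1}, since removing boundary Steiner points yields a non-Steiner triangulation of no greater weight, and the theorem follows.
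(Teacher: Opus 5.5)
Your dynamic-programming plan for part (i) matches the paper. Parts (iii) and (ii), however, each have a genuine gap.

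For part (iii), the first-order accounting is wrong. Suppose $s$ takes over a block of fan edges of $v_0$. Then $s$ lies in triangles $sv_0v_a$ and $sv_0v_b$, where $v_a,v_b$ are the vertices bounding that block (at best, the two hull neighbors of $v_0$). The new edges $sv_a$ and $sv_b$ have lengths at least $|v_0v_a|-\varepsilon$ and $|v_0v_b|-\varepsilon$, so they are not $O(\varepsilon)$. Each rerouted edge saves at most $|v_0s|=\varepsilon$, so as $\varepsilon\to 0$ the weight change tends to $|v_0v_a|+|v_0v_b|>0$. Hence no small perturbation of the fan can help. With your indexing it is worse still: $v_0$'s hull neighbors are presumably $v_1$ and the farthest vertex $v_k$. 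Taking over only the edges to $v_j$ with $j\ge 7$ then adds edges $sv_6$ and $sv_k$ of lengths about $r^6$ and $r^k$. A working example needs the Steiner point at a finite distance comparable to $v_0$'s hull neighbors, and the gain comes from a delicate global balance. In the paper, $v_0$'s hull neighbors are at distance about $3$, and $p^*=(3.23041167,0)$ is joined to all $28$ vertices. The far diagonals each gain roughly $|v_0p^*|/\sqrt{26}$, and the edge to $e=(16000,0)$ gains the full $|v_0p^*|$. The net improvement over the fan is only about $0.14$, certified by direct computation.

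For part (ii), your reduction lemma cannot hold in the generality you state. As the paper's introduction recalls, some convex polygons have boundary Steiner points that reduce the optimum by a factor $\Omega(\log n)$. So replacing $p$ and its incident edges by a fan from $a$ or $b$ must fail in general. The triangle inequality gives only $d(a,q)\le d(a,p)+d(p,q)$, which allows a loss of up to $d(a,p)$ per rerouted edge. Something specific to $P$ has to pay for that loss, and supplying it is the bulk of the paper's proof:
\begin{itemize}
\item From the slope bound $|\Delta x|\le|\Delta y|/5$ on the non-extreme chain, it derives $d(x,p)-d(x,u)\ge\frac{12}{13}d(u,p)$ for neighbors beyond $u$.
\item It balances these gains against losses using counts of neighbors on each side, and uses the already-present edge $uw_1$ as extra gain.
\item It handles Steiner points on the two long edges at $e$, and on the short edges next to them, separately with explicit numerical estimates.
\end{itemize}
Your fallback charging argument is too vague to replace this. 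Your proposal also never addresses the central tension: one polygon must defeat every boundary Steiner point yet admit an improving interior one. The far vertex $e$ is engineered precisely to achieve this.
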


The polygon $P$ is given explicitly in the next section. The interior improvement is achieved by adding a single Steiner point. Together, these statements give a counterexample to the conjecture of Eppstein~\cite{Eppstein94}.

\subsection{Related work}

Minimum-weight triangulation of point sets, without Steiner points, was studied by several groups of researchers in the 1970s~\cite{DupGot-AV-70,ShaHoe-FOCS-76,Llo-FOCS-77}, and included in a list of problems neither known to be polynomial time nor $\mathsf{NP}$-complete by Garey and Johnson in 1979~\cite{GarJoh-79}. This open problem was finally resolved by an $\mathsf{NP}$-completeness proof published by Mulzer and Rote in 2008~\cite{MulzerRote08}. For simple polygons, a standard dynamic programming algorithm finds the minimum-weight non-Steiner triangulation in polynomial time~\cite{Gil-RR-79,Kli-ADM-80}.

The use of Steiner points in triangulation is commonplace in finite element mesh generation, but often with different optimization criteria than minimum weight. The problem of minimum-weight Steiner triangulation of polygons was introduced by Clarkson, who provided a logarithmic approximation~\cite{Cla-SODA-91}.
Cynthia Traub has studied the local topological effects of Steiner points in minimum-weight triangulations~\cite{Traub15}.
The constant-factor approximation algorithm for minimum-weight Steiner triangulation of point sets and convex polygons of Eppstein~\cite{Eppstein94} has been extended more generally to polygons with holes~\cite{ChengLee02}, and to minimum-area Steiner triangulation of three-dimensional inputs~\cite{ChengDey99}.

\section{Proof of Theorem~\ref{thm:main}}

We prove the theorem for one explicit convex polygon \(P\).

Let
\[
v_0=(0,0), \qquad v_{14}=(16000,0),
\]
and for \(1\le i\le 13\) define
\[
v_i=\left(\frac{2^i-i}{5},\,2^i+1\right),
\qquad
v_{28-i}=\left(\frac{2^i-i}{5},\,-(2^i+1)\right).
\]
These are the vertices of \(P\) in counterclockwise order.

\paragraph{Geometric intuition.}
The construction has two deliberately engineered features.

First, on the left side of the polygon, the vertices grow
approximately exponentially in their \(y\)-coordinate while their
\(x\)-coordinate grows only slowly. This creates a very steep convex
chain. As a consequence, diagonals between nonconsecutive vertices are
expensive. The cheapest triangulation without Steiner points is
therefore the fan centered at the origin.

Second, the rightmost vertex
\[
e=(16000,0)
\]
is placed very far from the rest of the polygon. This makes the two
incident hull edges unusually long and shallow. Because of this poor
geometry, adding one carefully chosen interior Steiner point near the
origin slightly shortens \emph{all} connections simultaneously, while
adding Steiner points only on the boundary cannot exploit this global
improvement.

Figure~\ref{fig:construction-intuition} illustrates the point set and
the interior Steiner point used in the improvement construction.

% preamble:
% \usepackage{tikz}
% \usepackage{pgfplots}
% \pgfplotsset{compat=1.18}

\begin{figure}[t]
\centering
\begin{tikzpicture}
\begin{axis}[
    width=0.92\linewidth,
    height=6.2cm,
    xlabel={$x$},
    ylabel={$y$},
    grid=both,
    xmode=log,
    log basis x=10,
    xmin=0.15,
    xmax=30000,
    ymin=-9000,
    ymax=9000,
    legend style={at={(0.02,0.98)},anchor=north west,draw=none,fill=none}
]

\addplot+[only marks, mark=*, mark size=1.6pt] coordinates {
(0.16,0)
(0.2,3) (0.4,5) (1,9) (2.4,17) (5.4,33)
(11.6,65) (24.2,129) (49.6,257) (100.6,513)
(202.8,1025) (407.4,2049) (816.8,4097) (1635.8,8193)
(0.2,-3) (0.4,-5) (1,-9) (2.4,-17) (5.4,-33)
(11.6,-65) (24.2,-129) (49.6,-257) (100.6,-513)
(202.8,-1025) (407.4,-2049) (816.8,-4097) (1635.8,-8193)
(16000,0)
};
\addlegendentry{hull vertices}

\addplot+[only marks, mark=x, mark size=3pt] coordinates {
(3.23041167,0)
};
\addlegendentry{$p^*$}

\end{axis}
\end{tikzpicture}
\caption{Point set shown with log scale on the \(x\)-axis. The point \((0,0)\) is displayed slightly shifted only for visualization.}
\label{fig:construction-intuition}
\end{figure}

We now prove the three parts separately.

\subsection{Proof of Theorem~\ref{thm:main}, part~\ref{thm:part1}}
\label{sec:proof-part1}

We compute the optimal triangulation of \(P\) without Steiner points
using the standard dynamic programming formulation for convex polygons.

Define
\[
c(i,j)=
\begin{cases}
0,&\text{if }(v_i,v_j)\text{ is a hull edge},\\
\|v_i-v_j\|_2,&\text{otherwise.}
\end{cases}
\]
Then
\[
\mathrm{DP}[i][i+1]=0,
\]
and for \(j\ge i+2\),
\[
\mathrm{DP}[i][j]
=
\min_{i<k<j}
\Bigl(
\mathrm{DP}[i][k]
+
\mathrm{DP}[k][j]
+
c(i,k)
+
c(k,j)
\Bigr).
\]

Since \(P\) is convex, this recurrence computes the exact optimum.
Evaluating it on our \(28\)-vertex polygon yields
\[
\mathrm{OPT}_{\mathrm{noSteiner}}(P)=49425.560539790305.
\]
Moreover, the computed optimum is the fan triangulation centered at
\[
v_0=(0,0).
\]

Additional numerical details, including representative entries of the
dynamic programming table, are given in Appendix~\ref{app:dp}.
\subsection{Proof of Theorem~\ref{thm:main}, part~\ref{thm:part3} }

We now exhibit one interior Steiner point whose star triangulation has strictly smaller cost.

Let
\[
p^*=(3.23041167,0).
\]
Connect \(p^*\) to all vertices of \(P\). Since \(p^*\) lies strictly inside the polygon, this gives a valid triangulation. The total added length is
\[
\sum_{t=0}^{27}\|p^*-v_t\|_2 = 49425.421179956924.
\]
Therefore
\[
\mathrm{OPT}_{\mathrm{interior}}(P)
<
49425.56
<
\mathrm{OPT}_{\mathrm{noSteiner}}(P).
\]
This proves part~(iii).

\subsection{Proof of Theorem~\ref{thm:main}, part~\ref{thm:part2}}

In this subsection we prove that allowing Steiner points only on the boundary does not improve the optimum for the polygon \(P\).

We call the point \((16000,0)\) the \emph{extreme point}. The two hull edges incident to it are the \emph{extreme edges}. The line through \((8179/5,8193)\) and \((16000,0)\) is called the \emph{upper extreme line}, and the symmetric lower one is called the \emph{lower extreme line}.

\subsubsection*{Distance bounds}

We first record two simple bounds that will be used throughout.

\begin{obs}\label{obs:y-difference}
For any two points \(u,v\) on non-extreme hull edges of \(P\),
\[
d(u,v)\le 1.03\,|u_y-v_y|.
\]
\end{obs}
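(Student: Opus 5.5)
The plan is to reduce the statement to a bound on slopes. The non-extreme hull edges are the edges $v_iv_{i+1}$ for $0\le i\le 12$ and their mirror images $v_{28-i}v_{27-i}$, together with the left edges $v_0v_1$ and $v_{27}v_0$. Two points $u,v$ on these edges lie on the left part of the boundary of $P$, which has $x$-coordinates in $[0,1635.8]$. I will write
\[
d(u,v)=|u_y-v_y|\sqrt{1+s^2}, \qquad s=\frac{u_x-v_x}{u_y-v_y}.
\]
When $u_y=v_y$, the two points are either equal or lie in mirror positions across the $x$-axis. The mirror case needs separate treatment (see below). In every other case it suffices to show $|s|\le 0.24$, since $\sqrt{1+0.24^2}\approx 1.0284<1.03$.

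First I handle two points on the same side, say the upper chain $v_0,v_1,\dots,v_{13}$. The chain is convex and monotone in $y$. The inverse slope $\Delta x/\Delta y$ of edge $v_iv_{i+1}$ is
\[
\frac{(2^i-1)/5}{2^i}=\frac{1-2^{-i}}{5}<\frac15 .
\]
For the edge $v_0v_1$ it is $0.2/3$. So $x$, viewed as a function of $y$ along the chain, is Lipschitz with constant $1/5$. It follows that $|u_x-v_x|\le \tfrac15|u_y-v_y|$, and hence $d\le\sqrt{1.04}\,|u_y-v_y|<1.0199\,|u_y-v_y|$.

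Second I handle $u$ on the upper chain and $v$ on the lower chain. Here $|u_y-v_y|=u_y+|v_y|$. By symmetry the $x$-coordinate is the same function $f(|y|)$ on both chains, and $f$ is $\tfrac15$-Lipschitz with $f(0)=0$. This gives
\[
|u_x-v_x|=|f(u_y)-f(|v_y|)|\le \tfrac15\max(u_y,|v_y|)\le \tfrac15|u_y-v_y|,
\]
and the same constant follows. The only problematic configuration is $u_y=-v_y$ with $u_x\ne v_x$. This cannot occur, since mirror points share their $x$-coordinate. Equal $y$ with $u\neq v$ on opposite chains forces $u_y=v_y=0$, which means $u=v=v_0$.

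The main obstacle is only making sure the edges incident to $v_0$ and the steepness near the bottom of the chain are covered by the Lipschitz bound. I would check them explicitly: the edge $v_0v_1$ has inverse slope $1/15$. The true worst case is therefore the constant $\tfrac15$, giving the factor $\sqrt{26}/5\approx1.0198$.
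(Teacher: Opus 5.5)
Your proof is correct and follows essentially the same route as the paper. You bound $|\Delta x|\le\tfrac15|\Delta y|$ on every non-extreme edge (checking $v_0v_1$ separately), use the resulting Lipschitz property along one chain, and handle opposite chains using $x=0$ at $y=0$; the paper writes the last step as $|u_x-v_x|\le u_x+v_x\le\tfrac15(u_y+|v_y|)$, and both give the constant $\sqrt{26}/5<1.03$. The detour through $|s|\le 0.24$ and the separate ``mirror case'' are unnecessary. The Lipschitz bound already gives $|u_x-v_x|\le\tfrac15|u_y-v_y|$ for every pair, and when $u_y=v_y$ it simply forces $u=v$.
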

The proof is deferred to Appendix~\ref{proof:obs:y-difference}.

\begin{obs}\label{obs:dist-to-extreme-line}
Let \(L\) be the upper extreme line through \((8179/5,8193)\) and \((16000,0)\). Then:
\begin{enumerate}[label=(\roman*)]
    \item for every point \((x,y)\) on a non-extreme hull edge with
    \[
    0<y\le 4097,
    \]
    we have
    \[
    d((x,y),L)\ge 7924.6-0.968\,y;
    \]
    \item for every point \((x,y)\) on a non-extreme hull edge with \(y<0\), we have
    \[
    d((x,y),L)\ge 7924.6+0.769\,|y|.
    \]
\end{enumerate}
\end{obs}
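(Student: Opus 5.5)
The plan is to write the distance to \(L\) as an explicit affine function of \((x,y)\), and then remove the \(x\)-dependence using a simple linear bound on \(x\) that holds on every non-extreme hull edge. Both parts then follow from the same formula.

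\textbf{Step 1 (signed distance to \(L\)).} The line \(L\) passes through \((1635.8,8193)\) and \((16000,0)\). Its direction is \((14364.2,-8193)\), so a normal vector is \((8193,\,14364.2)\), and \(L\) has equation
\[
8193x+14364.2y=8193\cdot 16000=131{,}088{,}000 .
\]
The normal has length \(N=\sqrt{8193^2+14364.2^2}\approx 16536.5\). Every point of \(P\) lies on the same side of \(L\) as the origin, because \(L\) supports a hull edge of the convex polygon \(P\). Hence for \((x,y)\in P\),
\[
d((x,y),L)=\frac{131{,}088{,}000-8193x-14364.2y}{N}\approx 7927.2-0.4955\,x-0.8686\,y .
\]
In the final write-up I will round each constant in the safe direction: the constant term down, the \(x\)-coefficient up, and the \(y\)-coefficient chosen conservatively according to the sign of \(y\).

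\textbf{Step 2 (bounding \(x\) by \(|y|\)).} I claim that every point on a non-extreme hull edge satisfies \(0\le x\le |y|/5\).

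At the vertices this is immediate. For \(v_0\) we have \(x=y=0\). For \(v_i\) and \(v_{28-i}\) with \(1\le i\le 13\),
\[
x=\frac{2^i-i}{5}\le \frac{2^i+1}{5}=\frac{|y|}{5}.
\]
The non-extreme edges are \(v_iv_{i+1}\) for \(0\le i\le 12\), \(v_{i}v_{i+1}\) for \(15\le i\le 26\), and \(v_{27}v_0\). On each of these, both endpoints lie in the closed upper half-plane or both lie in the closed lower half-plane. On such an edge \(|y|\) is an affine function of the edge parameter. Therefore the linear inequality \(5x\le |y|\) passes from the two endpoints to every convex combination of them, and \(x\ge 0\) passes over in the same way.

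\textbf{Step 3 (combining).}

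For part (i), take \(0<y\le 4097\). Using \(x\le y/5\),
\[
d \ge 7927.2-\Bigl(\tfrac{0.4955}{5}+0.8686\Bigr)y \ge 7927.2-0.9677\,y \ge 7924.6-0.968\,y .
\]
The hypothesis \(y\le 4097\) is not actually needed for the inequality. It only guarantees that the right-hand side is a meaningful positive bound.

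For part (ii), take \(y<0\). Using \(x\le |y|/5\),
\[
d\ge 7927.2+\Bigl(0.8686-\tfrac{0.4955}{5}\Bigr)|y| \ge 7927.2+0.7695\,|y| \ge 7924.6+0.769\,|y| .
\]

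\textbf{Main obstacle.} The only delicate point is the numerical rounding. The coefficient \(0.769\) in part (ii) has a margin of only about \(5\times10^{-4}\). I would therefore carry out the final inequalities with exact rational arithmetic, for instance by squaring the normal and comparing \(N^2\) with \(16536^2\) and \(16537^2\), rather than relying on rounded decimals. The slack in the constant term, about \(2.6\), absorbs any rounding in the leading term.
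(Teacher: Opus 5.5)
Your proposal is correct and is essentially the paper's proof. The paper uses the same line (your equation scaled by \(5\), i.e.\ \(40965x+71821y=655440000\)) and the same bound \(x\le |y|/5\), then certifies \(655440000/D>7924.6\), \(80014/D<0.968\) and \(63628/D>0.769\) by exact integer comparisons of squares. The only difference is that you get the sign of the affine expression from convexity of \(P\), which shows, as you note, that \(y\le 4097\) is superfluous; the paper instead gets the sign directly from \(y\le 4097\) resp.\ \(y<0\).

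When you make the arithmetic exact, note that your displayed decimals \(7927.2\) and \(0.9677\) are rounded in the unsafe direction; the true values are about \(7927.196\) and \(0.96773\). Also, the tighter coefficient margin is in part~(i), \(0.968-0.96773\approx 2.7\times 10^{-4}\), not part~(ii), although the constant slack of about \(2.6\) absorbs it easily.
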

The proof is deferred to Appendix~\ref{proof:obs:dist-to-extreme-line}.

\begin{obs}\label{obs:extreme-distance}
Let \(e=(16000,0)\).

\begin{enumerate}
    \item For every point \(x\) on a non-extreme hull edge,
    \[
    15500<d(e,x)<17000.
    \]

    \item If in addition
    \[
    |x_y|\le 4097,
    \]
    then
    \[
    d(e,x)\le 16000.
    \]
\end{enumerate}
\end{obs}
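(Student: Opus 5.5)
The plan is to use convexity and the triangle inequality. The distance from the fixed point \(e\) to a variable point \(x\) is a convex function of \(x\). Restricted to any single non-extreme hull edge (a segment), it therefore attains its maximum at an endpoint of that edge. Its minimum is either at an endpoint or at the foot of the perpendicular from \(e\) to the supporting line of that edge.

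The non-extreme hull edges form the chain \(v_{27}v_0, v_0v_1,\dots,v_{12}v_{13}\) together with \(v_{15}v_{16},\dots,v_{26}v_{27}\). By the symmetry \(y\mapsto -y\) it suffices to treat the upper chain \(v_0,v_1,\dots,v_{13}\).

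\textbf{Upper bounds.} Every point on this chain is a convex combination of two consecutive vertices, so \(d(e,x)\le\max_i d(e,v_i)\). It therefore suffices to check the vertices. For \(1\le i\le 13\) we have
\[
d(e,v_i)^2=\bigl(16000-\tfrac{2^i-i}{5}\bigr)^2+(2^i+1)^2 .
\]
The largest case is \(i=13\): with \(x\approx1635.8\) and \(y=8193\), this gives \(14364.2^2+8193^2\approx 2.063\cdot10^8+0.671\cdot10^8\approx 2.734\cdot10^8\), so \(d\approx16536<17000\). For part 2, the points with \(|x_y|\le4097\) lie on edges between \(v_{12}\) and its predecessors, together with part of edge \(v_{12}v_{13}\). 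Consider first the vertices \(v_0,\dots,v_{12}\). We have \(d(e,v_0)=16000\). For \(v_{12}\), \(x=816.8\) and \(y=4097\) give \(15183.2^2+4097^2\approx 2.3053\cdot10^8+0.1679\cdot10^8=2.4732\cdot10^8\), and \(16000^2=2.56\cdot10^8\), so \(d<16000\). The smaller vertices behave similarly: since \(y_i\le 5x_i+i+1\) roughly, the gain \(2\cdot16000\,x_i\) in the first coordinate dominates \(y_i^2\). Concretely, \(d^2-16000^2=-32000x_i+x_i^2+y_i^2\), and for each \(i\le12\) one checks \(y_i^2<32000x_i\) directly, e.g. \(4097^2\approx1.68\cdot10^7<2.61\cdot10^7\).

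The remaining piece is the portion of edge \(v_{12}v_{13}\) with \(y\le4097\), which is just \(v_{12}\) itself, since \(y\) increases along that edge. Points on the lower chain are handled by symmetry, and \(v_0\) gives equality \(16000\). This is why the statement is non-strict.

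\textbf{Lower bound.} For each edge, I minimize \(d(e,\cdot)\). The nearest point of the whole chain to \(e\) should lie near \(v_0\) side or be governed by the small-\(y\) region. Any point \(x\) on the chain has \(x_x\le 1635.8\), so \(d(e,x)\ge 16000-1635.8=14364.2\). That crude bound is not enough, so refine it by case.

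\emph{Case \(|x_y|\ge 4097\).} Then \(x\) lies on edge \(v_{12}v_{13}\), where \(x_x\le1635.8\). This gives \(d\ge\sqrt{14364.2^2+4097^2}\approx14937\). That is still short of \(15500\), so I must use the correlation between \(x_x\) and \(y\). Along the edge \(v_{12}v_{13}\), \(x_x\) is an affine function of \(y\), namely \(x_x = 816.8 + (y-4097)\cdot 819/4096\). Substituting gives a one-variable quadratic in \(y\) on \([4097,8193]\), whose minimum is at an endpoint or at the vertex of the parabola. At \(v_{13}\) the distance is \(\approx16536\), and at \(v_{12}\) it is \(\approx15727\). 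The perpendicular foot from \(e\) to this line should fall outside the segment, since the line is nearly vertical and \(e\) is at height \(0\), below \(v_{12}\). So the minimum is \(\approx15727>15500\).

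\emph{Case \(|x_y|\le4097\).} Here \(x_x\le816.8\), so \(d\ge16000-816.8=15183\), which is again not enough directly. I will instead apply the same per-edge analysis. For each edge \(v_iv_{i+1}\) with \(i\le 11\), the perpendicular foot lies below \(v_i\), because the edges are steep and \(e\) is at height \(0\). The minimum on each edge is thus at its lower endpoint, and over all edges the minimum is \(\min_i d(e,v_i)\). For the edge \(v_{27}v_0v_1\) region near the origin, the distance is close to \(16000\).

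It remains to confirm \(d(e,v_i)>15500\) for all \(i\). The smallest value should be at \(v_{12}\) (\(\approx15727\)) or at \(v_{11}\), where \(x=407.4\), \(y=2049\) give \(\sqrt{15592.6^2+2049^2}\approx15727\). A short table of the thirteen values finishes the argument. The main obstacle is verifying that the perpendicular foot never lands inside an edge. I would check this via the sign of \((e-v_i)\cdot(v_{i+1}-v_i)\): this quantity is negative exactly when the foot falls before \(v_i\), making \(v_i\) the nearest point of that edge. For edge \(v_iv_{i+1}\), the dot product is \((16000-x_i)\Delta x - y_i\Delta y\) with \(\Delta y\approx 5\Delta x\). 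This is negative once \(y_i\gtrsim3200\), so the condition fails for small \(i\). For those small edges the foot may be interior, and there I would use the perpendicular distance directly. That distance is at least \((16000-x_i)\cdot\) the sine of the steep slope angle, approximately \(0.98\cdot 16000>15500\).
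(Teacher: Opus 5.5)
Your upper bounds follow the paper's route (convexity of $d(e,\cdot)$ along each edge, then a vertex check, with $i=13$ the worst case and $x_i^2+y_i^2\le 32000x_i$ for $i\le 12$), and they are fine. One small correction: the inequality you need is $x_i^2+y_i^2<32000x_i$, not $y_i^2<32000x_i$. This is harmless, since $x_i^2\le y_i^2/25$ and the margins are large.

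The lower bound, however, has a genuine gap. You claim that on every edge $v_iv_{i+1}$ with $i\le 11$ the foot of the perpendicular from $e$ falls below $v_i$, so that the minimum over the chain is $\min_i d(e,v_i)$. This is false. For $i\le 10$ the foot lies beyond $v_{i+1}$. On $v_{11}v_{12}$ the foot lies strictly inside the edge, near $(612.7,3076)$, where $d(e,\cdot)\approx 15691.7$. That value is smaller than every vertex distance (the smallest is $d(e,v_{12})\approx 15726$), so no table of vertex distances can certify the bound.

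Your fallback for such edges does not close the gap either. You bound the distance to the supporting line by $(16000-x_i)\sin\alpha_i$ and then approximate this by $0.98\cdot 16000$. Dropping $x_i$ costs about $0.98\cdot 407.4\approx 400$ at $i=11$, while the available slack is only $16000\cdot 5/\sqrt{26}-15500\approx 189$. Concretely, for $i=11$ your bound is $15592.6\cdot 2048/\sqrt{409.4^2+2048^2}\approx 15290<15500$ (and about $15491$ for $i=10$). So the one edge on which the minimum is attained in the interior is exactly the one your argument leaves unproven.

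To repair it, you can keep the term you discarded. The distance from $e$ to the line through $v_i,v_{i+1}$ is $(16000-x_i)\sin\alpha_i+y_i\cos\alpha_i$, and the extra $y_{11}\cos\alpha_{11}\approx 402$ brings it to about $15692$.

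Much simpler is the paper's argument, which avoids any per-edge or foot-of-perpendicular analysis. Every point $(x_0,y_0)$ on a non-extreme hull edge satisfies $0\le x_0\le |y_0|/5$. With $t=|y_0|$ this gives $d(e,x)^2\ge (16000-t/5)^2+t^2\ge \tfrac{25}{26}\cdot 16000^2$, i.e.\ $d(e,x)\ge 16000\cdot 5/\sqrt{26}\approx 15689.3>15500$. Geometrically, the whole chain lies in the wedge $x\le |y|/5$, whose distance from $e$ is this value. This also disposes of your separate treatment of $v_{12}v_{13}$.
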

The proof is deferred to Appendix~\ref{proof:extreme-distance}.

\subsubsection*{No Steiner point on the short edges adjacent to the extreme edges}

\begin{lemma}\label{lem:no-short-edge}
In an optimal boundary-only Steiner triangulation, there is no Steiner point on the edge
\[
(4084/5,4097)\text{--}(8179/5,8193),
\]
and none on its symmetric lower edge
\[
(4084/5,-4097)\text{--}(8179/5,-8193).
\]
\end{lemma}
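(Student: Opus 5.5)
We argue by contradiction using a local exchange. Fix an optimal boundary-only Steiner triangulation $T$ and suppose some Steiner point $s$ lies on the edge $a$--$b$ with $a=(4084/5,4097)$ and $b=(8179/5,8193)$; the lower edge is handled by symmetry. Since $s$ lies on a hull edge, its neighbours in $T$ are the two boundary neighbours along the edge (which add no cost under the convention $c=0$ for hull edges) together with interior edges $sw_1,\dots,sw_m$ going into the polygon. The plan is to show that removing $s$ and re-triangulating the union of its incident triangles strictly lowers the total cost. Along the way we must use that the extreme point $e$ makes certain edges very costly.

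\textbf{Step 1: the extreme point contributes one long edge.} The triangle of $T$ on the segment of $a$--$b$ that contains $s$ has its third vertex somewhere on the boundary. Every point on the chain from $a$ around to the lower part of the polygon lies on non-extreme hull edges or on the lower extreme edge. We split into two cases. In the first case $s$ is adjacent to $e$ or to a point on an extreme edge. By Observation~\ref{obs:extreme-distance}, an edge from $s$ to $e$ has length at least $15500$, and Observation~\ref{obs:dist-to-extreme-line}(i) bounds from below the length of any edge from $s$ to the upper extreme line. In the second case $s$ is adjacent only to points on non-extreme hull edges, and each of these edges has length at most $1.03$ times its vertical span by Observation~\ref{obs:y-difference}.

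\textbf{Step 2: the exchange.} Contract $s$ to whichever endpoint, $a$ or $b$, is better. Concretely, replace each edge $sw_i$ by $bw_i$ when $w_i$ lies above $s$ in the relevant sense, and by $aw_i$ otherwise. Because $s$ lies on a convex boundary edge, the star of $s$ is a fan of triangles on one side of the line $ab$. Moving $s$ along $ab$ to an endpoint keeps every triangle valid, since the neighbours stay on one side of that line and convexity is preserved. The total weight is a sum of distances $d(s,w_i)$, each a convex function of $s$ along the segment. Hence the total is a convex function of the position of $s$ on $ab$, and its minimum over the segment is attained at an endpoint. To get a \emph{strict} improvement, we need one of two facts. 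Either the sum is strictly convex, which holds as soon as two neighbours $w_i$ are not collinear with $ab$; this is automatic because the $w_i$ lie strictly inside the half-plane. Or, when $s$ moves to $a$ or $b$, some edge $sw_i$ collapses onto a hull edge, for example $w_i=e$ once $s$ reaches $b$, and so it becomes free. In the latter case the saving is huge, roughly $d(s,e)\ge 15500$ minus the small increases elsewhere.

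\textbf{Main obstacle.} Strict convexity only gives that $s$ cannot sit at an interior minimum of the segment restricted to its current neighbour set. The difficulty is that, as $s$ slides, the combinatorics must remain valid, and the endpoint may coincide with an existing neighbour, producing a degenerate or duplicate edge. The resolution is that duplicate edges only reduce cost further, since we can delete one copy. The delicate quantitative part is the case where $e$ is a neighbour of $s$ but $s$ would slide toward $a$. There we would use Observation~\ref{obs:dist-to-extreme-line}(i) and Observation~\ref{obs:y-difference} to show that the derivative of $d(s,e)$ along $ab$ (magnitude $\approx 0.968$ per unit of $y$) dominates the combined derivatives of the other edges, so the optimal direction is toward $b$. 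Even if convexity did not hold uniformly there, these explicit bounds would settle it.
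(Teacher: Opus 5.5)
\textbf{There is a genuine gap: the central step of your exchange points the wrong way.} A convex function on a segment attains its \emph{maximum} at an endpoint, not its minimum. With the combinatorics frozen, $s\mapsto\sum_i d(s,w_i)$ is convex along the edge. Strict convexity only makes its minimizer unique, and that minimizer may well lie in the interior of the edge; this is the Fermat--Weber effect that makes Steiner points useful at all. So Step~2 gives no reason why moving $s$ to an endpoint lowers the cost.

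Your ``main obstacle'' paragraph does not repair this. The constant $0.968$ in Observation~\ref{obs:dist-to-extreme-line} is a coefficient in a lower bound for the distance to the line $L$, not the rate of change of $d(s,e)$. That observation does not even apply to $s$, since it needs $0<y\le 4097$ and here $s_y>4097$. In fact the edge direction $(819,4096)$ has positive inner product with $s-e$ everywhere on the edge. Hence $d(s,e)$ \emph{decreases} as $s$ moves down, at a rate of only about $0.07$ to $0.32$ per unit length, so $e$ does not pull $s$ toward $b$. Meanwhile each left-chain neighbour also pulls $s$ down, at a rate close to $1$.

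Two smaller problems remain. Your concrete rule (sending some edges to $b$ and others to $a$) is not a contraction: it leaves a region that needs an extra diagonal whose cost you never bound. Also, moving $s$ to an original endpoint rather than to the adjacent vertex on the edge is invalid when the edge carries several Steiner points.

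Two ingredients are missing, and the paper's proof rests on both. First, a monotonicity fact (Lemma~\ref{lem:replace-p-by-a}). Let $p$ be the highest Steiner point on the edge and $a$ the next boundary vertex below it. Then $d(a,x)\le d(p,x)$ for every $x$ off the upper extreme edge (and also for $x=e$), because $(819,4096)\cdot(r-x)>0$ for all $r$ on the edge. After deleting redundant neighbours, $p$ has at most one neighbour $q$ on the upper extreme edge other than $b$, and by the triangle inequality the edge $pq$ grows by at most $d(a,p)$ when rerouted to $a$.

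Second, a discrete saving that your fixed-combinatorics view cannot see. The first neighbour $x$ of $p$ after $a$ is already joined to $a$, through the triangle $pax$. So when $p$ is contracted to $a$, the edge $px$ simply disappears, saving $d(p,x)\ge d(p,a)$, and more than $11000$ when $x$ lies on the lower extreme edge. You mention in passing that duplicate edges can be deleted, but this saving is what has to carry the argument. It strictly pays for the single possible loss on $pq$, and it is what your proof needs in place of convexity.
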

\begin{proof}
We prove the upper statement; the lower one follows by symmetry.

Let
\[
b=(8179/5,8193),
\]
and assume for contradiction that a Steiner point \(p\) lies on the segment
\[
(4084/5,4097)\text{--}b.
\]
Let \(a\) be the point immediately below \(p\) on this segment.

We first simplify the structure of the neighborhood of \(p\).

\medskip
\noindent
\emph{Reduction.}
If \(p\) has two neighbors on the upper extreme edge (besides \(b\)), then one of them is redundant: all of its incident edges can be removed without harming the triangulation. Hence, we may assume that \(p\) has at most one neighbor \(q\) on the upper extreme edge.

\medskip
We now reroute edges incident to \(p\).

\medskip
\noindent

Let \(x\) be a neighbor of \(p\) distinct from \(a\) and not on the upper extreme edge. Then
\[
d(a,x)\le d(p,x),
\]
so reconnecting \(x\) to \(a\) does not increase the cost (see Appendix~\ref{app:replace}).

\medskip
\noindent
\emph{Case 1: all remaining neighbors lie on extreme edges.}

Let
\[
S:=N(p)\setminus\{a,b,q\},
\]
and assume all points of \(S\) lie on the lower extreme edge.

If \(q\) is not the extreme vertex, we remove both \(p\) and \(q\), reconnect all vertices of \(S\) to \(a\), and add one diagonal to restore a triangulation. This strictly decreases the total cost (see Appendix~\ref{app:extreme-case}).

If \(q\) is the extreme vertex, we instead remove \(p\) only and reconnect all vertices of \(S\) to \(a\); this again yields a strictly cheaper triangulation (see Appendix~\ref{app:extreme-case}).

\medskip
\noindent
\emph{Case 2: there exists a neighbor not on an extreme edge.}

Let \(S_1\subseteq S\) be the set of such neighbors, and assume \(|S_1|\ge 1\). Set
\[
\Delta:=p_y-a_y>0.
\]

Reconnecting all vertices of \(S\) to \(a\) yields a total gain of at least \(d(a,p)\), while the only possible loss (replacing \(pq\) by \(aq\)) is at most \(d(a,p)\). Hence the total cost strictly decreases (see Appendix~\ref{app:mixed-case}).

\medskip
In all cases we obtain a strictly cheaper triangulation, contradicting optimality.
\end{proof}

\subsubsection*{No Steiner point on an extreme edge with neighbors only on one side}

\begin{lemma}\label{lem:no-one-sided-extreme}
In an optimal boundary-only Steiner triangulation, there is no Steiner point on an upper extreme edge all of whose neighbors lie on the upper hull.
\end{lemma}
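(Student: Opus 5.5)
The plan is to argue by contradiction, using the same contraction-and-rerouting strategy as in Lemma~\ref{lem:no-short-edge}. Suppose an optimal boundary-only Steiner triangulation has a Steiner point \(p\) on the upper extreme edge \(b e\), where \(b=(8179/5,8193)\) and \(e=(16000,0)\), and every neighbor of \(p\) lies on the upper hull. The upper hull is the chain from \(v_0\) through \(b\) to \(e\), together with any Steiner points on it.

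Let \(a\) and \(c\) be the two boundary neighbors of \(p\) along the extreme edge, with \(a\) on the side of \(b\) and \(c\) on the side of \(e\). Every other neighbor \(x\) of \(p\) is then on a non-extreme upper hull edge. The key geometric fact is that for all such \(x\), the angle \(\angle x p e\) is obtuse, or at least bounded so that contracting \(p\) toward one of \(a\) or \(c\) does not lengthen the edges \(px\).

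\textbf{Step 1.} The first move is to delete \(p\) and merge it into whichever endpoint, \(a\) or \(c\), is more favorable. Removing \(p\) from the boundary saves nothing directly, since hull edges cost \(0\) in this accounting. However, each diagonal \(px\) is replaced by \(ax\) or \(cx\), and the fan structure around \(p\) must then be retriangulated.

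\textbf{Step 2.} The concrete choice is to contract \(p\) to \(c\), the neighbor nearer \(e\), and to compare \(d(c,x)\) with \(d(p,x)\). Since all \(x\) lie above the upper extreme line on the polygon's side, moving along \(L\) toward \(e\) changes \(d(\cdot,x)\) by the projection of the motion onto the direction of \(x\). Observation~\ref{obs:dist-to-extreme-line}(i) controls this, because the foot of the perpendicular from \(x\) to \(L\) lies near \(b\)'s end. If instead moving toward \(b\) is the monotone direction, I would contract to \(a\) and use the appendix argument of Appendix~\ref{app:replace} in the form \(d(a,x)\le d(p,x)\).

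The cleanest form of the argument is this. The function \(t\mapsto d(p(t),x)\) is convex along the segment. Therefore \(\sum_x d(p,x)\) is a convex function of the position of \(p\) on the segment \([a,c]\), and it attains its maximum, not its minimum, in the interior only if it is constant. Hence one of the endpoints \(a\) or \(c\) gives total length \(\le\) the current value. Contracting \(p\) into that endpoint keeps the triangulation valid: all neighbors of \(p\) are on one side, so the star of \(p\) is a fan whose rays remain non-crossing when the apex slides along the boundary. The edge \(pa\) or \(pc\) is a hull piece and costs nothing, and duplicate edges created by the contraction are simply removed, which only decreases cost. To get strict improvement, I use the fact that there is at least one diagonal, since \(p\) has degree \(\ge 3\) in a triangulation. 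Either that diagonal collapses onto an existing edge, or strict convexity of the distance function holds, because \(x\) is not collinear with the extreme edge.

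\textbf{Main obstacle.} The hard part is validity of the contraction, namely that sliding \(p\) to an endpoint does not make edges cross triangles outside the star of \(p\). The fan rays from \(p\) to neighbors \(x_1,\dots,x_k\) are ordered around \(p\). Because all \(x_i\) lie on the upper hull, which is convex and on one side of line \(L\), the star of \(p\) is a star-shaped polygon with kernel containing the segment \([a,c]\). Moving the apex within the kernel preserves a valid triangulation. I would verify this kernel claim carefully, using convexity of \(P\) and the fact that the star's boundary consists of the links \(x_ix_{i+1}\) together with the two boundary pieces.
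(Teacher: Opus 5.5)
\paragraph{There is a genuine gap: the endpoint-contraction step fails.}
The step where you choose the endpoint does not hold. Convexity of $f(t)=\sum_x d(p(t),x)$ gives $f(p)\le\max\{f(a),f(c)\}$. You need the opposite, $f(p)\ge\min\{f(a),f(c)\}$, and that would require concavity or monotonicity. A convex function can have its minimum strictly inside $[a,c]$, and in this polygon it does.

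The feet of the perpendiculars from $v_0,\dots,v_{12}$ to the line $L$ all lie on the extreme edge itself, with $x$-coordinates between about $2780$ (for $v_{12}$) and $3928$ (for $v_0$). Here is a concrete configuration satisfying the hypothesis:
\begin{itemize}
\item Take $p\approx(3927.5,\,6885.9)$, the foot from $v_0$.
\item Let its star be $b=v_{13},v_{12},v_{11},\dots,v_1,e$ in angular order.
\item All pairs $v_{j+1}v_j$ are hull edges, and $v_1e$ is a diagonal.
\end{itemize}
Contracting $p$ into $b$ removes $pv_{12}$, since it duplicates the hull edge $bv_{12}$, which gains about $4178$. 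But each of $v_{11},\dots,v_1$ gets farther away, by amounts ranging from about $283$ (for $v_{11}$) to about $427$ (for $v_1$). The total loss is about $4437$, so the cost rises by roughly $260$. Contracting into $e$ is far worse. You gain $d(p,v_1)\approx 7924$, but eleven edges of length below $8000$ become edges of length above $15500$ by Observation~\ref{obs:extreme-distance}. So neither endpoint contraction improves this configuration, even though it is very far from optimal. Your duplicate-edge and strict-convexity remarks cannot rescue the step; the inequality you need is simply false here.

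\paragraph{What is missing: a global retriangulation with quantitative bounds.}
The obstacle you flagged is not the real one. After deleting $p$, the hole $a,x_1,\dots,x_k,c$ has all its vertices on $\partial P$ in cyclic order, so it is convex and any fan triangulates it. The whole difficulty lies in the cost comparison, and a local slide of $p$ along the extreme edge is the wrong move.

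The missing idea is a global retriangulation of the hole, as the paper does:
\begin{itemize}
\item Choose $p$ highest, so its upper boundary neighbor is $b$.
\item Delete $p$ and fan the hole from the lowest non-boundary neighbor $w$.
\item Bound each old edge below by $d(x,p)\ge d(x,L)\ge 7924.6-0.968\,x_y$, using Observation~\ref{obs:dist-to-extreme-line}. This is at least about $3958$, because Lemma~\ref{lem:no-short-edge} forces every such neighbor to have $x_y\le 4097$.
\item Bound each new edge above by $1.03$ times a vertical difference, using Observation~\ref{obs:y-difference}.
\end{itemize}
Because the chain is so steep, the fan from $w$ is dramatically cheaper than the star of $p$. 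In the example above, the star costs about $87000$, while the fan from $v_1$ costs about $16700$. The paper finishes with a short case analysis on $w_y$ relative to $2049$, plus discarding redundant Steiner points on a single hull edge.
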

\begin{proof}
Assume for contradiction that such a point \(p\) exists, and among all such points choose \(p\) to be the highest one on the extreme edges. Let \(S\) be the set of neighbors of \(p\), and let \(w\in S\setminus\{u,v\}\) be the neighbor of smallest \(y\)-coordinate.

We remove \(p\) and reroute all its incident edges through \(w\).
Intuitively, this operation shortens all connections: every edge from \(p\) to a vertex on the upper hull is long because \(p\) lies near the extreme edge, while rerouting through the lower point \(w\) produces significantly shorter connections.

The key point is that the total length of edges incident to \(p\) is large (each such edge has length bounded below using Observation~\ref{obs:dist-to-extreme-line}), while the cost of reconnecting through \(w\) is controlled by vertical distances and is therefore much smaller (using Observation~\ref{obs:y-difference}). 

A case analysis based on the position of \(w\) shows that the rerouted triangulation is strictly cheaper in all possible configurations. The full inequalities are given in Appendix~\ref{app:one-sided-extreme}.

This contradicts optimality.
\end{proof}

As an immediate consequence, there can be at most one Steiner point on the two extreme edges in total.

\subsubsection*{No Steiner point on an extreme edge at all}

\begin{lemma}\label{lem:no-extreme-edge}
In an optimal boundary-only Steiner triangulation, there is no Steiner point on either extreme edge.
\end{lemma}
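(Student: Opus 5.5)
By the remark following Lemma~\ref{lem:no-one-sided-extreme}, the two extreme edges carry at most one Steiner point in total. I will argue by contradiction and assume this point exists; by symmetry I call it \(p\) and place it on the upper extreme edge. Lemma~\ref{lem:no-one-sided-extreme}, applied to the upper edge and also to the symmetric lower edge, tells me that \(p\) has a neighbor in the open lower half of the polygon. Otherwise all of its neighbors would lie on the upper hull, together with possibly \(e\) or \(v_0\).

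\textbf{Step 1: the structure near \(p\).} No other Steiner point lies on either extreme edge, so on the upper extreme edge the neighbors of \(p\) are exactly \(b=(8179/5,8193)\) and \(e=(16000,0)\), its two boundary neighbors. By Lemma~\ref{lem:no-short-edge}, the edge \(v_{12}\)--\(b\) has no Steiner points, so the chain near \(b\) consists of original vertices. The edges from \(p\) to its other neighbors then form a fan \(x_1,\dots,x_k\), ordered counterclockwise from \(b\) to \(e\). This fan has at least one vertex \(x_j\) with \(y<0\) and at least one vertex on the upper hull. The triangle \(p\,x_k\,e\) closes the fan.

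\textbf{Step 2: the replacement.} I will delete \(p\). The edges \(bp\) and \(pe\) merge back into the original hull edge \(be\), and \(b\) and \(e\) are already joined by that edge. This leaves a polygon with vertices \(b, x_1, \dots, x_k, e\) to re-triangulate, which I will do as a fan from \(e\), drawing the edges \(ex_1,\dots,ex_{k}\). The fan is valid because \(e\) sees every point of the convex hull. So the total change in cost is \(\sum_i d(e,x_i)-\sum_i d(p,x_i)\), and I must show that it is negative. An alternative worth keeping in mind is to reconnect the \(x_i\) to \(b\) instead, or to reconnect the lower neighbors to \(e\) and the upper neighbors to \(b\). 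The choice should be the split that makes the inequality easiest.

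\textbf{Step 3: the inequality.} I will bound each old edge from below by the distance to the extreme line: \(d(p,x)\ge d(x,L)\). By Observation~\ref{obs:dist-to-extreme-line}(ii), a lower neighbor satisfies \(d(p,x)\ge 7924.6+0.769|x_y|\), and Observation~\ref{obs:extreme-distance} gives \(d(e,x)<17000\). These bounds alone are too weak, so I will use the mixed split. Upper neighbors \(x\) are reconnected to \(b\), at cost at most \(1.03(8193-x_y)\) by Observation~\ref{obs:y-difference}. Against this, Observation~\ref{obs:dist-to-extreme-line}(i) gives \(d(p,x)\ge 7924.6-0.968x_y\), so each such edge changes by about \(+514\) at worst. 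Lower neighbors are reconnected to \(e\). To gain here I will instead use the exact position of \(p\). Let \(t=d(p,e)\). Then \(d(p,x)\ge d(e,x)-t\) is the wrong direction, so the real gain must come from the fact that \(p\) also loses the two boundary edges. Recall that the hull edges cost \(0\) in the DP cost \(c\), whereas the diagonals \(pb\) and \(pe\) do not. I therefore expect the savings to come from \(p\) being far from the lower chain while \(e\) is only slightly farther. I would then try to prove \(d(e,x)\le d(p,x)\) for lower \(x\) directly, by convexity along the segment \(be\). The function \(s\mapsto d(\text{point at }s,x)\) is convex, and at \(b\) it is compared with Observation~\ref{obs:y-difference}.

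\textbf{Main obstacle.} The hard step is to make the total change strictly negative when \(p\) has both upper and lower neighbors and sits at an arbitrary point of the edge. The losses in the upper group (about \(514\) per edge) must be absorbed. I expect to handle this by observing that \(p\) has at most one upper neighbor besides \(b\) that is not redundant, via the same reduction used in Lemma~\ref{lem:no-short-edge}. I would then split into cases according to whether \(p\) lies closer to \(b\) or to \(e\), choosing the hub of the replacement fan to be the nearer endpoint in each case.
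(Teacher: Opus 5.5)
Your proposal has a genuine gap: none of the replacements you allow can work. These are a fan from $e$, a fan from $b$, or sending upper neighbors to $b$ and lower ones to $e$. The inequality $d(e,x)\le d(p,x)$ you hope to prove in Step 3 is false.

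For every point $x$ on a non-extreme hull edge, $d(e,x)>d(b,x)$:
\begin{itemize}
\item if $x_y\ge 0$, then $d(b,x)\le 1.03\cdot 8193<15500<d(e,x)$;
\item if $x_y<0$, then using $0\le x_x\le |x_y|/5$ and $|x_y|\le 8193$ one gets $d(e,x)^2-d(b,x)^2\ge 14364.2\cdot 17635.8-8193^2-22131.68\,|x_y|>0$.
\end{itemize}
Since $d(\cdot,x)$ is convex along $be$, this gives $d(p,x)<d(e,x)$ for every $p$ in the open edge. So every edge you move to $e$ gets strictly longer.

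Several other steps you rely on also fail.
\begin{itemize}
\item There is no compensating saving from $pb$ and $pe$. They are the two pieces of the hull edge $be$, whose length is the same before and after $p$ is deleted.
\item Moving edges to $b$ loses on essentially every neighbor. A hub at $b$ or $e$ saves only one of the $k$ old edges, since the hole $b,x_1,\dots,x_k,e$ needs $k-1$ diagonals.
\item The redundancy argument of Lemma~\ref{lem:no-short-edge} only discards collinear neighbors on the extreme edge itself. It does not limit how many distinct left-chain vertices $p$ is adjacent to.
\end{itemize}

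A concrete counterexample: let $p\approx(3927.5,6885.9)$ be the foot of the perpendicular from $v_0$ to $L$, so $p$ is much closer to $b$ than to $e$. Join $p$ to all $26$ left-chain vertices other than $b$; the edges at $p$ then total about $212500$. The $b$-fan lengthens the other $25$ edges by roughly $12000$ in total, while saving only $d(p,v_{12})\approx 4200$. The $e$-fan and the mixed split are far worse: already $d(e,v_{27})-d(p,v_{27})\approx 8000$. So your case split by the position of $p$ cannot close the argument.

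The missing idea is the paper's choice of hub. Delete $p$ and reroute all its neighbors through a left-chain neighbor of $p$ near the $x$-axis: either $w_1$, the lowest neighbor with $y\ge 0$, or $w_2$, the highest neighbor with $y<0$.
\begin{itemize}
\item This creates at most one long edge, $w_ie$. When that edge is new, $|w_{i,y}|\le 4097$, so its length is at most $16000$ by Observation~\ref{obs:extreme-distance}.
\item Every other new edge costs at most $1.03|\Delta y|$ by Observation~\ref{obs:y-difference}.
\item Each old edge $px$ has length at least $7924.6-0.968\,x_y$ or $7924.6+0.769\,|x_y|$ by Observation~\ref{obs:dist-to-extreme-line}.
\end{itemize}
The paper then reduces to at most two neighbors per hull edge, extremal placements, and $|S_1|\ge|S_2|$. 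A finite case check gives $\min\{\mathrm{New}(w_1),\mathrm{New}(w_2)\}<\sum_{x\in S}d(x,p)$. In the configuration above, rerouting through $w_1=v_0$ simply yields the Steiner-free fan, of cost about $49426$, against about $212500$.
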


\begin{proof}
Without loss of generality, assume for contradiction that a Steiner point \(p\) lies on the upper extreme edge. By Lemma~\ref{lem:no-one-sided-extreme}, the point \(p\) must have neighbors on both sides of the \(x\)-axis.

Let \(S\) be the set of neighbors of \(p\) excluding the extreme point \((16000,0)\), and partition
\[
\begin{aligned}
S=&S_1\dot\cup S_2,
\qquad\\
S_1=\{s\in S:s_y\geq 0\},
&\qquad
S_2=\{s\in S:s_y<0\}.
\end{aligned}
\]
Let \(u\in S\) be the highest point, let \(w_1\in S_1\) be the lowest nonnegative point, and let \(w_2\in S_2\) be the highest negative point.

\medskip
\noindent
\emph{Old cost.}
By Observation~\ref{obs:dist-to-extreme-line},
\[
\begin{aligned}
\sum_{x\in S} d(x,p)\ge&
\sum_{(x,y)\in S_1\setminus\{u\}} (7924.6-0.968y)\\
&+
\sum_{(x,y)\in S_2} (7924.6+0.769|y|).
\end{aligned}
\]

\medskip
\noindent
\emph{Rerouting through \(w_1\) or \(w_2\).}
For \(i\in\{1,2\}\), let \(N[w_i]\) denote \(w_i\) together with its two neighbors in \(S\). If we delete \(p\) and reroute through \(w_i\), the new cost is at most
\[
16000+\sum_{x\in S\setminus N[w_i]} d(x,w_i).
\]
Using Observation~\ref{obs:y-difference}, we obtain the bounds
\begin{align}
\mathrm{New}(w_1)
&\le
C_1
+1.03\!\!\sum_{x\in S\setminus N[w_1]} |x_y|
\notag\\
&\qquad
-1.03\,|w_{1,y}|\,(|S_1|-1-|S_2|), \label{eq:new-w1-main}\\
\mathrm{New}(w_2)
&\le
C_2
+1.03\!\!\sum_{x\in S\setminus N[w_2]} |x_y|
\notag\\
&\qquad
-1.03\,|w_{2,y}|\,(|S_2|-1-|S_1|), \label{eq:new-w2-main}
\end{align}
where
\[
C_1=
\begin{cases}
0, & \text{if } |S_1|=1,\\
16000, & \text{otherwise,}
\end{cases}
\qquad
C_2=
\begin{cases}
0, & \text{if } |S_2|=1,\\
16000, & \text{otherwise.}
\end{cases}
\]

The term \(C_i\) accounts for the possible new edge from \(w_i\) to the
extreme vertex \(e=(16000,0)\). If this edge is already present in the
triangulation, then \(C_i=0\). Otherwise \(w_i\) is not the topmost (or
bottommost) hull vertex, so
\[
|w_{i,y}|\le 4097.
\]
Hence by Observation~\ref{obs:extreme-distance},
\[
d(e,w_i)\le 16000,
\]
and charging \(16000\) is a valid upper bound.

\medskip
We now apply a sequence of reductions:
\begin{itemize}
    \item at most two points of \(S\) lie on each original hull edge;
    \item once \(w_1,w_2\) and \(|S_1|,|S_2|\) are fixed, all other points are chosen as high as possible;
    \item no additional point of \(S\) with \(|y|\le 2049\) is useful;
    \item by symmetry, we may assume \(|S_1|\ge |S_2|\);
    \item the possible locations of \(w_1,w_2\) are restricted according to the sign of \(|S_1|-|S_2|-1\).
\end{itemize}
These reductions are proved in Appendix~\ref{app:no-extreme-edge-reductions}.

After these reductions, only finitely many extremal or threshold configurations remain. The complete case analysis is given in Appendix~\ref{app:extreme-edge-cases}. In every case one verifies that
\[
\min\{\mathrm{New}(w_1),\mathrm{New}(w_2)\}
<
\sum_{x\in S}d(x,p).
\]
This contradiction shows that no Steiner point can lie on the upper extreme edge. The lower extreme edge is handled symmetrically.
\end{proof}

\subsubsection*{No Steiner point on any other hull edge}

We now turn to non-extreme hull edges.

\begin{lemma}\label{lem:one-point-per-edge}
On any non-extreme hull edge, there can be at most one boundary Steiner point.
\end{lemma}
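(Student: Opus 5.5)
We argue by contradiction, exchange-style, as in Lemma~\ref{lem:no-short-edge}. Suppose some non-extreme hull edge $v_iv_{i+1}$ carries two or more Steiner points in an optimal boundary-only triangulation $T$. By Lemma~\ref{lem:no-extreme-edge}, no Steiner points lie on the extreme edges. Order the points on $v_iv_{i+1}$ as $v_i=p_0,p_1,\dots,p_k,p_{k+1}=v_{i+1}$ with $k\ge 2$. Focus on two consecutive Steiner points $p=p_j$ and $p'=p_{j+1}$, where $p$ is the one nearer to $v_i$. The goal is to delete one of them and obtain a triangulation that is no more expensive; applying this repeatedly (or taking a cheapest triangulation with the fewest Steiner points) then yields the statement.

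\textbf{Step 1 (local structure).} Every triangle with a side on the boundary segment $pp'$ has a third vertex $x$, which is a common neighbor of $p$ and $p'$. Because the polygon is convex and every vertex lies on the boundary, all other neighbors of $p$ lie on the side of $x$ towards $p_{j-1}$, and all other neighbors of $p'$ lie on the side towards $p_{j+2}$. So the neighbors of $p$ are $p_{j-1}$, $p'$, and a contiguous chain of boundary points $y_1,\dots,y_r=x$.

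\textbf{Step 2 (rerouting).} Delete $p$ and reconnect each $y_\ell$ to $p'$. The new triangles $p'y_\ell y_{\ell+1}$ and $p'\,p_{j-1}\,y_1$ are valid because all points lie on a convex boundary and $p,p'$ are collinear on the hull edge. The edge $pp'$ vanishes, because $p_{j-1}p'$ becomes a boundary piece. For the cost comparison we proceed as in Appendix~\ref{app:replace}. For a boundary point $y$ separated from the edge by the line through $x$, either $d(p',y)\le d(p,y)$, or we instead delete $p'$ and reroute its neighbors to $p$. Choose the direction using the foot of the perpendicular from $y$ to the line $v_iv_{i+1}$, and use the steepness of the chain: Observation~\ref{obs:y-difference} lets us bound distances by $y$-differences. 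The saving from the removed edge $d(p,x)$, or $d(p',x)$, strictly exceeds any excess, by the triangle inequality $d(p',y)\le d(p,y)+d(p,p')$ together with the removal of the shared edge to $x$.

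\textbf{Main obstacle.} The difficult part is the case where $p$ has neighbors on both sides of the perpendicular direction, so that some edges become longer under either rerouting. My first attempt would be to use convexity. All neighbors lie in the half-plane beyond the edge, and the neighbors of $p$ and those of $p'$ are separated by $x$. Hence the nearer neighbor rule (project onto the edge) monotonically favors one endpoint for an entire side. If this fails, I would fall back on the argument of Lemma~\ref{lem:no-short-edge}. Shift $p$ toward $p_{j-1}$, or $p'$ toward $p_{j+2}$. The total length is a convex function of the Steiner point's position along the edge, since it is a sum of Euclidean distances. Its minimum is therefore attained at an endpoint or at an interior critical point. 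Two consecutive Steiner points give two such functions sharing the common neighbor $x$, and merging them is no worse, by convexity of $t\mapsto\sum d(p(t),y)$.
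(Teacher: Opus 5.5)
\textbf{There is a genuine gap.} The cost comparison that is supposed to make the deletion profitable is never established, and the generic reasoning you offer for it cannot work. Deleting $p$ and fanning its link from $p'$ changes the length by
\[
\sum_{\ell=1}^{r-1}\bigl(d(p',y_\ell)-d(p,y_\ell)\bigr)-d(p,x).
\]
The triangle inequality bounds each summand only by $d(p,p')$, so you would need $d(p,x)>(r-1)\,d(p,p')$, and nothing provides this.

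Choosing the direction ``per neighbor $y$'' is not available: one point must be deleted for all of its neighbors at once. That is exactly the ``main obstacle'' you leave open. The convexity fallback does not close it either. Convexity of $t\mapsto\sum_y d(p(t),y)$ allows an interior minimizer, and then sliding $p$ onto $p'$ strictly increases the cost.

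In fact the inequality you need fails in general. Let a long edge face a nearly parallel chain of vertices spaced $a$ apart at distance $h\ll a$. Place Steiner points on the long edge halfway between the feet of consecutive chain vertices, joined to them in a zigzag. Then your Step~2, in either direction, removes two edges of length about $a/2$ and adds one of length about $3a/2$. So convexity and the triangle inequality alone cannot prove the lemma. The proof has to exploit the particular shape of $P$, which your proposal only gestures at (``use the steepness of the chain'').

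The paper's proof supplies three ideas you are missing.
\begin{enumerate}
\item \emph{A planarity dichotomy.} Take $p$ nearer $u$ and $q$ nearer $v$. An edge from $p$ to a point below $v$ confines every neighbor of $q$ other than $p$ to the boundary below $v$, and symmetrically. Hence either one of $p,q$ is not adjacent to the extreme vertex $e=(16000,0)$ and has all its off-edge neighbors on one side, or both are adjacent to $e$.
\item \emph{Rerouting to the hull endpoint.} A one-sided point is deleted and rerouted to the \emph{hull endpoint} on that side ($u$ or $v$), not to the neighboring Steiner point. Every rerouted edge then strictly shortens, and this is where the steep chain is used (cf.\ the obtuse-angle bound of Lemma~\ref{lem:upper-reroute-gain}).
\item \emph{The far extreme vertex.} The remaining case, with $e$ adjacent to both $p$ and $q$, is settled not by a local rerouting inequality but by the distance to $e$. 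Deleting both points and adding two diagonals costs at most $1.03\cdot 3\cdot 8193<31000$ by Observation~\ref{obs:y-difference}. The two removed edges $ep,eq$ each exceed $15500$ by Observation~\ref{obs:extreme-distance}.
\end{enumerate}
Your proposal never isolates the role of $e$, and without it the two-sided configuration is left open.
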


\begin{proof}
Assume for contradiction that two Steiner points \(p,q\) lie on the same non-extreme hull edge \((u,v)\), with \(p\) between \(u\) and \(q\).

If one of \(p,q\) has all non-edge neighbors on one side of the hull edge and is not adjacent to the extreme vertex \(e=(16000,0)\), then that point can be deleted and its incident edges can be rerouted to the corresponding endpoint of \((u,v)\), strictly decreasing the cost.

Thus, by planarity, the only remaining case is that both \(p\) and \(q\) are adjacent to \(e\). In that case we delete both \(p,q\), reconnect the remaining neighbors to \(u\) and \(v\), and triangulate the remaining region. All rerouted edges become shorter, so it remains only to compare the two old edges \((e,p),(e,q)\) with the new diagonals.

The needed comparison is proved in Appendix~\ref{app:two-point-edge}:
\[
d(e,p)+d(e,q)
>
d(w_1,w_2)+\min\{d(w_1,v),d(w_2,u)\}.
\]
Hence the new triangulation is strictly cheaper, contradicting optimality.
\end{proof}

\begin{lemma}\label{lem:upper-reroute-gain}
Let \(u,v\) be a non-extreme hull edge, with \(u\) above \(v\), and let \(p\in uv\).
For any point \(x\) on the left side of the hull with \(x_y>u_y\),
\[
d(x,p)-d(x,u)\ge \frac{12}{13}\,d(u,p).
\]
\end{lemma}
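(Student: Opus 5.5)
We reduce the claim to an angle bound. Since $p$ lies on segment $uv$, the three points $u,p,x$ form a triangle (or are collinear), and
\[
d(x,p)-d(x,u)=\|x-p\|-\|x-u\|.
\]
Write $\theta$ for the angle at $u$ between $\overrightarrow{up}$ (pointing down along the edge toward $v$) and $\overrightarrow{ux}$. By the law of cosines, $d(x,p)^2=d(x,u)^2+d(u,p)^2-2\,d(x,u)\,d(u,p)\cos\theta$. Moreover, $f(t)=\|x-(u+t\hat e)\|$ is convex in $t$ with derivative $-\cos\theta$ at $t=0$, where $\hat e$ is the unit vector from $u$ to $v$. By convexity, $d(x,p)-d(x,u)\ge -\cos\theta\cdot d(u,p)$. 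It therefore suffices to show $\cos\theta\le -12/13$, that is, the angle between the downward edge direction and the direction from $u$ to $x$ is at least $\pi-\arccos(12/13)$.

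Next we bound the directions. Because $x_y>u_y$ and $x$ lies on the left side of the hull (a non-extreme hull edge above $u$, by convexity), the vector $x-u$ has positive $y$-component. Its slope $(x_y-u_y)/(x_x-u_x)$ is at least the slope of the steepest-possible chord from $u$ upward along the chain, which by convexity of the upper left chain is at least the slope of the chord from $u$ to the top vertex $(8179/5,8193)$. Similarly, the edge direction $\hat e$ is downward with slope equal to that of edge $uv$. So $-\hat e$ and $x-u$ both point up-and-to-the-right with large slopes. From the coordinates, each left-chain edge from $v_{i-1}$ to $v_i$ has $\Delta y=2^{i-1}$ and $\Delta x=(2^{i-1}-1)/5$, so every slope is at least $5$; the edge $v_0v_1$ has slope $3/0.2=15$. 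Hence both directions make an angle of at most $\arctan(1/5)$ with the vertical. The angle $\theta$ is then at least $\pi-2\arctan(1/5)$, and $\cos\theta\le-\cos(2\arctan(1/5))=-(1-1/25)/(1+1/25)=-12/13$. This is exactly the constant in the statement, which confirms the approach.

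The main obstacle is justifying that the direction from $u$ to any higher hull point $x$ has slope at least $5$, including when $x$ lies on the same edge as $p$ or on an adjacent edge, and handling the case where $u$ is $v_0$ or the lower chain is involved. By symmetry we treat the upper chain; when $u=v_0$ with edge $v_0v_{27}$, the directions are mirrored and the same bound holds. Convexity gives that the chord slope from $u$ to any later point is at most the slope of the first edge out of $u$ and at least the slope of the last edge, which is at least $5$, so all chords have slope at least $5$. For the downward edge, the slope is also at least $5$ (and $x$ must be above $u$ as well). One must also confirm that $x$ is restricted to non-extreme edges. If $x$ could lie on the extreme edge, its slope would be tiny and the bound would fail, so we read ``left side of the hull'' as the non-extreme chain. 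With these checks, the slope bound $\ge 5$ for both directions combines with the convexity inequality to give the claim.
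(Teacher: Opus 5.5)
Your proposal is correct and follows essentially the paper's route. You show $\cos\angle xup\le -\tfrac{12}{13}$ from the fact that chords between left-side hull points satisfy $|\Delta x|\le|\Delta y|/5$; your value $\cos(2\arctan\tfrac15)=\tfrac{12}{13}$ is the paper's dot-product/norm computation in angular form. You then get $d(x,p)\ge d(x,u)-d(u,p)\cos\theta$ from convexity of $t\mapsto\|x-(u+t\hat e)\|$, where the paper uses the identity $d(x,p)^2-(a_0-b_0\cos\theta)^2=b_0^2\sin^2\theta\ge 0$.

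One small loose end: your ``by symmetry'' remark does not literally cover $u$ on the lower chain with $x$ on the upper chain. That case still follows at once, because each edge vector traversed upward lies in the convex cone $\{(a,b): |a|\le b/5\}$, hence so does their sum $x-u$.
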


\begin{proof}
The claim follows from a geometric bound on the angle \(\angle xup\).
Roughly speaking, vectors \(\overrightarrow{ux}\) and \(\overrightarrow{up}\)
form an obtuse angle bounded away from \(90^\circ\), which implies that
moving from \(p\) to \(u\) significantly shortens the distance to \(x\).

A complete proof, including the angle bound and the use of the law of
cosines, is given in Appendix~\ref{app:reroute-gain}.
\end{proof}

\begin{lemma}\label{lem:balanced-neighbors-no-extreme}
Assume a non-extreme hull edge \(uv\) contains exactly one Steiner point \(p\), and \(p\) is not adjacent to \(e\). Let \(S_1\) and \(S_2\) be the sets of neighbors above and below. Then
\[
|S_1|=|S_2|.
\]
\end{lemma}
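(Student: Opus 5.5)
We argue by contradiction, moving \(p\) to one endpoint of its edge. The neighbors of \(p\) are \(u\), \(v\) and the sets \(S_1\) (neighbors with \(y\)-coordinate above \(p\)) and \(S_2\) (neighbors below \(p\)), excluding \(u,v\). Since \(p\) is not adjacent to \(e\), every neighbor of \(p\) other than \(u,v\) lies on the left part of the boundary. For points above \(u\) this means the left upper hull; for points below \(v\), the left hull further down. By Lemma~\ref{lem:no-extreme-edge} no Steiner point lies on an extreme edge, and since \(p\) is not adjacent to \(e\), no neighbor of \(p\) is on an extreme edge either. Suppose, by symmetry, that \(|S_1|>|S_2|\). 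We delete \(p\) and reconnect every neighbor to \(u\), keeping the edge \(uv\). The edges from \(p\) to \(S_2\) become edges from \(u\) to \(S_2\), and one extra diagonal may be needed to close the region near \(v\) (for instance from \(u\) to the lowest point of \(S_2\), or from \(v\) to the topmost point of \(S_2\)). This is the same pattern as the rerouting moves in Lemma~\ref{lem:one-point-per-edge}.

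The cost change is bounded in three parts. \emph{Gain from \(S_1\):} each \(x\in S_1\) has \(x_y>u_y\) and lies on the left hull, so by Lemma~\ref{lem:upper-reroute-gain}, \(d(x,p)-d(x,u)\ge \tfrac{12}{13}d(u,p)\). Summed over \(S_1\), the gain is at least \(\tfrac{12}{13}|S_1|\,d(u,p)\). \emph{Loss from \(S_2\):} by the triangle inequality, \(d(x,u)\le d(x,p)+d(p,u)\), so the loss is at most \(|S_2|\,d(u,p)\). \emph{Extra diagonal:} the closing edge should be bounded by \(d(u,p)\) or less. The natural replacement is to drop the edge \(pv\) and use a segment along the boundary, which already exists, so no charge is needed. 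If a genuinely new diagonal is required, compare it with the removed edge \(p\,s\), where \(s\) is the lowest point of \(S_2\), using the triangle inequality. The net change is then at most \((|S_2|-\tfrac{12}{13}|S_1|)\,d(u,p)\) plus any leftover diagonal term. With \(|S_1|\ge|S_2|+1\), this is negative as long as \(|S_2|<12\). The edge count is finite and small (each side has at most 13 hull vertices, plus at most one Steiner point per edge by Lemma~\ref{lem:one-point-per-edge}), and the margin is needed only when \(|S_1|\) exceeds \(|S_2|\).

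The symmetric case \(|S_2|>|S_1|\) is handled the same way by rerouting to \(v\). This requires the analogue of Lemma~\ref{lem:upper-reroute-gain} for points below \(v\), which follows from the same angle argument because the hull is symmetric and convex.

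\textbf{Main obstacle.} The gain constant \(\tfrac{12}{13}\) is below \(1\), so the crude bound \(\tfrac{12}{13}|S_1|>|S_2|\) fails once \(|S_2|\) is large. I expect this to be the hardest step. I would handle it in two ways. First, sharpen the loss side: a point \(x\in S_2\) lies below \(v\), and moving its endpoint from \(p\) to \(u\) lengthens the edge by at most \(d(u,p)\cos\theta\), where \(\theta\) is controlled by the same angle estimate. Second, bound how many neighbors \(p\) can have on each side via planarity and the exponential spacing of the vertices. If neither suffices, I would fall back on checking the finitely many configurations numerically, as the paper does for the extreme-edge cases in Appendix~\ref{app:extreme-edge-cases}.
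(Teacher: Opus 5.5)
Your rerouting move and per-vertex estimates match the paper's: a gain of at least $\frac{12}{13}d(u,p)$ for each $x\in S_1$ by Lemma~\ref{lem:upper-reroute-gain}, and a loss of at most $d(u,p)$ for each $x\in S_2$ by the triangle inequality. But your argument stops exactly where the lemma needs its remaining ingredient, an upper bound on $|S_2|$. As you note, $(|S_2|+1)\frac{12}{13}>|S_2|$ holds only when $|S_2|<12$, and nothing you prove rules out larger $|S_2|$.

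None of your three proposed fixes closes this gap:
\begin{itemize}
\item Your count (hull vertices per side plus one Steiner point per edge) allows $|S_2|$ to be roughly twice the number of available hull vertices, so it does not force $|S_2|\le 12$.
\item Sharpening the loss side cannot work. For $x$ below $v$, the same slope bound gives $\cos\angle xpu\le-\frac{12}{13}$, hence $d(x,u)-d(x,p)\ge\frac{12}{13}d(u,p)$. So gains and losses are both known only to lie between $\frac{12}{13}d(u,p)$ and $d(u,p)$. The angle estimate alone therefore cannot overcome a one-vertex imbalance once $|S_2|\ge 12$.
\item A finite numerical check is not available, because the Steiner positions vary continuously.
\end{itemize}

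The missing idea is the counting bound of Claim~\ref{clm:min-side-size}, namely $|S_1|+|S_2|\le 26$. Excluding $u,v,e$, there are only $25$ original vertices. A Steiner neighbor $q$ of $p$ on a hull edge $xy$ cannot have both $x$ and $y$ adjacent to $p$. If it did, then with one Steiner point per edge, $x,q,y$ would be consecutive around $p$, so $q$ would have degree three and deleting it would save $d(p,q)$. Hence Steiner neighbors essentially only replace original ones, and $|S_1|>|S_2|$ forces $|S_2|\le 12$.

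You still need strictness in the tight case $|S_2|=12$, $|S_1|=13$, where the crude bounds give a net change of exactly $0$. Write $w_1,\dots,w_k$ for the upper neighbors from $u$ outward and $z_1,\dots,z_\ell$ for the lower ones from $v$ outward. Strictness follows from the strict inequality in Lemma~\ref{lem:upper-reroute-gain}, since equality would need $x,u,p$ collinear. More simply, it follows because $uw_1$ is already an edge, so $pw_1$ disappears with no replacement, giving an extra gain $d(u,w_1)>0$.

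Relatedly, your worry about a closing diagonal near $v$ is unfounded. The link $u,w_1,\dots,w_k,z_\ell,\dots,z_1,v$ of $p$, together with $uv$, bounds a convex polygon. The fan from $u$ triangulates it using only the rerouted edges, and $up$ and $pv$ merge into $uv$ at no cost.
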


\begin{proof}
Suppose \(|S_1|>|S_2|\). Delete \(p\) and reconnect all its neighbors to \(u\).

Each vertex in \(S_1\) gives a gain of more than
\[
\frac{12}{13}d(u,p),
\]
by Lemma~\ref{lem:upper-reroute-gain}, while each vertex in \(S_2\) causes a loss of at most \(d(u,p)\). By Claim~\ref{clm:min-side-size}, together with \(|S_1|>|S_2|\), we have \(|S_2|\le 12\). Hence the total gain is strictly larger than the total loss. This gives a cheaper triangulation, a contradiction. The case \(|S_2|>|S_1|\) is symmetric.
\end{proof}

\begin{lemma}\label{lem:degree-lower-bound}
Every boundary Steiner point on a non-extreme hull edge has at least three neighbors besides its endpoints.
\end{lemma}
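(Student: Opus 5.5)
The plan is to argue by contradiction, using a local contraction argument in the spirit of the introduction's remark about low-degree Steiner points. Let \(p\) be a boundary Steiner point on a non-extreme hull edge \(uv\) in an optimal boundary-only Steiner triangulation, with \(u\) above \(v\). Its neighbors along the edge are \(u'\) and \(v'\): the adjacent vertices of the subdivided edge, which are \(u\) and \(v\) themselves by Lemma~\ref{lem:one-point-per-edge}. Every other neighbor lies strictly on the interior side of the line \(uv\). Since the link of \(p\) is a path from \(u\) to \(v\) through its interior neighbors \(x_1,\dots,x_k\), listed in angular order, we have \(k\ge 1\). So it suffices to rule out \(k=1\) and \(k=2\).

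\emph{Case \(k=1\).} Here \(p\) is incident to the two triangles \(u p x_1\) and \(p v x_1\). Deleting \(p\) together with its three edges leaves the single triangle \(u v x_1\). This is a valid triangulation whose hull edge \(uv\) costs nothing, while the edge \(px_1\), of positive length, disappears. The cost strictly decreases, a contradiction.

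\emph{Case \(k=2\).} The link is \(u,x_1,x_2,v\) and \(p\) has degree four. Delete \(p\) and retriangulate the quadrilateral \(u x_1 x_2 v\) by adding one diagonal, either \(ux_2\) or \(vx_1\). The quadrilateral is convex, or at least one of the two diagonals is valid, because it is a piece of a convex polygon cut by chords. The old edges were \(px_1\) and \(px_2\). The triangle inequality gives \(d(u,x_2)\le d(u,p)+d(p,x_2)\), which is not quite enough on its own, so I would use the finer comparison that picks the better endpoint. If \(p\) is not adjacent to \(e\), Lemma~\ref{lem:balanced-neighbors-no-extreme} already forces one neighbor on each side of \(p\)'s perpendicular split, namely \(x_1\in S_1\) and \(x_2\in S_2\). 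If \(x_1\) lies above \(u\), Lemma~\ref{lem:upper-reroute-gain} gives \(d(x_1,u)\le d(x_1,p)-\tfrac{12}{13}d(u,p)\). A symmetric statement, with a non-extreme neighbor below \(v\), bounds \(d(x_2,v)\). Then the replacement diagonal, say \(u x_2\), satisfies \(d(u,x_2)\le d(u,p)+d(p,x_2)<d(p,x_1)+d(p,x_2)\) as soon as \(d(u,p)<d(p,x_1)\). This holds because \(x_1\) is a non-adjacent hull point far from the segment \(uv\), which I would check using Observation~\ref{obs:y-difference} and the exponential spacing of the vertices. If instead \(p\) is adjacent to \(e\), say \(x_2=e\), I would use the diagonal \(v x_1\) or \(u e\) and compare via Observation~\ref{obs:extreme-distance}. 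This comparison is the same as the one in Appendix~\ref{app:two-point-edge} with a single point, since \(d(e,p)\) exceeds \(15500\) while the replacement saves the edge \(px_1\).

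The main obstacle is the \(k=2\) case. It requires a careful, possibly case-split, check that one of the two diagonals of the quadrilateral is shorter than \(d(p,x_1)+d(p,x_2)\), especially when one neighbor is the far extreme vertex \(e\). I would first try the choice of diagonal avoiding \(e\)'s side, and fall back to the explicit numeric bounds of Observations~\ref{obs:dist-to-extreme-line} and \ref{obs:extreme-distance}.
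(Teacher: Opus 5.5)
Your overall plan (delete $p$, add one diagonal of the quadrilateral $u x_1 x_2 v$, and compare using the triangle inequality) is the paper's plan, and your $k=1$ case is fine. The paper leaves that case implicit.

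The gap is in the one inequality that actually carries the $k=2$ case, namely $d(u,p)<d(p,x_1)$. You justify it by saying $x_1$ is ``far from the segment $uv$'' because of the exponential spacing. But nothing proved at this point prevents $x_1$ from being a boundary Steiner point on the hull edge just above $u$, arbitrarily close to $u$. Steiner points on consecutive edges are only excluded in Lemma~\ref{lem:no-consecutive}, whose proof uses the present lemma. So that argument fails.

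The bound you quote from Lemma~\ref{lem:upper-reroute-gain} does not close it either. It gives $d(x_1,p)\ge d(x_1,u)+\frac{12}{13}d(u,p)$, which exceeds $d(u,p)$ only when $d(x_1,u)>\frac{1}{13}d(u,p)$. The correct reason is the angle bound inside that lemma's proof: $\cos\angle x_1up\le -\frac{12}{13}<0$. By the law of cosines this gives $d(p,x_1)^2\ge d(p,u)^2+d(u,x_1)^2>d(p,u)^2$, no matter how close $x_1$ is to $u$.

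With that fix your argument is exactly the paper's, and the case split through Lemma~\ref{lem:balanced-neighbors-no-extreme} and Observation~\ref{obs:extreme-distance} is unnecessary. The angular order $u,x_1,x_2,v$ around $p$ agrees with the boundary order. By Lemma~\ref{lem:no-extreme-edge}, the only neighbor not on the left side can be $e$. Hence either $x_1$ is a left-side point above $u$, or $x_2$ is a left-side point below $v$; say the latter. Then $d(v,x_1)<d(v,p)+d(p,x_1)\le d(p,x_2)+d(p,x_1)$. So replacing $px_1$ and $px_2$ by the diagonal $vx_1$ is strictly cheaper, whether or not $x_1=e$. The $e$-adjacent case you flagged as the main obstacle is therefore no harder than the others.
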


\begin{proof}
Suppose \(p\) has only two neighbors \(w_1,w_2\).

One of them lies below \(v\), so \(d(p,w_2)\ge d(p,v)\). Also
\[
d(w_1,p)+d(p,v)>d(w_1,v).
\]
Thus replacing edges \(pw_1,pw_2\) by \(vw_1\) strictly reduces cost, a contradiction.
\end{proof}

\begin{lemma}\label{lem:no-consecutive}
There are no Steiner points on two consecutive non-extreme hull edges.
\end{lemma}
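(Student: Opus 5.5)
We argue by contradiction. Suppose $p$ is a Steiner point on the non-extreme hull edge $uv$ and $p'$ is a Steiner point on the adjacent non-extreme hull edge $vv'$, where the two edges share the hull vertex $v$. By Lemmas~\ref{lem:no-short-edge} and~\ref{lem:no-extreme-edge}, both edges lie strictly inside the left chain and away from the short edges next to the extreme edges. By Lemma~\ref{lem:one-point-per-edge}, $p$ and $p'$ are the only Steiner points on their edges. The plan is to show that the two points compete for the same neighbours near $v$, so that one of them cannot satisfy both Lemma~\ref{lem:balanced-neighbors-no-extreme} and Lemma~\ref{lem:degree-lower-bound}. Then we delete it, as in the proofs of those lemmas.

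First, we handle adjacency to $e$. By planarity, at most one of $p,p'$ can have a neighbour strictly on the far side of $v$ while the other is adjacent to $e$. More precisely, if $p$ is adjacent to $e$, then the edge $pe$ separates $v$ (and hence $p'$) from every vertex on the other side of the $x$-axis. So every neighbour of $p'$ lies in the region bounded by $p$, $v$, $p'$ and $e$, and on the chain between $p'$ and the $x$-axis. In that case we treat $p'$ as a point with \emph{one-sided} neighbours, and delete it as in the first paragraph of the proof of Lemma~\ref{lem:one-point-per-edge}: reroute its neighbours to the endpoint of $vv'$ on that side. This is strictly cheaper. The same argument applies if both are adjacent to $e$, since the edges $pe$ and $p'e$ then enclose $v$ in a region containing no other hull vertices. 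Every neighbour of $p$ or $p'$ inside that region is then $v$ itself or the other Steiner point.

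Next, assume neither point is adjacent to $e$. Lemma~\ref{lem:balanced-neighbors-no-extreme} then gives $|S_1(p)|=|S_2(p)|$ and $|S_1(p')|=|S_2(p')|$, and Lemma~\ref{lem:degree-lower-bound} makes each of these sets nonempty. Say $v$ lies below $u$ and above $v'$, orienting along the chain. Then $S_2(p)$ (the neighbours of $p$ below $uv$ in the chain order) must consist of points lying beyond $v$, that is, on the side of $v'$. Similarly, $S_1(p')$ lies beyond $v$ on the side of $u$. Planarity forces the edges from $p$ to $S_2(p)$ to cross the edges from $p'$ to $S_1(p')$, unless one of these sets is just $\{p'\}$ or $\{p\}$ respectively. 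The triangle $p\,v\,p'$ is the only place where the two points see past $v$ without crossing. Hence, say, $S_2(p)=\{p'\}$. Balance then gives $|S_1(p)|=1$, so $p$ has only two neighbours besides the endpoints of $uv$. This contradicts Lemma~\ref{lem:degree-lower-bound}.

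The main obstacle is the crossing argument. It needs "above/below" in Lemma~\ref{lem:balanced-neighbors-no-extreme} to mean a split of the hull by the position of $v$, consistent with convexity. It also needs the edges $pp'$ and $pv$ to be counted correctly, as a neighbour in $S_2(p)$ versus an endpoint. I would first make precise that the neighbours of $p$, in angular order around $p$, are monotone along the boundary. With that, the crossing claim becomes a statement about interleaving boundary intervals, which follows directly from planarity of the triangulation of the convex region.
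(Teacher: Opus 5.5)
Your case where neither point is adjacent to $e$ is correct and is essentially the paper's argument. Chords $pz$ with $z\in S_2(p)\setminus\{p'\}$ and $p'w$ with $w\in S_1(p')\setminus\{p\}$ have interleaving endpoints, so one of these sets is empty, and Lemmas~\ref{lem:balanced-neighbors-no-extreme} and~\ref{lem:degree-lower-bound} finish. That crossing step is routine and is not the real obstacle.

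The genuine gap is the case where \emph{both} $p$ and $p'$ are adjacent to $e$, which you dismiss with ``the same argument applies.'' It does not apply. The one-sided deletion in the first paragraph of Lemma~\ref{lem:one-point-per-edge} is stated only for points \emph{not} adjacent to $e$, and Lemma~\ref{lem:balanced-neighbors-no-extreme} has the same hypothesis. The restriction is essential: deleting $p$ forces $pe$ to be replaced by an edge from $u$ or $v$ to $e$, which can be longer by up to $d(p,u)$, and no edge-by-edge shortening pays for that. Moreover, Claim~\ref{clm:extreme-neighbor-has-both-sides} forces $p$ to have a neighbor beyond $v$, and here the only candidate is $p'$. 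So $pp'$ is an edge, neither point is one-sided, and rerouting $p$ to $u$ also replaces $pp'$ by the longer $up'$. Your observation that the region between $pe$ and $p'e$ contains only $v$ describes this configuration correctly but does not exclude it.

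What is missing is a quantitative argument that exploits how long the edges to $e$ are. The paper deletes both $p$ and $p'$. It reconnects the neighbors of $p$ beyond $u$ to $u$ and those of $p'$ beyond $v'$ to $v'$, each strictly shorter. It then fills the remaining hexagon $h,u,v,v',l,e$, where $h$ and $l$ are the outermost non-$e$ neighbors of $p$ and $p'$, with the diagonals $hl$, $hv'$, $uv'$. Observation~\ref{obs:y-difference} bounds their total by $27424.78$, whereas Observation~\ref{obs:extreme-distance} gives $d(e,p)+d(e,p')>31000$.

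There is also a smaller inaccuracy in the case where only $p$ is adjacent to $e$. The chord $pe$ does not in general separate $v$ from the other side of the $x$-axis; this fails when $u$ is farther from $v_0$ than $v$. What it does give is $S_1(p')\subseteq\{p\}$. Since $p'$ may still be adjacent to $p$, it need not be one-sided, and rerouting its neighbors to $v'$ would lengthen $pp'$. Instead, Lemma~\ref{lem:balanced-neighbors-no-extreme} gives $|S_1(p')|=|S_2(p')|\le 1$, contradicting Lemma~\ref{lem:degree-lower-bound}. This is the paper's ``same blocking argument.''
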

\begin{proof}
If neither point is adjacent to \(e\), planarity blocks neighbors on one side, contradicting Lemma~\ref{lem:degree-lower-bound}.

If both are adjacent to \(e\), remove both and retriangulate the resulting hexagon. All rerouted edges shorten, and the remaining comparison reduces to the extreme edges, which is handled in Appendix~\ref{app:consecutive-case}.
\end{proof}

\begin{lemma}\label{lem:no-nonextreme}
There is no boundary Steiner point on any non-extreme hull edge.
\end{lemma}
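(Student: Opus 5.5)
We argue by contradiction: suppose an optimal boundary-only Steiner triangulation has a Steiner point on some non-extreme hull edge. By Lemma~\ref{lem:no-extreme-edge} there are none on the extreme edges, and by Lemma~\ref{lem:no-short-edge} none on the two short edges adjacent to them. By Lemma~\ref{lem:one-point-per-edge}, each non-extreme hull edge carries at most one Steiner point. By Lemma~\ref{lem:no-consecutive}, no two consecutive non-extreme edges both carry one. So each Steiner point $p$ lies alone on its edge $uv$, with $u$ above $v$, and the edges adjacent to $uv$ carry no Steiner points.

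We split according to whether $p$ is adjacent to the extreme vertex $e$.

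\emph{Case A: $p$ is not adjacent to $e$.} By Lemma~\ref{lem:balanced-neighbors-no-extreme}, $|S_1|=|S_2|$. By Lemma~\ref{lem:degree-lower-bound}, $p$ has at least three neighbors besides $u,v$, so $|S_1|=|S_2|\ge 2$. The neighbors of $p$ form a contiguous fan by planarity. Since $p$ is not adjacent to $e$, the fan must stay on the left chain, so the neighbors are hull vertices or Steiner points on the left side. We then refine the rerouting of Lemma~\ref{lem:balanced-neighbors-no-extreme}: delete $p$ and send every neighbor to $u$.
\begin{itemize}
\item Each $x\in S_1$ gains at least $\tfrac{12}{13}d(u,p)$ by Lemma~\ref{lem:upper-reroute-gain}.
\item For $x\in S_2$, note that $x$ lies strictly below $v$ and the chain is steep. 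So the gain from Lemma~\ref{lem:upper-reroute-gain} is matched from the other side, while the loss $d(x,u)-d(x,p)$ is at most $d(u,p)$.
\end{itemize}
This alone gives no strict gain, so we use the edge $pv$ (or $pu$) that disappears. That edge contributes only zero hull cost, so the real saving must come elsewhere. Instead we compare two alternatives: rerouting to $u$, and rerouting to $v$ using the mirror form of Lemma~\ref{lem:upper-reroute-gain}, whose gain per lower neighbor is again $\ge\tfrac{12}{13}d(v,p)$ thanks to the exponential spacing. Summing the two alternatives gives
\[
\text{gain}_u+\text{gain}_v \ \ge\ |S_1|\Bigl(\tfrac{12}{13}d(u,p)-d(v,p)\Bigr)+|S_2|\Bigl(\tfrac{12}{13}d(v,p)-d(u,p)\Bigr)+\cdots.
\]
This is not obviously positive. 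So the plan is to strengthen the loss bound: for $x$ far below (at least one full edge below $v$), $d(x,u)-d(x,p)\le d(u,p)\cos\theta$ with $\theta$ bounded away from $0$. The neighbors nearest to $p$ on each side ($v$'s neighbor and $u$'s neighbor) must be handled separately. Together with $|S_1|=|S_2|\ge 2$, this yields a strict gain for one of the two choices.

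\emph{Case B: $p$ is adjacent to $e$.} Then $pe$ splits the polygon, and the edges $ue$ and $ve$ are absent. We delete $p$ and add either $ue$ or $ve$, whichever is cheaper. Neighbors of $p$ above go to $u$, and neighbors below go to $v$. The fan edges shorten by the triangle inequality along the steep chain, using Lemma~\ref{lem:upper-reroute-gain} and its mirror, so it suffices to show $\min\{d(e,u),d(e,v)\}\le d(e,p)$ up to the rerouting gains. Since $e$ is very far away and $uv$ is nearly vertical, $d(e,\cdot)$ restricted to $uv$ is roughly linear in $y$ with slope about $\pm y/16000$. So $d(e,p)$ exceeds the minimum endpoint distance by only a tiny amount, or not at all, and convexity of distance along a segment gives $\min\{d(e,u),d(e,v)\}\le d(e,p)$ directly. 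Strictness then comes from at least one rerouted fan edge, which exists by Lemma~\ref{lem:degree-lower-bound}.

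I expect Case A to be the main obstacle: balancing gains and losses when $|S_1|=|S_2|$ requires a sharper angular estimate than Lemma~\ref{lem:upper-reroute-gain} provides, and likely a small finite check over edge indices, in the style of the paper's appendices.
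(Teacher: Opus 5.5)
There is a genuine gap, and it is the one you flagged yourself: Case A is left open. The cause is an accounting error in your rerouting. Let $w_1$ be the neighbor of $p$ that follows $u$ in rotation, so $puw_1$ is a triangle and $uw_1$ is \emph{already} an edge. When you delete $p$ and fan from $u$, the edge $pw_1$ disappears with no replacement. The true gain is therefore $\sum_{x\in S_1}(d(x,p)-d(x,u))$ plus an extra $d(u,w_1)$. This extra term is what drives the paper's proof: $G_u\ge d(p,u)\bigl(\tfrac{12}{13}k-\ell-I_e\bigr)+d(u,w_1)$.

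For $k=\ell$, reroute to the endpoint with larger $|y|$. By the exponential spacing, and since $w_1$ cannot lie on the next edge (Lemma~\ref{lem:no-consecutive}), $d(u,w_1)\ge 2d(u,v)>2d(p,u)$. This beats the deficit $(\tfrac{k}{13}+I_e)\,d(p,u)\le 2d(p,u)$, because $k\le 13$ by Claim~\ref{clm:min-side-size}. Only the edge at $v_0$, where this spacing bound fails, needs a direct check. There the paper uses your idea of combining the two reroutings, but it works only once $d(u,w_1)$ and $d(v,z_1)$ are included.

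Your proposed substitute, a sharper angular bound on the losses, cannot succeed. On the higher edges the chain is nearly straight (edge slopes $dx/dy$ tend to $1/5$). So each lower neighbor really costs almost $d(u,p)$ and each upper neighbor really saves almost $d(u,p)$. The net fan contribution is then a second-order quantity with no definite sign.

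Case B fails for two separate reasons. First, the retriangulation is incomplete. After sending upper neighbors to $u$ and lower neighbors to $v$, the hole is the pentagon $u,w_k,e,z_\ell,v$. Adding one of $ue,ve$ still leaves a quadrilateral, which needs a second diagonal (e.g.\ $uz_\ell$ or $w_kv$) whose length you never pay for.

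Second, convexity of $d(e,\cdot)$ along $uv$ gives $d(e,p)\le\max\{d(e,u),d(e,v)\}$, not $\min\{d(e,u),d(e,v)\}\le d(e,p)$. The latter is false in general. On $v_{11}v_{12}$ the foot of the perpendicular from $e$ is essentially the midpoint, where $d(e,p)$ is about $34$ less than both endpoint distances.

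The paper avoids both problems by sending \emph{all} neighbors, including $e$, to a single endpoint. It pays $d(e,u)-d(e,p)\le d(p,u)$ by the triangle inequality, or $\tfrac{5}{13}d(p,u)$ when $|p_y|\le 513$ (Claim~\ref{clm:extreme-loss-small}). It then closes the $k\ne\ell$ cases using $d(u,w_1)\ge\tfrac{12}{25}d(u,v)$ (Claim~\ref{clm:first-neighbor-long}) and $\min\{k,\ell\}\le 5$ when $|p_y|>513$ (Claim~\ref{clm:large-height-small-side}). Again, the free deletion of $pw_1$ is what pays for everything.
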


\begin{proof}
Assume for contradiction that a non-extreme hull edge \(uv\) contains a Steiner point \(p\), where \(u\) is the upper endpoint and \(v\) is the lower endpoint.

Let
\[
w_1,\dots,w_k
\]
be the neighbors of \(p\) above \(u\), ordered clockwise so that \(w_1\) is closest to \(u\), and let
\[
z_1,\dots,z_\ell
\]
be the neighbors below \(v\), ordered so that \(z_1\) is closest to \(v\). Let
\[
I_e=
\begin{cases}
1,&\text{if }e=(16000,0)\in N(p),\\
0,&\text{otherwise.}
\end{cases}
\]

If \(I_e=0\), then by Lemma~\ref{lem:balanced-neighbors-no-extreme},
\[
k=\ell.
\]
If \(I_e=1\), then both \(k,\ell\ge 1\).

\medskip
\noindent
\textbf{Rerouting to \(u\).}
We delete \(p\), reconnect all neighbors to \(u\), and if needed replace \(ep\) by \(eu\).

Upper neighbors give gain:
\[
\sum_{i=1}^k \bigl(d(w_i,p)-d(w_i,u)\bigr)
\ge \frac{12}{13}k\,d(p,u).
\]
Lower neighbors give loss:
\[
\sum_{j=1}^\ell \bigl(d(z_j,u)-d(z_j,p)\bigr)
\le \ell\,d(p,u).
\]
Extreme point contributes:
\[
d(e,u)-d(e,p)\le d(p,u),
\]
and if \(|p_y|,|u_y|\le 513\), then by Claim~\ref{clm:extreme-loss-small},
\[
d(e,u)-d(e,p)\le \tfrac{5}{13} d(p,u).
\]
Finally, since \(uw_1\) already exists,
\[
\text{extra gain} \;\ge d(u,w_1).
\]

Hence
\[
G_u
\ge
d(p,u)\Bigl(\tfrac{12}{13}k-\ell-I_e\Bigr)
+
d(u,w_1),
\]
and in the small-\(y\) case:
\[
G_u
\ge
d(p,u)\Bigl(\tfrac{12}{13}k-\ell-\tfrac{5}{13} I_e\Bigr)
+
d(u,w_1).
\]
Similarly, rerouting to \(v\) gives
\[
G_v
\ge
d(p,v)\Bigl(\tfrac{12}{13}\ell-k-I_e\Bigr)
+
d(v,z_1),
\]
and in the small-\(y\) case,
\[
G_v
\ge
d(p,v)\Bigl(\tfrac{12}{13}\ell-k-\tfrac{5}{13} I_e\Bigr)
+
d(v,z_1).
\]

\medskip
\noindent
\textbf{Case 1: \(k=\ell\).}
If \(uv\) is not incident to \(v_0\), reroute toward the endpoint with
larger absolute \(y\)-coordinate. By symmetry, assume this endpoint is
\(u\). Then
\[
d(u,w_1)\ge 2d(u,v).
\]
Thus
\[
G_u
\ge
d(p,u)\Bigl(-\tfrac{k}{13}-I_e\Bigr)
+
2d(u,v).
\]
Since \(k\le 13\), \(I_e\le 1\), and \(d(p,u)<d(u,v)\), we get
\[
G_u>0.
\]

Now suppose \(uv\) is incident to \(v_0\). By symmetry, assume
\[
v=v_0=(0,0), \qquad u=(1/5,3).
\]
Then \(|p_y|,|u_y|,|v_y|\le 513\), so the small-\(y\) bounds above apply
to both reroutings. Set
\[
C=\frac{k}{13}+\frac{5}{13}I_e.
\]
Since \(k\le 13\) and \(I_e\le 1\), we have \(C\le 18/13\). Because
\(k=\ell\), the two small-\(y\) bounds give
\[
G_u\ge -C\,d(p,u)+d(u,w_1),
\qquad
G_v\ge -C\,d(p,v)+d(v,z_1).
\]
If both were nonpositive, then
\[
d(u,w_1)+d(v,z_1)
\le
C(d(p,u)+d(p,v))
=
C\,d(u,v).
\]
But
\[
d(u,w_1)=\sqrt{2^2+(1/5)^2},
\qquad
d(v,z_1)=d(u,v)=\sqrt{3^2+(1/5)^2},
\]
so
\[
d(u,w_1)+d(v,z_1)
>
\frac{18}{13}d(u,v)
\ge
C\,d(u,v),
\]
a contradiction. Thus at least one of \(G_u,G_v\) is positive.

\medskip
\noindent
\textbf{Case 2: \(k>\ell\).}
Then
\[
G_u
\ge
d(p,u)\Bigl(\tfrac{12}{13}k-\ell-I_e\Bigr)
+
d(u,w_1).
\]
Using
\[
d(u,w_1)\ge \tfrac{12}{25}d(u,v)>\tfrac{12}{25}d(p,u),
\]
we obtain
\[
G_u
>
d(p,u)\Bigl(\tfrac{12}{13}k-\ell-I_e+\tfrac{12}{25}\Bigr).
\]
If \(|p_y|\le 513\), then
\[
G_u
>
d(p,u)\Bigl(\tfrac{12}{13}k-\ell-\frac{5}{13}I_e+\tfrac{12}{25}\Bigr).
\]

\medskip
\noindent
\textbf{Subcase 2a: \(|p_y|>513\).}
In this case, by Claim~\ref{clm:large-height-small-side}, we have
\[
\ell\le 5.
\]
Then
\[
\begin{aligned}
G_u
&>
d(p,u)\Bigl(\tfrac{12}{13}(\ell+1)-\ell-I_e+\tfrac{12}{25}\Bigr)\\
&=
d(p,u)\Bigl(\tfrac{12-\ell}{13}-I_e+\tfrac{12}{25}\Bigr)\\
&\ge
d(p,u)\Bigl(\tfrac{7}{13}-1+\tfrac{12}{25}\Bigr)\\
&=
d(p,u)\frac{6}{325}
>0.
\end{aligned}
\]

\medskip
\noindent
\textbf{Subcase 2b: \(|p_y|\le 513\).}
In this case, the gain satisfies
\[
\begin{aligned}
G_u
&>
d(p,u)\Bigl(\tfrac{12}{13}k-\ell-\frac{5}{13}I_e+\tfrac{12}{25}\Bigr)\\
&\ge
d(p,u)\Bigl(\tfrac{12}{13}(\ell+1)-\ell-\tfrac{5}{13}+\tfrac{12}{25}\Bigr)\\
&=
d(p,u)\Bigl(\tfrac{12-\ell}{13}-\tfrac{5}{13}+\tfrac{12}{25}\Bigr)\\
&\ge
d(p,u)\Bigl(-\tfrac{1}{13}-\tfrac{5}{13}+\tfrac{12}{25}\Bigr)>0 .
\end{aligned}
\]

\medskip
In all cases, one of the two reroutings gives positive gain, contradicting optimality.
\end{proof}
Therefore no optimal boundary-only triangulation contains a boundary Steiner point. Hence
\[
\mathrm{OPT}_{\mathrm{boundary}}(P)
=
\mathrm{OPT}_{\mathrm{noSteiner}}(P),
\]
proving part~(ii).

\section{Conclusions}
We have shown that the minimum-weight Steiner triangulation of a convex polygon may sometimes require interior Steiner points. This discovery shuts off the possibility of a dynamic programming algorithm for this problem based on the tree-like structure of triangulations without interior Steiner points. However, the computational complexity of minimum-weight Steiner triangulation of convex polygons remains unclear. Can it be shown to be $\mathsf{NP}$-hard, or polynomial time?

\section*{Acknowledgements}
This research was supported in part by NSF grant CCF-2212129.

\bibliographystyle{alpha} % or plain, abbrv, etc. 

\bibliography{Ref}

@article{Eppstein94,
  author       = {David Eppstein},
  title        = {Approximating the Minimum Weight {S}teiner Triangulation},
  journal      = {Discret. Comput. Geom.},
  volume       = {11},
  pages        = {163--191},
  year         = {1994},
  url          = {https://doi.org/10.1007/BF02574002},
  doi          = {10.1007/BF02574002},
  bibsource    = {dblp computer science bibliography, https://dblp.org}
}

@article {MulzerRote08,
    AUTHOR = {Mulzer, Wolfgang and Rote, G\"unter},
     TITLE = {Minimum-weight triangulation is {NP}-hard},
   JOURNAL = {J. ACM},
    VOLUME = {55},
      YEAR = {2008},
    NUMBER = {2},
     PAGES = {A11:1--A11:29},
  MRNUMBER = {2417038},
       DOI = {10.1145/1346330.1346336},
}

@article {ChengLee02,
    AUTHOR = {Cheng, Siu-Wing and Lee, Kam-Hing},
     TITLE = {Quadtree, ray shooting and approximate minimum weight
              {S}teiner triangulation},
   JOURNAL = {Comput. Geom.},
    VOLUME = {23},
      YEAR = {2002},
    NUMBER = {2},
     PAGES = {99--116},
  MRNUMBER = {1922926},
       DOI = {10.1016/S0925-7721(02)00078-0}
}

@inproceedings{ChengDey99,
  author       = {Siu{-}Wing Cheng and
                  Tamal K. Dey},
  title        = {Approximate minimum weight {S}teiner triangulation in three dimensions},
  booktitle    = {Proceedings of the Tenth Annual {ACM--SIAM} Symposium on Discrete Algorithms (SODA)},
  pages        = {205--214},
  publisher    = {ACM and SIAM},
  year         = {1999},
  url          = {http://dl.acm.org/citation.cfm?id=314500.314559},
}

@article {Traub15,
    AUTHOR = {Traub, Cynthia M.},
     TITLE = {Steiner reducing sets of minimum weight triangulations:
              structure and topology},
   JOURNAL = {Comput. Geom.},
    VOLUME = {49},
      YEAR = {2015},
     PAGES = {24--36},
  MRNUMBER = {3399986},
       DOI = {10.1016/j.comgeo.2015.04.004},
}

@inproceedings{Llo-FOCS-77,
  author       = {Errol L. Lloyd},
  title        = {On triangulations of a set of points in the plane},
  booktitle    = {18th Annual Symposium on Foundations of Computer Science (FOCS)},
  pages        = {228--240},
  publisher    = {{IEEE} Computer Society},
  year         = {1977},
  doi          = {10.1109/SFCS.1977.21},
}

@book{GarJoh-79,
author={Michael R. Garey and David S. Johnson},
title={Computers and Intractability: A Guide to the Theory of NP-Completeness},
publisher={W. H. Freeman},
year={1979}}

@inproceedings{Cla-SODA-91,
  author       = {Kenneth L. Clarkson},
  editor       = {Alok Aggarwal},
  title        = {Approximation algorithms for planar traveling salesman tours and minimum-length
                  triangulations},
  booktitle    = {Proceedings of the Second Annual {ACM/SIGACT-SIAM} Symposium on Discrete
                  Algorithms (SODA)},
  pages        = {17--23},
  publisher    = {{ACM/SIAM}},
  year         = {1991},

}

@article{DupGot-AV-70,
  title = {{Automatische Interpolation von Isolinien bei willk{\"u}rlich verteilten St{\"u}tzpunkten}},
  author = {R. D. D{\"u}ppe and H. J. Gottschalk},
  journal = {Allgemeine Vermessungs-Nachrichten},
  pages = {423{--}426},
  volume = {77},
  year = {1970}}

@techreport{Gil-RR-79,
  title = {{New results in planar triangulations}},
  address = {Urbana, Illinois},
  author = {P. D. Gilbert},
  institution = {Coordinated Science Laboratory, University of Illinois},
  number = {RR850},
  year = {1979}}

@article{Kli-ADM-80,
  title = {{Minimal triangulations of polygonal domains}},
  author = {G. T. Klincsek},
  doi = {10.1016/s0167-5060(08)70044-x},
  journal = {Annals of Discrete Mathematics},
  pages = {121{--}123},
  volume = {9},
  year = {1980}}

@inproceedings{ShaHoe-FOCS-76,
  author       = {Michael Ian Shamos and
                  Dan Hoey},
  title        = {Geometric Intersection Problems},
  booktitle    = {17th Annual Symposium on Foundations of Computer Science (FOCS)},
  pages        = {208--215},
  publisher    = {{IEEE} Computer Society},
  year         = {1976},
  doi          = {10.1109/SFCS.1976.16},
}

\section{Appendix}

\subsection{Dynamic programming certificate for part (i)}
\label{app:dp}

We illustrate the dynamic programming computation used in Section~\ref{sec:proof-part1}.

Label the vertices counterclockwise as
\[
v_0,v_1,\dots,v_{27},
\qquad
v_0=(0,0).
\]
For \(0\le i<j\le 27\), let \(\mathrm{DP}[i,j]\) be the minimum total
diagonal length needed to triangulate the chain
\[
(v_i,v_{i+1},\dots,v_j).
\]
We use the standard recurrence
\[
\mathrm{DP}[i,i+1]=0,
\]
and
\[
\mathrm{DP}[i,j]
=
\min_{i<k<j}
\bigl(
\mathrm{DP}[i,k]
+
\mathrm{DP}[k,j]
+
c(i,k)+c(k,j)
\bigr),
\]
where
\[
c(a,b)=
\begin{cases}
0,&\text{if }(v_a,v_b)\text{ is a hull edge},\\
\|v_a-v_b\|_2,&\text{otherwise.}
\end{cases}
\]

The dynamic program is exact for convex polygons. Instead of printing the
full triangular table, we record the entries \(\mathrm{DP}[0,j]\) that illustrate
the optimal fan centered at \(v_0\). In every row below, the minimizing
split is \(k=j-1\).

\begin{table}[htbp]
\centering
\scriptsize
\setlength{\tabcolsep}{4pt}
\renewcommand{\arraystretch}{1.02}
\begin{tabular}{c|r|c}
\hline
\(j\) & \(\mathrm{DP}[0,j]\) & \(k\)\\
\hline
 2 &     0.0000 & 1\\
 3 &     5.0160 & 2\\
 4 &    14.0714 & 3\\
 5 &    31.2399 & 4\\
 6 &    64.6788 & 5\\
 7 &   130.7058 & 6\\
 8 &   261.9561 & 7\\
 9 &   523.6986 & 8\\
10 &  1046.4695 & 9\\
11 &  2091.3393 & 10\\
12 &  4180.4481 & 11\\
13 &  8358.0756 & 12\\
14 & 16712.7803 & 13\\
15 & 32712.7803 & 14\\
16 & 41067.4850 & 15\\
17 & 45245.1124 & 16\\
18 & 47334.2213 & 17\\
19 & 48379.0911 & 18\\
20 & 48901.8619 & 19\\
21 & 49163.6044 & 20\\
22 & 49294.8547 & 21\\
23 & 49360.8817 & 22\\
24 & 49394.3206 & 23\\
25 & 49411.4892 & 24\\
26 & 49420.5446 & 25\\
27 & 49425.5605 & 26\\
\hline
\end{tabular}
\caption{
Selected entries of the dynamic programming table.
For every \(j\), the subproblem \(\mathrm{DP}[0,j]\) is minimized by
choosing \(k=j-1\), recursively producing the fan centered at \(v_0\).
}
\label{tab:dp-values}
\end{table}

\subsection{Proof of Observation~\ref{obs:y-difference}}\label{proof:obs:y-difference}

\begin{proof}
We first prove the corresponding coordinate bound
\[
|u_x-v_x|\le \frac15 |u_y-v_y|
\]
for any two points \(u,v\) on non-extreme hull edges. This will imply the desired distance bound at the end of the proof.

Consider first one edge of the upper chain. The edge between consecutive vertices
\[
\left(\frac{2^{i-1}-(i-1)}5,2^{i-1}+1\right)
\quad\text{and}\quad
\left(\frac{2^i-i}{5},2^i+1\right)
\]
has vertical difference \(2^{i-1}\) and horizontal difference
\[
\frac{(2^i-i)-(2^{i-1}-(i-1))}{5}
=
\frac{2^{i-1}-1}{5}
<
\frac{2^{i-1}}5.
\]
Thus on this edge,
\[
|\Delta x|<\frac15|\Delta y|.
\]
The same bound holds for the symmetric lower-chain edges. It also holds for the two edges incident to \(v_0=(0,0)\), since the edge from \(v_0\) to \((1/5,3)\) has slope \(1/15<1/5\), and similarly for the lower edge.

Therefore, along each non-extreme hull edge, the boundary is a graph \(x=f(y)\) with slope at most \(1/5\) in absolute value. Hence if \(u\) and \(v\) lie on the same side of the \(x\)-axis,
\[
|u_x-v_x|\le \frac15 |u_y-v_y|.
\]

It remains only to consider the case where \(u\) and \(v\) lie on opposite sides of the \(x\)-axis. For any point \(z\) on a non-extreme hull edge, the preceding slope bound from the origin implies
\[
0\le z_x\le \frac{|z_y|}{5}.
\]
Thus, if \(u_y\ge 0\) and \(v_y\le 0\), then
\[
|u_x-v_x|
\le u_x+v_x
\le \frac{u_y}{5}+\frac{|v_y|}{5}
=
\frac15 |u_y-v_y|.
\]

Therefore, for any two points \(u,v\) on non-extreme hull edges,
\[
|u_x-v_x|\le \frac15 |u_y-v_y|.
\]
Consequently,
\[
\begin{aligned}
d(u,v)
&=
\sqrt{(u_x-v_x)^2+(u_y-v_y)^2}\\
&\le
\sqrt{\frac1{25}+1}\,|u_y-v_y|\\
&=
\frac{\sqrt{26}}5\,|u_y-v_y|
<
1.03\,|u_y-v_y|.
\end{aligned}
\]
This proves the claim.
\end{proof}
\subsection{Proof of Observation~\ref{obs:dist-to-extreme-line}}\label{proof:obs:dist-to-extreme-line}

\begin{proof}
The line \(L\) passes through \((8179/5,8193)\) and \((16000,0)\), hence its equation is
\[
40965x+71821y-655440000=0.
\]
Set
\[
D:=\sqrt{40965^2+71821^2}=\sqrt{6836387266}.
\]
Thus
\[
d((x,y),L)
=
\frac{|40965x+71821y-655440000|}{D}.
\]

Every point on a non-extreme hull edge satisfies
\[
x\le \frac{|y|}{5}.
\]
Indeed, this holds at the original vertices, and therefore also on each non-extreme hull edge by linearity.

Assume first that \(0<y\le 4097\). Then
\[
\begin{aligned}
&40965x+71821y-655440000\\
&\le
40965\cdot \frac{y}{5}+71821y-655440000 \\
&=
80014y-655440000 \\
&\le
80014\cdot 4097-655440000 \\
&<0.
\end{aligned}
\]
Therefore
\[
\begin{aligned}
d((x,y),L)
&=
\frac{655440000-40965x-71821y}{D} \\
&\ge
\frac{655440000-80014y}{D} \\
&=
\frac{655440000}{D}-\frac{80014}{D}y.
\end{aligned}
\]

Now
\[
\frac{655440000}{D}>\frac{39623}{5}=7924.6,
\]
since
\[
\begin{aligned}
&25\cdot 655440000^2-39623^2\cdot 6836387266\\
&=7034005456830686>0,
\end{aligned}
\]
and
\[
\frac{80014}{D}<\frac{121}{125}=0.968,
\]
since
\[
125^2\cdot 80014^2-121^2\cdot 6836387266
=-56542899006<0.
\]
Hence
\[
d((x,y),L)\ge 7924.6-0.968\,y.
\]

Now assume \(y<0\). Then
\[
x\le \frac{|y|}{5}=\frac{-y}{5},
\]
and hence
\[
\begin{aligned}
&40965x+71821y-655440000\\
&\le
40965\cdot \frac{-y}{5}+71821y-655440000 \\
&=
63628y-655440000 \\
&<0.
\end{aligned}
\]
Thus
\[
\begin{aligned}
d((x,y),L)
&=
\frac{655440000-40965x-71821y}{D} \\
&\ge
\frac{655440000-63628y}{D} \\
&=
\frac{655440000+63628|y|}{D}.
\end{aligned}
\]

As above,
\[
\frac{655440000}{D}>\frac{39623}{5}=7924.6,
\]
and also
\[
\frac{63628}{D}>\frac{769}{1000}=0.769,
\]
since
\[
1000^2\cdot 63628^2-769^2\cdot 6836387266
=350781507046>0.
\]
Therefore
\[
d((x,y),L)\ge 7924.6+0.769\,|y|.
\]
This proves the claim.
\end{proof}

\subsection{Proof of Observation~\ref{obs:extreme-distance}}\label{proof:extreme-distance}

\begin{proof}
Let \(x=(x_0,y_0)\) be a point on a non-extreme hull edge, and set
\[
t:=|y_0|.
\]
Every such point satisfies
\[
0\le t\le 8193,
\qquad
0\le x_0\le \frac{t}{5}.
\]

For the lower bound,
\[
\begin{aligned}
d(e,x)^2
&=(16000-x_0)^2+t^2\\
&\ge
\left(16000-\frac{t}{5}\right)^2+t^2 .
\end{aligned}
\]
The function
\[
F(t)=\left(16000-\frac{t}{5}\right)^2+t^2
\]
is minimized at
\[
t=\frac{80000}{26}.
\]
At this point,
\[
F(t)=\frac{6400000000}{26}>15500^2.
\]
Therefore
\[
d(e,x)>15500.
\]

 For the upper bound, use the fact that \(d(e,x)^2\) is convex along each hull edge. Hence its maximum on every non-extreme hull edge is attained at an endpoint. Therefore it suffices to check the original hull vertices.

For \(1\le i\le 13\), the non-extreme hull vertices other than \(v_0\) are
\[
x_i=\left(\frac{2^i-i}{5},\pm(2^i+1)\right).
\]
Also \(d(e,v_0)=16000\). A direct check over \(1\le i\le 13\) shows that the maximum is attained at \(i=13\), where
\[
\begin{aligned}
d(e,x_{13})^2
&=
\left(16000-\frac{8179}{5}\right)^2+8193^2\\
&=
\left(\frac{71821}{5}\right)^2+8193^2\\
&=
\frac{6836387266}{25}
<17000^2.
\end{aligned}
\]
Thus
\[
d(e,x)<17000.
\]

It remains to prove the sharper bound when \(|y_0|\le 4097\). Since \(d(e,x)^2\) is convex along each hull edge, its maximum on each hull edge is attained at an endpoint. Hence it suffices to check the original hull vertices with
\[
|y|\le 4097.
\]
For such a vertex,
\[
x_i=\frac{2^i-i}{5},
\qquad
y_i=\pm(2^i+1),
\qquad
1\le i\le 12.
\]
We need to prove
\[
(16000-x_i)^2+y_i^2\le 16000^2,
\]
equivalently,
\[
x_i^2+y_i^2\le 32000x_i.
\]
This holds for all \(1\le i\le 12\). Indeed,
\[
x_i^2+y_i^2
\le
\left(\frac{2^i}{5}\right)^2+(2^i+1)^2
<2(2^i+1)^2,
\]
while
\[
32000x_i
=
6400(2^i-i).
\]
For \(1\le i\le 12\), one checks directly that
\[
2(2^i+1)^2\le 6400(2^i-i).
\]
Also,
\[
d(e,(0,0))=16000.
\]
Therefore every point on a non-extreme hull edge with
\[
|y_0|\le 4097
\]
satisfies
\[
d(e,x)\le 16000.
\]
\end{proof}

\subsection{Proof of Lemma~\ref{lem:no-short-edge}}

We prove the statement for the upper short edge \[ A b \quad\text{where}\quad A=(4084/5,4097),\qquad b=(8179/5,8193). \] The lower short edge is handled by symmetry. Assume for contradiction that a boundary-only optimal triangulation contains a Steiner point on \(Ab\). Choose \(p\) to be the highest such Steiner point on \(Ab\), and let \(a\) be the boundary neighbor of \(p\) immediately below it on the same edge. Then \(b\) is the upper boundary neighbor of \(p\). Thus \(a,b\in N(p)\). We also need to discuss possible neighbors of \(p\) on the upper extreme edge \(be\), where \[ e=(16000,0). \] The endpoint \(b\) is always present as a boundary neighbor and is not counted as an additional upper-extreme neighbor. If \(p\) has additional neighbors on \(be\), then we may assume there is at most one such neighbor. Indeed, if two or more points on \(be\) are adjacent to \(p\), then all middle ones are redundant: removing a middle one and keeping only the outermost adjacent points on that boundary chain preserves the local triangulation and cannot increase the cost. Hence, in the rest of the proof, \(q\) denotes the unique possible additional neighbor of \(p\) on the upper extreme edge \(be\), other than \(b\). If no such additional neighbor exists, then all references to \(q\) are ignored. We will use the following local operation repeatedly: delete \(p\), reconnect its non-upper-extreme neighbors to \(a\), and then add the few diagonals needed to restore a triangulation of the local polygonal region.
%--------------------------------------------------
\subsubsection{ Distance comparison along the short edge}
\label{app:replace}

\begin{lemma}\label{lem:replace-p-by-a}
Let \(a\) and \(p\) be two points on the segment
\[
(4084/5,4097)\text{--}(8179/5,8193),
\]
with \(a\) below \(p\). Let \(x\) be any point that is not on the upper extreme edge. Then
\[
d(a,x)\le d(p,x).
\]
\end{lemma}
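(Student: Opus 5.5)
The inequality \(d(a,x)\le d(p,x)\) is a half-plane condition, so I will reduce it to one linear inequality and check that inequality at the vertices of \(P\). Write \(A=(4084/5,4097)\), \(b=(8179/5,8193)\), \(e=(16000,0)\), and let \(\mathbf{t}=b-A=(819,4096)\) be the direction of the short edge. Since \(a\) and \(p\) lie on \(Ab\) with \(a\) below \(p\), we have \(p-a=\lambda\mathbf{t}\) for some \(\lambda>0\). Put \(m=(a+p)/2\) and \(f(z)=\mathbf{t}\cdot z\). Then
\[
d(p,x)^2-d(a,x)^2=2(p-a)\cdot(x-m)=2\lambda\bigl(f(m)-f(x)\bigr),
\]
so \(d(a,x)\le d(p,x)\) holds exactly when \(f(x)\le f(m)\). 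The function \(f\) increases along \(Ab\) from \(A\) to \(b\), and \(m\) lies on \(Ab\), so \(f(m)\ge f(A)\). It therefore suffices to show \(f(x)\le f(A)\) for every relevant \(x\).

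\textbf{Vertex check.} Because \(f\) is linear, its maximum over the convex hull of any vertex set is attained at a vertex. I will compute \(f\) at every vertex of \(P\) except \(b\).
\begin{itemize}[label={}]
\item For the upper vertices \(v_i\) with \(i\le 12\), both coordinates are at most those of \(A=v_{12}\), so \(f(v_i)\le f(A)\).
\item For \(v_0\) and the lower chain, \(f(v_0)=0\). On the lower chain the \(y\)-coordinate is negative and \(x\le 8179/5\), so \(f\le 819\cdot 8179/5<f(A)\).
\item For the extreme vertex, \(f(e)=819\cdot 16000=13{,}104{,}000\). This is below \(f(A)=819\cdot 4084/5+4096\cdot 4097\approx 1.745\cdot 10^{7}\).
\end{itemize}
Hence \(f\le f(A)\) on the polygon \(P'=\mathrm{conv}\bigl(V(P)\setminus\{b\}\bigr)\), which is \(P\) with the triangle \(Abe\) cut off.

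\textbf{The remaining region.} What is left is to handle points \(x\) that lie in the triangle \(Abe\) but not on the upper extreme edge \(be\). Points of \(Ab\) below \(a\) satisfy \(f(x)\le f(a)\le f(m)\), so they are fine. Points on the diagonal \(Ae\) lie in \(P'\), so they are covered by the vertex check.

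This leaves the part of \(Ab\) strictly above \(m\), together with points in the interior of the triangle \(Abe\). This is the step I expect to be the real obstacle. Taken literally, the statement fails there: for \(x=p\) we get \(d(a,p)>0=d(p,p)\). So the phrase ``any point not on the upper extreme edge'' must be read in the setting where the lemma is applied. There, \(x\) is a neighbor of \(p\) in a boundary-only Steiner triangulation, so \(x\) is a boundary point of \(P\).

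I will argue that such an \(x\) cannot lie in the bad region. Interior points of the triangle \(Abe\) are not vertices of the triangulation. On \(Ab\), the only boundary points above \(a\) that are vertices are \(p\) and \(b\). The reason is that \(a\) is the boundary vertex immediately below \(p\), and \(p\) was chosen as the highest Steiner point on \(Ab\). The point \(b\) lies on \(be\), and \(x\ne p\) because \(x\) is a neighbor of \(p\). Every remaining candidate lies in \(P'\) or on \(Ab\) below \(a\), where \(f(x)\le f(m)\) has already been shown. This gives \(d(a,x)\le d(p,x)\).
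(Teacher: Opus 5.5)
Your argument is correct and is essentially the paper's. The paper differentiates $\phi(\lambda)=d(r(\lambda),x)^2$ along the segment and checks $\mathbf{t}\cdot(A-x)>0$ for left-chain points and at the endpoints $(8179/5,-8193)$ and $e$ of the lower extreme edge; this is the same linear test $f(x)\le f(A)$ that you verify at the vertices of $\mathrm{conv}(V(P)\setminus\{b\})$. Fix the sign slip in your middle expression: it should read $2(p-a)\cdot(m-x)$.

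Your observation that the statement is literally false (e.g.\ $x=p$) and must be read in the application setting is right. That setting has $p$ the highest Steiner point on $Ab$, $a$ the vertex immediately below it, and $x\neq p$ a boundary vertex. This restriction actually patches the paper's Case~1, which overlooks points of $Ab$ above $A$, where $\mathbf{t}\cdot(A-x)<0$.
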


\begin{proof}
Parameterize the segment by
\[
\begin{aligned}
r(\lambda)&=r_0+\lambda d,
\quad
r_0=(4084/5,4097),\\
\ d&=(819,4096),
\quad 0\le \lambda\le 1,   
\end{aligned}
\]
and write \(a=r(\lambda_a)\), \(p=r(\lambda_p)\) with \(\lambda_a<\lambda_p\).

Define
\[
\phi(\lambda):=d(r(\lambda),x)^2.
\]
Then
\[
\phi'(\lambda)=2\, d\cdot (r(\lambda)-x).
\]

It suffices to show \(\phi'(\lambda)>0\), i.e.
\[
d\cdot(r(\lambda)-x)>0.
\]

Since this expression is affine in \(x\), it suffices to check extreme cases.

\medskip
\noindent
\emph{Case 1: \(x\) lies on the left part of the polygon.}

Then the \(x\)-coordinate of \(x\) is strictly less than \(8179/5\). By the structure of the point set, such points lie strictly below the supporting line of the segment, hence
\[
d\cdot(r_0-x)>0.
\]

\medskip
\noindent
\emph{Case 2: \(x\) lies on the lower extreme edge.}

It suffices to check endpoints:
\[
x_1=(8179/5,-8193),\quad x_2=(16000,0).
\]
Direct computation gives
\[
d\cdot(r_0-x_1)>0,
\qquad
d\cdot(r_0-x_2)>0.
\]
By linearity, the same holds for all points on the edge.

\medskip
Thus \(\phi'(\lambda)>0\), so \(\phi\) is increasing and
\[
d(a,x)\le d(p,x).
\]
\end{proof}

%--------------------------------------------------
% \subsubsection{Strong gain when replacing \texorpdfstring{$p$}{p}}
% \label{app:replace-strong}

% \begin{lemma}\label{lem:replace-strong}
% Let \(a,p\) be as above and define
% \[
% \Delta:=p_y-a_y>0.
% \]
% Let \(c\) be any point not on extreme edges. Then
% \[
% d(p,c)-d(a,c)\ge \frac{\Delta}{1.03}.
% \]
% \end{lemma}

% \begin{proof}
% Write
% \[
% \delta:=p-a=(\delta_x,\Delta),
% \quad
% \delta_x=\frac{819}{20480}\Delta<\frac{\Delta}{25}.
% \]

% Let \(r(t)=a+t\delta\) and \(f(t)=d(r(t),c)\). Then
% \[
% d(p,c)-d(a,c)=\int_0^1 f'(t)\,dt.
% \]

% Set
% \[
% X=r(t)_x-c_x,\quad Y=r(t)_y-c_y.
% \]
% As \(c\) is not on extreme edges, we have \(Y>0\) and
% \[
% |X|\le \frac{Y}{5}.
% \]

% Thus
% \[
% f'(t)=\frac{\delta_x X+\Delta Y}{\sqrt{X^2+Y^2}}
% \ge
% \frac{(124/125)\Delta Y}{(51/50)Y}
% =\frac{248}{255}\Delta.
% \]

% Since \(\frac{248}{255}>\frac{1}{1.03}\), integrating gives the claim.
% \end{proof}

%--------------------------------------------------
\subsubsection{Lower extreme edge separation}
\label{app:lower-bound}

\begin{lemma}\label{lem:lower-distance}
For any point \(p\) on the short upper edge and any point \(w\) on the lower extreme edge,
\[
d(w,p)>11000.
\]
\end{lemma}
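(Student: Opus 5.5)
Since the distance \(d(w,p)\) is a convex function of the pair \((w,p)\), and both \(w\) and \(p\) range over segments, the minimum of \(d(w,p)\) over the product of the two segments is attained either at a pair of endpoints or where one point is the foot of the perpendicular from the other. The cleanest route avoids convexity casework altogether: bound the vertical separation. Every point \(p\) on the short upper edge \((4084/5,4097)\)--\((8179/5,8193)\) has \(p_y\ge 4097\), and its \(x\)-coordinate lies in \([4084/5,8179/5]\). Every point \(w\) on the lower extreme edge \((8179/5,-8193)\)--\((16000,0)\) has \(w_y\le 0\). So \(|p_y-w_y|\ge 4097\), which is too weak. I therefore need a genuinely two-dimensional bound, and I will use the distance from \(p\) to the line carrying the lower extreme edge.

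By symmetry with Observation~\ref{obs:dist-to-extreme-line}, the lower extreme line \(L'\) has equation
\[
40965x-71821y-655440000=0,
\]
and for any \(p\), \(d(w,p)\ge d(p,L')\) for every \(w\) on that line, in particular on the lower extreme edge. Next I evaluate the signed quantity \(40965p_x-71821p_y-655440000\) for \(p\) on the short edge. It is affine in \(p\), so it suffices to check the two endpoints. At \(b=(8179/5,8193)\) it equals
\[
67010\cdot\ldots
\]
More concretely, it is \(40965\cdot 1635.8-71821\cdot 8193-655440000\approx 67.0\cdot10^6-588.4\cdot 10^6-655.4\cdot10^6\approx -1.177\cdot 10^9\). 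At \((816.8,4097)\) it is \(\approx 33.5\cdot10^6-294.3\cdot10^6-655.4\cdot10^6\approx -0.916\cdot10^9\). Both values are negative, so the absolute value is at least about \(9.16\cdot 10^8\). Dividing by \(D=\sqrt{6836387266}\approx 82682\) gives \(d(p,L')\gtrsim 11080>11000\).

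To make this rigorous in the paper's style, I will write the linear lower bound \(|\cdot|\ge 655440000+71821p_y-40965p_x\) and use \(p_x\le p_y/5\) on the short edge, exactly as in the appendix proof of Observation~\ref{obs:dist-to-extreme-line}. This gives \(\ge 655440000+63628\cdot 4097\). Then \(d\ge 7924.6+0.769\cdot 4097>11075\), by the same constants already certified there, after reflecting \(p\) through the \(x\)-axis. This last point is the key simplification: reflecting the configuration shows that \(d(p,L')=d(\bar p,L)\), where \(\bar p=(p_x,-p_y)\) lies on a non-extreme hull edge with \(y<0\) and \(|y|\ge 4097\). Part (ii) of Observation~\ref{obs:dist-to-extreme-line} then gives the bound directly.

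The main obstacle, which is minor, is making sure the reflection argument applies. The short edge is a non-extreme hull edge, so its mirror is one too, and part (ii) of the observation applies verbatim. Nothing else requires care, and \(11075>11000\) finishes the proof with strict inequality.
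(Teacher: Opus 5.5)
Your proof is correct and is essentially the paper's argument: bound \(d(w,p)\) below by the distance from \(p\) to the lower extreme line \(L'\), reflect \(p\) across the \(x\)-axis onto the lower short edge, and apply Observation~\ref{obs:dist-to-extreme-line}(ii) to get \(d(w,p)\ge 7924.6+0.769\cdot 4097>11000\). You make explicit the step \(d(w,p)\ge d(p,L')\), which the paper leaves implicit, and you correctly use \(p_y\ge 4097\) so the endpoint is covered; the opening convexity and vertical-separation remarks and the dangling ``\(67010\cdot\ldots\)'' are unused and can be deleted.
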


\begin{proof}
Reflect \(p\) to \(p'=(p_x,-p_y)\). Then distance to the lower extreme edge equals distance from \(p'\) to the upper extreme edge.

Since \(p_y>4097\), we have \(p'_y<0\). By Observation~\ref{obs:dist-to-extreme-line}(ii),
\[
d(p',L)\ge 7924.6+0.769\,p_y>11000.
\]
\end{proof}

%--------------------------------------------------
\subsubsection{Separation from far points on the upper extreme edge}
\label{app:upper-bound}

\begin{lemma}\label{lem:upper-separation}
Let \(q\) lie on the upper extreme edge and satisfy
\[
d(q,b)\ge 11000.
\]
Then for every \(p\) on the short edge,
\[
d(p,q)>9000.
\]
\end{lemma}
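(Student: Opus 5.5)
Since $p$ lies on the short edge $Ab$, whose upper endpoint is $b$, the plan is to use the triangle inequality:
\[
d(p,q)\ \ge\ d(q,b)-d(p,b)\ \ge\ 11000-d(p,b).
\]
The claim then reduces to bounding the length of the short edge. Every point $p$ on $Ab$ satisfies $d(p,b)\le d(A,b)$. So it suffices to show $d(A,b)<2000$.

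The length of the short edge is elementary to compute. The displacement from $A=(4084/5,4097)$ to $b=(8179/5,8193)$ is $(819,4096)$, as in the parametrization used in Lemma~\ref{lem:replace-p-by-a}. Alternatively, Observation~\ref{obs:y-difference} gives
\[
d(A,b)\le 1.03\cdot 4096<4220.
\]
Directly, $\sqrt{819^2+4096^2}<4200$.

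This shows the naive bound $d(A,b)<2000$ is false. The triangle inequality alone only yields
\[
d(p,q)>11000-4200=6800,
\]
which falls short of $9000$. So the argument must use the geometry more carefully. This is the main obstacle.

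The refined plan is to exploit the angle at $b$ between the short edge and the extreme edge. Write $q=b+s\,\hat u$, where $\hat u$ is the unit vector from $b$ toward $e=(16000,0)$ and $s=d(q,b)\ge 11000$. Write $p=b-t\,\hat w$, where $\hat w$ is the unit vector along $(819,4096)$ and $0\le t\le d(A,b)$. Then
\[
d(p,q)^2=s^2+t^2+2st\cos\theta,
\]
where $\cos\theta=\hat u\cdot\hat w$. Here $\hat u\propto(71821,-40965)$ and $\hat w\propto(819,4096)$. The dot product $71821\cdot819-40965\cdot4096$ is negative, about $-1.09\times10^{8}$. After normalizing, $\cos\theta\approx -0.20$.

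Thus
\[
d(p,q)^2\ge s^2-0.4\,st+t^2,
\]
viewed as a quadratic in $t$ on $[0,4200]$. It is minimized either at $t=0.2s$, giving $s^2(1-0.04)$, so $d(p,q)\ge 0.98\,s\ge 10770$, or at an endpoint. In both cases $d(p,q)>9000$.

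To make this rigorous I would bound $\cos\theta\ge -0.25$ with an exact integer check, as in the proof of Observation~\ref{obs:dist-to-extreme-line}. Then $d(p,q)^2\ge s^2(1-1/16)$, so $d(p,q)\ge 0.968\,s>10600>9000$.

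An alternative route avoids the angle computation. Observe that $d(p,q)$ is at least the distance from $q$ to the line through $A$ and $b$. Since $q$ lies on a line making angle $\theta$ with it, that distance equals $s\sin\theta$. With $\sin\theta\approx0.98$, this gives the claim immediately. I would pick whichever of the two routes has the cleaner exact arithmetic check.
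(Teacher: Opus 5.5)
Your refined strategy, bounding $d(p,q)$ from below through the angle $\theta$ between the short edge and the extreme edge, is sound. However, the numerical value you build it on is wrong, and the certificate you propose would fail. Since $\|(71821,-40965)\|^2=6836387266$ and $\|(819,4096)\|^2=17447977$,
\[
\cos\theta=\frac{-108971241}{\sqrt{6836387266\cdot 17447977}}\approx\frac{-1.0897\times 10^{8}}{3.4537\times 10^{8}}\approx -0.3155 .
\]
This is not $-0.20$: you actually have $\theta\approx 108.4^\circ$ and $\sin\theta\approx 0.949$, not $0.98$. In particular, $\cos\theta\ge -1/4$ is false. Its integer check would require $16\cdot 108971241^2\le 6836387266\cdot 17447977$, but the left side is about $1.90\times 10^{17}$ and the right side about $1.19\times 10^{17}$. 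So the step giving $d(p,q)^2\ge \tfrac{15}{16}s^2$ does not go through as written, and your constants $0.98$, $10770$ and $10600$ are unsupported.

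The repair is immediate because there is plenty of slack. Complete the square:
\[
d(p,q)^2=(t+s\cos\theta)^2+s^2\sin^2\theta\ \ge\ s^2\sin^2\theta .
\]
This holds for every $t$, so you need no case analysis on where the minimizer lies. Since $s\ge 11000$, it suffices that $\sin\theta>9/11$, i.e.\ $\cos^2\theta<40/121$. For example, $\cos\theta\ge -\tfrac12$ follows from $4\cdot 108971241^2<6836387266\cdot 17447977$ (about $4.75\times 10^{16}$ versus $1.19\times 10^{17}$). It gives $d(p,q)\ge \tfrac{\sqrt3}{2}\,s>9500$.

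With this correction your argument is a valid proof, but it uses a different projection from the paper's. The paper projects onto the $x$-axis rather than onto the normal of the short edge. Every $p$ on the short edge has $p_x\le b_x$. Moving distance $s\ge 11000$ from $b$ along the extreme edge, whose horizontal extent is $14364.2$ and length is less than $17000$, increases $x$ by more than $11000\cdot 14364.2/17000>9000$. So the horizontal gap alone already exceeds $9000$. That route needs only the crude bound $d(b,e)<17000$ and no angle between two specific edges, which is exactly where your slip occurred. Your route gives a sharper constant (about $0.949\,s\approx 10400$), which the lemma does not need.
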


\begin{proof}
The upper extreme edge is the segment from
\[
b=\left(\frac{8179}{5},8193\right)
\]
to
\[
e=(16000,0).
\]
Its horizontal displacement is
\[
16000-\frac{8179}{5}=14364.2,
\]
and its total length is
\[
d(b,e)
=
\sqrt{14364.2^2+8193^2}
<
17000.
\]

Therefore, moving distance \(11000\) along this segment increases the
\(x\)-coordinate by at least
\[
11000\cdot \frac{14364.2}{17000}
>
9000.
\]
Hence any point \(q\) on the segment satisfying
\[
d(q,b)\ge 11000
\]
must have \(x\)-coordinate at least \(9000\) larger than that of \(b\).

Now every point \(p\) on the short edge
\[
(4084/5,4097)\text{--}(8179/5,8193)
\]
has \(x\)-coordinate at most that of \(b\). Thus the horizontal separation
between \(p\) and \(q\) is greater than \(9000\).

Since Euclidean distance is at least horizontal separation,
\[
d(p,q)>9000.
\]
\end{proof}

%--------------------------------------------------
\subsubsection{Extreme-edge configuration}
\label{app:extreme-case}

\begin{lemma}\label{lem:extreme-case}
If all neighbors of \(p\) (except possibly one \(q\)) lie on the lower extreme edge, then this configuration is not optimal.
\end{lemma}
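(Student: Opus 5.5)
We argue by explicit rerouting, as in Case~1 of the proof of Lemma~\ref{lem:no-short-edge}. Let $S:=N(p)\setminus\{a,b,q\}$, so every point of $S$ lies on the lower extreme edge $b'e$, where $b'=(8179/5,-8193)$. By planarity the neighbors of $p$ appear in angular order around $p$ as $b$, then possibly $q$ on $be$, then the points of $S$ ordered along $b'e$ from the one nearest $e$ to the one nearest $b'$, and finally $a$. Consecutive neighbors bound triangles of the current triangulation. We split into two cases according to whether $q=e$.

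\emph{Case $q=e$.} Delete $p$ and its edges to $a$, $b$, and $S$. The edge $pe$ is replaced by $ae$; the triangle $pbe$ becomes part of the quadrilateral $abe$, which is triangulated by $ae$ because $a,b$ lie on one hull edge. Each $s\in S$ is connected to $a$. We claim the star of $a$ then triangulates the old star of $p$. This holds because $a$ lies on the same hull edge as $p$ and just below it, so every segment $as$ stays inside $P$ and the fan order is preserved. By Lemma~\ref{lem:replace-p-by-a}, $d(a,s)\le d(p,s)$ for every $s\in S$, and also $d(a,e)\le d(p,e)$. The removed edges $pa$ and $pb$ are boundary pieces that merge into the hull edge, so they cost nothing; hence the cost does not increase. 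For strict decrease we use the edge $pb$: a fan from $a$ needs no diagonal $ab$, whereas before the diagonal $pe$ was present and is replaced by $ae$. To get a strict gain we use $d(a,e)<d(p,e)$, which follows from the strict inequality $\phi'>0$ in Lemma~\ref{lem:replace-p-by-a}.

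\emph{Case $q\neq e$.} Here $q$ is a Steiner point on $be$. Delete both $p$ and $q$, connect $S$ to $a$, and add a single diagonal. Either $ae$ or $a s_1$ closes the region between $a$, $b$, the extreme edge, and the first point $s_1\in S$; removing $q$ also merges the pieces of $be$. Deleting $q$ saves the edges $pq$ and $q s_1$ (and any further edges of $q$, which can only help). The saved length is at least $d(p,q)$ plus the length of one long edge from $q$ across the polygon. We use Lemma~\ref{lem:lower-distance}: the edges from $p$ to the lower extreme edge have length greater than $11000$. We also use Lemma~\ref{lem:upper-separation}: if $q$ is far along $be$, then $d(p,q)>9000$. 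These bounds let the savings dominate the added diagonal, whose length is below $17000$ by Observation~\ref{obs:extreme-distance}. If instead $q$ is close to $b$ (within $11000$), then $q$ is itself far from the lower extreme edge. In that case we compare $d(q,s_1)+d(p,q)$ with $d(a,s_1)$ or $d(a,e)$ using the triangle inequality through $p$, which gives a strict gain.

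The main obstacle is the $q\neq e$ case: we must check that exactly one new diagonal restores a triangulation, and we must balance its length against the edges removed at $q$. The distance lemmas just proved are what make this work.
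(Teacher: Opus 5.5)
\paragraph{Case $q=e$.} This case is fine. The fan from $a$ is valid, $d(a,s)\le d(p,s)$ for $s\in S$, and strictness follows from $d(a,e)<d(p,e)$, because the proof of Lemma~\ref{lem:replace-p-by-a} checks $x=e$ and the derivative there is strictly positive. The paper instead takes its strict gain from the edge $ps_k$ described below; both routes work.

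\paragraph{Case $q\neq e$.} Here there is a genuine gap: the bounds you list do not pay for the new diagonal.
\begin{itemize}
\item In your scheme every edge $ps$ with $s\in S$ is replaced by $as$. So Lemma~\ref{lem:lower-distance} gives you nothing: you save only the differences $d(p,s)-d(a,s)$, and these are arbitrarily small when $a$ is close to $p$.
\item The edge $qs_1$ need not be long, since $q$ and $s_1$ may both lie near $e$.
\item ``$q$ within $11000$ of $b$'' only keeps $q$ at distance about $4800$ from the lower extreme edge. That is far too little to beat a diagonal of length up to about $16500$.
\item Suppose you close with $bs_1$ and use $d(b,s_1)\le d(b,q)+d(q,s_1)$. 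What is left is $d(p,q)-d(b,q)$. This is negative once $q$ is far from $b$, because $\angle pbq=\angle Abe\approx 72^\circ$ is acute, where $A=(4084/5,4097)$.
\item Your final comparison of $d(q,s_1)+d(p,q)$ with $d(a,s_1)$ or $d(a,e)$ ``through $p$'' is not an inequality that closes the argument.
\end{itemize}
Two smaller points. After connecting $S$ to $a$, the leftover face is the quadrilateral $a,b,q',s_1$, where $q'$ is the next vertex after $q$ on $be$. So the closing diagonal is $aq'$ or $bs_1$, and $q'$ need not be $e$. Also, your remark that further edges of $q$ ``can only help'' is wrong. Deleting them opens a larger face, and that face needs additional diagonals.

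\paragraph{The missing idea.} The paper uses this in both cases. Consecutive neighbors of $p$ bound faces. So the point $s_k\in S$ that immediately precedes $a$ in the rotation around $p$ is already joined to $a$. The edge $ps_k$ therefore disappears with no replacement, an outright gain of $d(p,s_k)>11000$ by Lemma~\ref{lem:lower-distance}.

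With this gain, the paper argues as follows:
\begin{itemize}
\item it closes the face with $bs_1$;
\item it cancels the saved edge $qs_1$ against $d(b,s_1)\le d(b,q)+d(q,s_1)$;
\item this reduces the claim to $d(p,s_k)+d(p,q)>d(b,q)$;
\item it splits at $d(b,q)=11000$, using Lemma~\ref{lem:upper-separation} in the far case.
\end{itemize}

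A simpler repair is not to delete $q$ at all. Delete only $p$, fan from $a$, and replace $pq$ by $aq$. By the triangle inequality this costs at most $d(a,p)<d(A,b)<4200$, and the gain from $ps_k$ covers that. This also avoids retriangulating around $q$'s other neighbors.
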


\begin{proof}
Let
\[
S=\{w_1,\dots,w_k\}
\]
be the neighbors of \(p\) on the lower extreme edge, ordered by height.

We consider two cases depending on whether \(q\) is the extreme vertex.

\medskip
\noindent
\emph{Case 1: \(q\) is not the extreme vertex.}

We remove both \(p\) and \(q\), reconnect all vertices \(w_i\) to \(a\), and add the diagonal \(w_k b\).

By Lemma~\ref{lem:replace-p-by-a}, for every \(i\),
\[
d(w_i,a)<d(w_i,p),
\]
so all these replacements strictly decrease the cost.

It remains to compare the edges involving \(q\). It suffices to show
\[
d(w_1,p)+d(p,q)>d(q,b).
\]
If \(d(q,b)<11000\), then Lemma~\ref{lem:lower-distance} gives
\[
d(w_1,p)>11000>d(q,b).
\]
Otherwise, if \(d(q,b)\ge 11000\), then Lemmas~\ref{lem:lower-distance} and~\ref{lem:upper-separation} give
\[
d(w_1,p)+d(p,q)>20000>d(q,b).
\]
Thus in all cases the total cost strictly decreases.

\medskip
\noindent
\emph{Case 2: \(q\) is the extreme vertex.}

In this case we remove \(p\) only and reconnect all vertices \(w_i\) to \(a\).

Again, for every \(i\),
\[
d(w_i,a)<d(w_i,p),
\]
so these edges strictly decrease the cost.

The only remaining comparison is between the edges incident to \(q\). Since \(w_1\) is already connected to \(a\), it suffices to show
\[
d(w_1,p)+d(p,q)>d(a,q).
\]
Now \(q=(16000,0)\), while both \(a\) and \(p\) lie on non-extreme hull edges.
By Observation~\ref{obs:extreme-distance},
\[
15500<d(a,q),\,d(p,q)<17000.
\]
Also, by Lemma~\ref{lem:lower-distance},
\[
d(w_1,p)>11000.
\]
Therefore
\[
d(w_1,p)+d(p,q)>11000+15500=26500>17000>d(a,q).
\]

Thus the total cost strictly decreases in this case as well.

\medskip
In all cases we obtain a strictly cheaper triangulation, contradicting optimality.
\end{proof}
%--------------------------------------------------
\subsubsection{Mixed configuration}
\label{app:mixed-case}

\begin{lemma}\label{lem:mixed-case}
If \(p\) has at least one neighbor not on an extreme edge, then the configuration is not optimal.
\end{lemma}
\begin{proof}
Let \(S_1\) be the set of neighbors of \(p\) on the left side of the polygon, and let \(S_2\) be the set of neighbors of \(p\) on the lower extreme edge.
By assumption, \(S_1\neq\emptyset\). Also, besides the endpoint \(b\), there is at most one additional neighbor of \(p\) on the upper extreme edge; call it \(q\), if it exists.

Delete \(p\) and reconnect all vertices of \(S_1\cup S_2\) to \(a\).
For every \(y\in S_1\cup S_2\), Lemma~\ref{lem:replace-p-by-a} gives
\[
d(a,y)\le d(p,y),
\]
so these reconnections do not increase the cost.

Let \(x\) be the highest vertex in \(S_1\). Then \(x\) lies below \(a\) on the left boundary, and therefore \(x_x\le a_x\). Hence
\[
|p_y-x_y|\ge |p_y-a_y|,
\qquad
|p_x-x_x|\ge |p_x-a_x|.
\]
Thus
\[
d(p,x)\ge d(p,a).
\]
Moreover, \(x\) is consecutive to \(a\) in the cyclic order around the local region after deleting \(p\), so the edge \(px\) disappears without needing to be replaced by a new edge of the same length. Thus we gain at least \(d(a,p)\).

The only possible loss is if \(q\) exists, in which case \(pq\) is replaced by \(aq\). By the triangle inequality,
\[
d(a,q)-d(p,q)\le d(a,p).
\]
Therefore the gain from deleting \(px\) pays for the only possible loss, while all other reconnections are non-increasing. Hence the new triangulation is strictly cheaper, contradicting optimality.
\end{proof}

\subsection{No one-sided Steiner point on an extreme edge}
\label{app:one-sided-extreme}

\begin{proof}[Proof of Lemma~\ref{lem:no-one-sided-extreme}]
Assume for contradiction that such a point \(p\) exists, and choose \(p\) highest on the extreme edges. Let \(S\) be its neighbors, let
\[
u=(8179/5,8193)\in S,
\]
and let \(v\) be the other endpoint-neighbor of \(p\). Let
\[
w=\arg\min_{x\in S\setminus\{u,v\}} x_y.
\]

We compare the original triangulation with the one obtained by deleting \(p\) and reconnecting all neighbors through \(w\).

\medskip
\noindent\textbf{Case 1: \(w_y>2049\).}

Then every \(x\in S\setminus\{u,v\}\) lies on the upper hull below the short edge. By Lemma~\ref{lem:no-short-edge}, we have \(0<y\le 4097\), and hence by Observation~\ref{obs:dist-to-extreme-line},
\[
d(x,p)\ge d(x,L)\ge 7924.6-0.968\cdot 4097>3958.
\]
Thus
\[
\sum_{x\in S\setminus\{u,v\}} d(x,p)\ge 3958(|S|-2).
\]

After deleting \(p\), we keep only \(w\) among Steiner points on the segment
\[
(2037/5,2049)\text{--}(4084/5,4097),
\]
and reconnect all edges through \(w\). The added cost is at most
\[
d(w,u)\le 1.03(8193-2049)=6378.79.
\]

If \(|S|\ge 4\), then
\[
6378.79<3958(|S|-2),
\]
so the new triangulation is strictly cheaper. If \(|S|\le 3\), removing \(p\) already preserves a triangulation. In both cases we get a contradiction.

\medskip
\noindent\textbf{Case 2: \(w_y<2049\).}

Let \(S_1\subseteq S\) be the set of points on
\[
(2037/5,2049)\text{--}(4084/5,4097),
\]
excluding \((2037/5,2049)\).

For \(x\in S_1\), Observation~\ref{obs:dist-to-extreme-line} gives
\[
d(x,p)\ge 3958,
\]
and for \(x\in S\setminus(S_1\cup\{u,v\})\), since \(y\le 2049\),
\[
d(x,p)\ge 7924.6-0.968\cdot 2049>5940.
\]
Hence
\[
\sum_{x\in S\setminus\{u,v\}} d(x,p)
\ge 3958|S_1|+5940|S\setminus(S_1\cup\{u,v\})|.
\]

Now reroute through \(w\). Since \(w\) has positive \(y\)-coordinate, Observation~\ref{obs:y-difference} gives:
\begin{itemize}
    \item \(d(w,u)\le 1.03\cdot 8193\);
    \item every point of \(S_1\) is at distance at most \(1.03\cdot 4097\) from \(w\);
    \item every point of \(S\setminus(S_1\cup\{u,v\})\) is at distance at most \(1.03\cdot 2049\) from \(w\).
\end{itemize}
Hence the rerouting cost is bounded above by
\[
\bigl(8193+4097|S_1|+2049(|S\setminus(\{u,v\}\cup S_1)|-2)\bigr)\cdot 1.03
\]
if \(|S\setminus(\{u,v\}\cup S_1)|\ge 2\), and by
\[
\bigl(8193+4097(|S_1|-1)\bigr)\cdot 1.03
\]
if \(|S\setminus(\{u,v\}\cup S_1)|\le 1\).

If \(|S\setminus(\{u,v\}\cup S_1)|=1\), then the inequality
\[
3958|S_1|+5940 < (8193+4097(|S_1|-1))\cdot 1.03
\]
simplifies to
\[
6.5<|S_1|.
\]
Thus it can only happen when \(|S_1|\ge 7\). But then we can delete the middle points on the segment and keep only the two outer ones, obtaining a strictly cheaper triangulation, so this is not extremal.

If \(|S\setminus(\{u,v\}\cup S_1)|\ge 2\), then the inequality
\[
\begin{aligned}
&3958|S_1|+5940|S\setminus(S_1\cup\{u,v\})|
\\&<
\bigl(8193+4097|S_1|+2049(|S\setminus(\{u,v\}\cup S_1)|-2)\bigr)\cdot 1.03 
\end{aligned}
\]
cannot hold for \(|S_1|\le 2\), since the left-hand side grows faster in \(|S\setminus(\{u,v\}\cup S_1)|\) than the right-hand side. If \(|S_1|>2\), we again remove middle points on the segment, reducing to \(|S_1|\le 2\).

This contradiction completes the proof.
\end{proof}

\subsection{Technical details for Lemma~\ref{lem:no-extreme-edge}}
\label{app:no-extreme-edge}

\paragraph{Step 1: Structural reductions.}\label{app:no-extreme-edge-reductions}

We keep the notation from the proof of Lemma~\ref{lem:no-extreme-edge}. Thus \(p\) lies on the upper extreme edge, \(S\) is the set of neighbors of \(p\) other than \((16000,0)\), and
\[
\begin{aligned}
S=&S_1\dot\cup S_2,
\qquad\\
S_1=\{s\in S:s_y\geq 0\},
&\qquad
S_2=\{s\in S:s_y<0\}.
\end{aligned}
\]
Let \(u\in S\) be the highest point, let \(w_1\in S_1\) be the lowest nonnegative point, and let \(w_2\in S_2\) be the highest negative point.

\medskip
\noindent
\emph{Reduction 1: at most two points per original hull edge.}
On each original hull edge we may assume that \(S\) contains at most two points. Indeed, if three or more points of \(S\) lie on the same original hull edge, then in the rerouted triangulation we may delete the middle ones. This can only decrease the new cost, while those points still contribute positively to the lower bound on the old cost. Hence any counterexample yields another counterexample in which each original hull edge contains at most two points of \(S\).

\medskip
\noindent
\emph{Reduction 2: once \(w_1,w_2\) and \(|S_1|,|S_2|\) are fixed, the worst case pushes all other points as high as possible.}
For points of \(S_1\setminus\{u,w_1\}\), increasing the \(y\)-coordinate decreases the lower bound on the old cost. For points of \(S_2\setminus\{w_2\}\), increasing \(|y|\) increases the upper bound on the new cost more than it increases the lower bound on the old cost, because the coefficient in the new-cost bound is larger. Therefore, once \(w_1,w_2\) and the cardinalities \(|S_1|,|S_2|\) are fixed, the worst configuration is obtained by choosing all remaining points with the largest possible \(|y|\).

\medskip
\noindent
\emph{Reduction 3: no additional point with \(|y|\leq 2049\) is useful.}
Suppose \(q\in S\) is different from \(w_1,w_2\) and satisfies \(|q_y|\leq 2049\). If \(q\in S_1\), then its contribution to the lower bound on the old cost is at least
\[
7924.6-0.968\cdot 2049>5500.
\]
If \(q\in S_2\), then its contribution to the lower bound on the old cost is at least
\[
7924.6+0.769\cdot 0>7900.
\]
On the other hand, the contribution of such a point to the upper bound on either rerouted triangulation is at most
\[
1.03\cdot 2049<2111
\]
from the \(\sum |x_y|\)-term, plus at most one additional term of the same order if its presence changes the imbalance \(|S_1|-|S_2|\). Thus in every case its total contribution to the new-cost upper bound is less than \(4222\), far below its contribution to the old-cost lower bound. Therefore, if a bad example existed with such a point \(q\), deleting \(q\) would still leave a bad example. Hence we may assume that every point of \(S\) with \(|y|\leq 2049\) is one of \(w_1,w_2\).

\medskip
\noindent
\emph{Reduction 4: symmetry between the two sides.}
The two upper bounds \eqref{eq:new-w1-main} and \eqref{eq:new-w2-main} are symmetric under reflection across the \(x\)-axis. The lower bound on the old cost is not symmetric, but if \(|S_2|>|S_1|\), then reflecting the whole configuration across the \(x\)-axis decreases the lower bound on the old cost while preserving the same two new-cost upper bounds. Thus, in searching for the worst counterexample, we may assume
\[
|S_1|\ge |S_2|.
\]

\medskip
\noindent
\emph{Reduction 5: restricting the locations of \(w_1\) and \(w_2\).}
We now explain how to restrict the possible locations of \(w_1\) and \(w_2\) once \(|S_1|\) and \(|S_2|\) are fixed.

For the old-cost lower bound, moving \(w_1\) upward always helps the counterexample, since it decreases the term
\[
7924.6-0.968\,w_{1,y}.
\]
Likewise, moving \(w_2\) upward always helps the counterexample, since it decreases the term
\[
7924.6+0.769\,|w_{2,y}|.
\]

For the new-cost upper bounds, the effect depends on the sign of
\[
|S_1|-|S_2|-1.
\]

\smallskip
\noindent
\emph{Case 5a: \(|S_1|-|S_2|-1=0\).}

In this case \(w_1\) does not appear in the upper bound on \(\mathrm{New}(w_1)\). Hence changing \(w_1\) does not affect \(\mathrm{New}(w_1)\), while increasing \(w_{1,y}\) decreases the lower bound on the old cost. Therefore, subject to the previous reductions, the worst case is obtained by taking \(w_1\) as high as possible.

For \(w_2\), increasing \(|w_{2,y}|\) increases \(\mathrm{New}(w_2)\) with a larger coefficient than it changes the old-cost lower bound. Therefore, as long as
\[
\mathrm{New}(w_2)\le \mathrm{New}(w_1),
\]
increasing \(|w_{2,y}|\) makes the counterexample worse. Once
\[
\mathrm{New}(w_2)\ge \mathrm{New}(w_1),
\]
further increasing \(|w_{2,y}|\) does not help, because then \(\mathrm{New}(w_1)\) already determines
\[
\min\{\mathrm{New}(w_1),\mathrm{New}(w_2)\}.
\]
Thus, after fixing \(w_1\), the worst choice of \(w_2\) is the largest value of \(|w_{2,y}|\) for which
\[
\mathrm{New}(w_2)\le \mathrm{New}(w_1),
\]
unless \(w_2\) reaches one of the extremal locations forced by the previous reductions earlier.

\smallskip
\noindent
\emph{Case 5b: \(|S_1|-|S_2|-1=-1\).}

In this case the coefficient of \(|w_{1,y}|\) in \(\mathrm{New}(w_1)\) is positive. Hence increasing \(w_{1,y}\) increases \(\mathrm{New}(w_1)\), while at the same time decreasing the lower bound on the old cost. So both effects move in the same direction, and the worst case is obtained by taking \(w_1\) as high as possible.

For \(w_2\), the situation is the same as in Case 5a: increasing \(|w_{2,y}|\) increases \(\mathrm{New}(w_2)\) faster than it changes the old-cost lower bound. Therefore, as long as
\[
\mathrm{New}(w_2)\le \mathrm{New}(w_1),
\]
it is helpful to increase \(|w_{2,y}|\). Once
\[
\mathrm{New}(w_2)\ge \mathrm{New}(w_1),
\]
further increasing \(|w_{2,y}|\) does not help. Thus, after fixing \(w_1\), the worst choice of \(w_2\) is again the largest value of \(|w_{2,y}|\) for which
\[
\mathrm{New}(w_2)\le \mathrm{New}(w_1),
\]
unless the previous reductions force an endpoint earlier.

\smallskip
\noindent
\emph{Case 5c: \(|S_1|-|S_2|-1>0\).}

This is the only genuinely coupled case. Here increasing \(w_{1,y}\) decreases \(\mathrm{New}(w_1)\) with a larger coefficient than it changes the old-cost lower bound. Therefore:
\begin{itemize}
    \item if
    \[
    \mathrm{New}(w_1)<\mathrm{New}(w_2),
    \]
    then decreasing \(w_{1,y}\) makes \(\mathrm{New}(w_1)\) larger and therefore makes the counterexample worse;
    \item if
    \[
    \mathrm{New}(w_1)>\mathrm{New}(w_2),
    \]
    then increasing \(w_{1,y}\) still helps, because in that regime \(\mathrm{New}(w_2)\) already determines
    \[
    \min\{\mathrm{New}(w_1),\mathrm{New}(w_2)\},
    \]
    and increasing \(w_{1,y}\) only decreases the lower bound on the old cost.
\end{itemize}

For \(w_2\), increasing \(|w_{2,y}|\) increases \(\mathrm{New}(w_2)\) faster than it changes the old-cost lower bound. Therefore:
\begin{itemize}
    \item if
    \[
    \mathrm{New}(w_2)<\mathrm{New}(w_1),
    \]
    then increasing \(|w_{2,y}|\) makes the counterexample worse;
    \item if
    \[
    \mathrm{New}(w_1)<\mathrm{New}(w_2),
    \]
    then decreasing \(|w_{2,y}|\) helps, because in that regime \(\mathrm{New}(w_1)\) already determines
    \[
    \min\{\mathrm{New}(w_1),\mathrm{New}(w_2)\},
    \]
    and decreasing \(|w_{2,y}|\) only decreases the lower bound on the old cost.
\end{itemize}

Thus in this case \(w_1\) and \(w_2\) must be chosen together. However, once one of them is fixed, the other is determined monotonically: for example, after fixing \(w_1\), if
\[
\mathrm{New}(w_2=(0,0))<\mathrm{New}(w_1),
\]
then we increase \(|w_{2,y}|\) until either
\[
\mathrm{New}(w_2)\ge \mathrm{New}(w_1)
\]
or \(w_2\) reaches the extremal location allowed by the previous reductions. The same monotonic argument applies if \(w_2\) is fixed first and \(w_1\) is then varied.

\medskip
Therefore, after Reductions 1--5, only finitely many extremal or threshold configurations remain.

\paragraph{Step 2: Remaining extremal configurations.}
\label{app:extreme-edge-cases}

Let
\[
\begin{aligned}
u&=(8179/5,8193), \quad
v=(8179/5,-8193),\\
A&=(4084/5,4097), \quad
B=(4084/5,-4097).
\end{aligned}
\]

By the previous reductions, all points of \(S\) except \(w_1,w_2\) are fixed. 
Thus the worst candidate for a given pair \((|S_1|,|S_2|)\) has the form
\[
\begin{aligned}
S_1&=
\begin{cases}
\{u\},\\
\{u,w_1\},\\
\{u,A,w_1\},\\
\{u,A,A,w_1\},
\end{cases}
\quad
S_2=
\begin{cases}
\{w_2\},\\
\{v,w_2\},\\
\{v,B,w_2\},\\
\{v,B,B,w_2\}.
\end{cases}
\end{aligned}
\]

%--------------------------------------------------
\medskip
\noindent
\textbf{Case \((1,1)\).}
\[
S_1=\{u\},\quad S_2=\{w_2\},\quad w_1=u.
\]

\smallskip
\emph{Old cost.}
\[
\mathrm{Old}
=
7924.6 + 0.769\,|w_{2,y}|.
\]

\smallskip
\emph{New cost.}
\[
\mathrm{New}(w_1)=1.03\cdot 8193,\qquad
\mathrm{New}(w_2)=1.03\,|w_{2,y}|.
\]

Maximizing gives \(w_2=v\), hence
\[
\mathrm{Old}
=
7924.6 + 0.769\cdot 8193
>
1.03\cdot 8193
=
\mathrm{New}(w_1).
\]

%--------------------------------------------------
\medskip
\noindent
\textbf{Case \((2,2)\).}
\[
S_1=\{u,w_1\},\quad S_2=\{v,w_2\},\quad w_1=A.
\]

\smallskip
\emph{New cost.}
\[
\begin{aligned}
\mathrm{New}(w_1)
&=
16000 + 1.03\cdot 8193 + 1.03\cdot 4097\\
&= 28658.7.
\end{aligned}
\]

This forces \(w_2=B\).

\smallskip
\emph{Old cost.}
\[
\begin{aligned}
\mathrm{Old}
&=
(7924.6-0.968\cdot4097)\\
&\quad + (7924.6+0.769\cdot8193)\\
&\quad + (7924.6+0.769\cdot4097)\\
&=29257.914.
\end{aligned}
\]

Thus
\[
28658.7 < 29257.914.
\]

%--------------------------------------------------
\medskip
\noindent
\textbf{Case \((3,1)\).}
\[
S_1=\{u,A,w_1\},\quad S_2=\{w_2\}.
\]

Worst case:
\[
w_1=(2037/5,2049),\quad
w_2=(2037/5,-2049).
\]

\smallskip
\emph{Old cost.}
\[
\begin{aligned}
\mathrm{Old}
&=
(7924.6-0.968\cdot4097)\\
&\quad + (7924.6-0.968\cdot2049)\\
&\quad + (7924.6+0.769\cdot2049)\\
&=19400.153.
\end{aligned}
\]

\smallskip
\emph{New cost.}
\[
\begin{aligned}
\mathrm{New}(w_1)
&=16000 + 1.03\cdot8193 - 1.03\cdot2049\\
&=22328.32,\\[2mm]
\mathrm{New}(w_2)
&=1.03\cdot4097 + 3.09\cdot2049\\
&=10551.32.
\end{aligned}
\]

Thus
\[
10551.32 < 19400.153.
\]

%--------------------------------------------------
\medskip
\noindent
\textbf{Case \((4,1)\).}

\[
S_1=\{u,A,A,w_1\},\quad S_2=\{w_2\}.
\]

Worst case:
\[
w_1=(2037/5,2049),\quad
w_2=(2037/5,-2049).
\]

\smallskip
\emph{Old cost.}
\[
\begin{aligned}
\mathrm{Old}
&=
(7924.6-0.968\cdot4097)\times 2\\
&\quad + (7924.6-0.968\cdot2049)\\
&\quad + (7924.6+0.769\cdot2049)\\
&=23358.857.
\end{aligned}
\]

\smallskip
\emph{New cost.}
\[
\begin{aligned}
\mathrm{New}(w_1)
&=16000 + 1.03(8193+4097) - 2.06\cdot2049\\
&=24437.76,\\[2mm]
\mathrm{New}(w_2)
&=1.03(4097+4097) + 4.12\cdot2049\\
&=16881.7.
\end{aligned}
\]

Thus
\[
16881.7 < 23358.857.
\]

%--------------------------------------------------
\medskip
\noindent
\textbf{Case \((4,2)\).}

\[
S_1=\{u,A,A,w_1\},\quad S_2=\{v,w_2\}.
\]

Worst case gives
\[
\mathrm{New}=32878.61.
\]

\smallskip
\emph{Old cost (lower bound).}
\[
\begin{aligned}
\mathrm{Old}
&\ge
(7924.6-0.968\cdot4097)\times 2\\
&\quad + (7924.6-0.968\cdot2049)\\
&\quad + (7924.6+0.769\cdot8193)\\
&=35998.193.
\end{aligned}
\]

Thus
\[
32878.61 < 35998.193.
\]

%--------------------------------------------------
\medskip
\noindent
\textbf{Case \((4,3)\).}

Worst case gives
\[
\mathrm{New}=41317.4.
\]

\smallskip
\emph{Old cost.}
\[
\mathrm{Old}=47083.386.
\]

Thus
\[
41317.4 < 47083.386.
\]

%--------------------------------------------------
\medskip
\noindent
\textbf{Case \((4,4)\).}

Worst case gives
\[
\mathrm{New}=47647.78.
\]

\smallskip
\emph{Old cost.}
\[
\mathrm{Old}=59734.26.
\]

Thus
\[
47647.78 < 59734.26.
\]

\medskip
In all cases,
\[
\min\{\mathrm{New}(w_1),\mathrm{New}(w_2)\}
<
\mathrm{Old},
\]
contradicting optimality.

\subsection{ Details for boundary Steiner points}
\label{app:boundary-details}

\subsubsection{Two Steiner points on one non-extreme edge}
\label{app:two-point-edge}

\begin{proof}[Details for Lemma~\ref{lem:one-point-per-edge}]
Assume two Steiner points \(p,q\) lie on the same non-extreme hull edge
\((u,v)\), with \(p\) between \(u\) and \(q\). Let
\[
e=(16000,0)
\]
be the extreme vertex.

We first record the easy deletion cases. If \(e\notin N[p]\) and every vertex of
\[
N[p]\setminus\{u,q\}
\]
has \(y\)-coordinate larger than \(p_y\), then all these vertices lie on the left side of the polygon and are closer to \(u\) than to \(p\). Hence deleting \(p\) and reconnecting those vertices to \(u\) strictly decreases the total length. Similarly, if \(e\notin N[q]\) and every vertex of
\[
N[q]\setminus\{v,p\}
\]
has \(y\)-coordinate smaller than \(q_y\), then deleting \(q\) and reconnecting those vertices to \(v\) strictly decreases the total length.

We now explain why, if neither deletion case applies, then \(e\) must be adjacent to both \(p\) and \(q\).

Suppose first that \(e\notin N[p]\). Since the deletion case for \(p\) does not apply, \(p\) must have a neighbor below the edge \((u,v)\). By planarity, such an edge incident to \(p\) blocks \(q\) from having any neighbor above the edge \((u,v)\), except possibly \(p\) itself. Therefore all vertices of
\[
N[q]\setminus\{v,p\}
\]
lie below \(q\). If \(e\notin N[q]\), then the deletion case for \(q\) applies, a contradiction. Hence \(e\in N[q]\).

By the same argument with \(p\) and \(q\) reversed, if \(e\notin N[q]\), then \(e\in N[p]\). Therefore, unless both deletion cases apply, at least one of \(p,q\) is adjacent to \(e\).

It remains to rule out the possibility that exactly one of them is adjacent to \(e\). Suppose, for example, that
\[
e\in N[p],
\qquad
e\notin N[q].
\]
If \(p\) has a neighbor below the edge \((u,v)\), then this edge blocks \(q\) from having any neighbor above the edge. Since \(e\notin N[q]\), the deletion case for \(q\) applies. Thus \(p\) has no neighbor below the edge except possibly \(q\). But then, for the region incident to \(p\) and the extreme vertex to be triangulated, the edge \(eq\) must also be present. Hence \(e\in N[q]\), a contradiction. The symmetric case is identical.

Therefore the only remaining case is
\[
e\in N[p]\cap N[q].
\]

In this case every point of
\[
N[p]\setminus\{u,q,e\}
\]
lies above \(p\) on the left side of the hull, and every point of
\[
N[q]\setminus\{v,p,e\}
\]
lies below \(q\) on the left side of the hull.

Let \(w_1\) be the highest neighbor of \(p\) other than \(e\), and let \(w_2\) be the lowest neighbor of \(q\) other than \(e\). We delete \(p,q\), reconnect every vertex of
\[
N[p]\setminus\{u,q,e\}
\]
to \(u\), reconnect every vertex of
\[
N[q]\setminus\{v,p,e\}
\]
to \(v\), and add the two diagonals
\[
(w_1,w_2),
\qquad
\text{and the shorter of }(w_1,v),(w_2,u).
\]
The reconnecting edges are strictly shorter than the old ones. Thus it remains to compare
\[
d(e,p)+d(e,q)
\]
with
\[
d(w_1,w_2)+\min\{d(w_1,v),d(w_2,u)\}.
\]

Since \(p,q\) lie on the left side of the polygon, Observation~\ref{obs:extreme-distance} gives
\[
d(e,p)>15500,
\qquad
d(e,q)>15500.
\]
Hence
\[
d(e,p)+d(e,q)>31000.
\]

On the other hand, Observation~\ref{obs:y-difference} gives
\[
d(w_1,w_2)
\le
1.03\cdot 2\cdot 8193.
\]
Also, \(u,v\) have the same sign, while \(w_1\) lies above them and \(w_2\) lies below them. Therefore at least one of \(w_1,w_2\) has the same sign as \(u,v\), and hence
\[
\min\{d(w_1,v),d(w_2,u)\}
\le
1.03\cdot 8193.
\]
Thus
\[
\begin{aligned}
&d(w_1,w_2)
+\min\{d(w_1,v),d(w_2,u)\}\\
&\qquad
\le
1.03\cdot 3\cdot 8193
<
31000.
\end{aligned}
\]
This contradicts optimality. Hence one non-extreme hull edge cannot contain two boundary Steiner points.
\end{proof}
\subsubsection{Rerouting gain for upper neighbors}
\label{app:reroute-gain}

\begin{proof}[Proof of Lemma~\ref{lem:upper-reroute-gain}]
Let
\[
\theta:=\angle xup.
\]

Write
\[
\overrightarrow{ux}=(\alpha,b),
\qquad
\overrightarrow{up}=(\beta,-d),
\]
where \(b,d\ge 0\). The signs follow from \(x_y>u_y\) and \(p_y<u_y\).

\medskip
\noindent
\textbf{Step 1: Bounding the dot product.}

On the left side of the hull, all edges satisfy the slope bound
\[
|x\text{-difference}|\le \frac{1}{5}\cdot |y\text{-difference}|.
\]
Hence
\[
|\alpha|\le \frac b5,
\qquad
|\beta|\le \frac d5.
\]
Therefore
\[
\begin{aligned}
\overrightarrow{ux}\cdot\overrightarrow{up}
&=
\alpha\beta-bd\\
&\le
|\alpha|\,|\beta|-bd\\
&\le
\frac{bd}{25}-bd\\
&=
-\frac{24}{25}bd.
\end{aligned}
\]

\medskip
\noindent
\textbf{Step 2: Bounding the norms.}

We have
\[
\|\overrightarrow{ux}\|
\le
\sqrt{b^2+\frac{b^2}{25}}
=
\frac{\sqrt{26}}5\,b,
\]
and similarly
\[
\|\overrightarrow{up}\|
\le
\frac{\sqrt{26}}5\,d.
\]

\medskip
\noindent
\textbf{Step 3: Bounding the angle.}

Combining the above,
\[
\begin{aligned}
\cos\theta
&=
\frac{\overrightarrow{ux}\cdot\overrightarrow{up}}
{\|\overrightarrow{ux}\|\,\|\overrightarrow{up}\|}\\
&\le
\frac{-\frac{24}{25}bd}
{\left(\frac{\sqrt{26}}5 b\right)
 \left(\frac{\sqrt{26}}5 d\right)}\\
&=
-\frac{12}{13}.
\end{aligned}
\]

\medskip
\noindent
\textbf{Step 4: Converting angle to distance gain.}

Let
\[
a_0:=d(x,u),
\qquad
b_0:=d(u,p).
\]
By the law of cosines,
\[
d(x,p)^2
=
a_0^2+b_0^2-2a_0b_0\cos\theta.
\]
Moreover,
\[
d(x,p)^2-(a_0-b_0\cos\theta)^2
=
b_0^2(1-\cos^2\theta)\ge 0.
\]
Thus
\[
d(x,p)\ge a_0-b_0\cos\theta.
\]
Hence
\[
\begin{aligned}
d(x,p)-d(x,u)
&\ge -b_0\cos\theta\\
&\ge \frac{12}{13}b_0\\
&=
\frac{12}{13}\,d(u,p).
\end{aligned}
\]

\medskip
This completes the proof.
\end{proof}

\subsubsection{Bounding the number of neighbors on each side}
\label{app:neighbor-bound}

\begin{claim}\label{clm:min-side-size}
Let \(p\) be a boundary Steiner point on a non-extreme hull edge \(uv\).
Let \(S_1\) be the set of neighbors of \(p\) above \(u\), and let \(S_2\)
be the set of neighbors of \(p\) below \(v\), excluding \(u,v\), and the
extreme vertex \(e=(16000,0)\). Then
\[
\min\{|S_1|,|S_2|\}\le 13.
\]
\end{claim}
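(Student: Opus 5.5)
The plan is a counting argument. It bounds $|S_1|+|S_2|$ by the number of original hull vertices on the left part of the polygon, and it uses the structural lemmas already proved to control the boundary Steiner points that can be neighbors of $p$.

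\emph{Step 1: where the neighbors of $p$ can lie.} By Lemma~\ref{lem:no-extreme-edge} there are no Steiner points on the extreme edges. So every neighbor of $p$ other than $e$ is either an original vertex among $v_0,\dots,v_{13},v_{15},\dots,v_{27}$ or a boundary Steiner point on a non-extreme hull edge. By Lemma~\ref{lem:no-short-edge}, no such Steiner point lies on one of the two short edges adjacent to the extreme edges. By Lemmas~\ref{lem:one-point-per-edge} and~\ref{lem:no-consecutive}, every non-extreme edge carries at most one Steiner point, and no two consecutive non-extreme edges both carry one. In particular the two edges adjacent to $uv$ carry none.

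\emph{Step 2: charging each Steiner neighbor to a missing original neighbor.} Let $s\neq p$ be a Steiner neighbor of $p$, lying on the hull edge $u'v'$. I claim that $p$ is not adjacent to both $u'$ and $v'$. Suppose it were. Then $p$'s edges to $u'$, $s$ and $v'$, together with the hull edges $u's$ and $sv'$, bound two triangles of the triangulation. Planarity then forbids any edge from $s$ leaving these triangles. Hence the only neighbors of $s$ besides $u',v'$ would be $p$, contradicting Lemma~\ref{lem:degree-lower-bound}, which requires at least three neighbors besides the endpoints.

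So to each Steiner neighbor $s$ I assign an endpoint of its edge that is \emph{not} a neighbor of $p$. This assignment is injective. An original vertex $w$ could be charged twice only by Steiner points on both hull edges incident to $w$, and those two edges are consecutive, which Lemma~\ref{lem:no-consecutive} forbids. (If the edges incident to $w$ include a short or extreme edge, Step 1 already excludes a Steiner point there.)

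\emph{Step 3: conclude.} Let $V$ be the set of original vertices of the left part other than $u,v$, so $|V|\le 25$. Every neighbor of $p$ in $S_1\cup S_2$ is either an original vertex of $V$ adjacent to $p$ or a Steiner point. The Steiner points inject into the originals of $V$ that are \emph{not} adjacent to $p$. Hence $|S_1|+|S_2|\le |V|\le 25$. (If $u$ or $v$ is charged, the count only improves.) Therefore $\min\{|S_1|,|S_2|\}\le 12\le 13$.

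The main obstacle I expect is Step 2. The planarity claim, that adjacency of $p$ to $u'$, $s$ and $v'$ traps $s$ inside two triangles, needs a careful statement. One must also check that the charged endpoint lies on the same side ($S_1$ or $S_2$) or at least inside $V$, that $p$'s own edge does not interfere, and the edge case where $s$'s edge is incident to $v_0$. If the global sum bound turns out delicate, I would apply the same charging to each side separately, using the chain from $u$ up to $v_{13}$ and the chain from $v$ through $v_0$ down to $v_{15}$. That gives $|S_1|+|S_2|\le 25$ directly.
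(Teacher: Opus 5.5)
\paragraph{Gap: circular use of Lemma~\ref{lem:no-consecutive}.}
Your charging scheme is the same idea as the paper's: each Steiner neighbor of $p$ blocks an endpoint of its hull edge, and blocked vertices must not be double-counted. Your trapping argument in the first half of Step~2 is also sound. The problem is a circular citation. You use Lemma~\ref{lem:no-consecutive} twice: to say the edges next to $uv$ carry no Steiner point, and to make the charging injective. In the paper, Lemma~\ref{lem:no-consecutive} is proved via Lemma~\ref{lem:balanced-neighbors-no-extreme} (see its appendix proof). Lemma~\ref{lem:balanced-neighbors-no-extreme} in turn obtains $|S_2|\le 12$ precisely from Claim~\ref{clm:min-side-size}. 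So you cannot use Lemma~\ref{lem:no-consecutive} here. Your other citations (Lemmas~\ref{lem:no-short-edge}, \ref{lem:no-extreme-edge}, \ref{lem:one-point-per-edge}, \ref{lem:degree-lower-bound}) do not depend on the claim and are fine.

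\paragraph{Repair.}
The fix needs only your own trapping argument plus Lemma~\ref{lem:degree-lower-bound}.

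First, drop the assertion about edges adjacent to $uv$; it is unnecessary. By Lemma~\ref{lem:one-point-per-edge}, $u$ and $v$ are neighbors of $p$, so a Steiner neighbor on such an edge is charged to its far endpoint, which lies in $V$. For the same reason your parenthetical ``if $u$ or $v$ is charged, the count only improves'' is moot. It is also backwards: a charge landing outside $V$ would weaken the bound, not improve it.

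Second, make the choice of charge deterministic. Walk the boundary from $u$ to $v$ the long way, avoiding the edge $uv$. Charge a Steiner neighbor $s$ on edge $xy$, with $x$ before $y$, to $y$ if $py$ is not an edge, and otherwise to $x$. A collision at $y$ then forces the following configuration: $s\in xy$ and $s'\in yz$ are both adjacent to $p$, $py$ is not an edge, and $pz$ is an edge. Your trapping argument shows $ps'z$ is a face. The quadrilateral $psys'$ must then be split by $ss'$, since $py$ is not an edge. So $s'$ has only $p$ and $s$ as neighbors besides its endpoints, contradicting Lemma~\ref{lem:degree-lower-bound}.

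This replaces the paper's direct ``overlap'' exchange, which is how the paper avoids Lemma~\ref{lem:no-consecutive}. With this repair, your count $|S_1|+|S_2|\le |V|=25$ is valid, and slightly sharper than the paper's $26$.
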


\begin{proof}
It suffices to show that
\[
|S_1|+|S_2|\le 26.
\]

\medskip
\noindent
\textbf{Step 1: Available original vertices.}

Besides the endpoints \(u,v\), there are exactly \(25\) original hull
vertices on the left side of the convex hull that could potentially be
neighbors of \(p\).

\medskip
\noindent
\textbf{Step 2: Effect of additional Steiner neighbors.}

Suppose \(a\) additional boundary Steiner points are adjacent to \(p\).
We claim that these points eliminate at least \(a-1\) original hull
vertices from being neighbors of \(p\).

Indeed, consider such a Steiner point \(q\) lying on a hull edge
\((x,y)\). If both \(x\) and \(y\) were also adjacent to \(p\), then
\(q\) could be removed and the local triangulation replaced by edges
from \(p\) to \(x\) and \(y\), without increasing the total cost.
Thus each \(q\) blocks at least one endpoint of its hull edge from
being adjacent to \(p\).

The only possible overlap occurs when two Steiner points lie on two
consecutive hull edges, say \((x,y)\) and \((y,z)\), both blocking the
same vertex \(y\). In that configuration, if both \(x\) and \(z\) were
adjacent to \(p\), then the two Steiner points could be removed and
replaced by a direct connection from \(p\) to \(y\), again without
increasing the cost. Hence such overlap cannot occur in an optimal
triangulation.

Therefore \(a\) additional Steiner neighbors eliminate at least \(a-1\)
distinct original hull vertices from the neighborhood of \(p\).

\medskip
\noindent
\textbf{Step 3: Final bound.}

Thus the total number of neighbors of \(p\) (excluding \(u,v,e\)) is at most
\[
|S_1|+|S_2|
\le
25+a-(a-1)
=
26.
\]
Therefore,
\[
\min\{|S_1|,|S_2|\}
\le
\frac{|S_1|+|S_2|}{2}
\le
13.
\]
\end{proof}
\subsubsection{Balanced neighbors when the extreme vertex is absent}
\label{app:balance-proof}

\begin{proof}[Proof of Lemma~\ref{lem:balanced-neighbors-no-extreme}]
We prove that \(|S_1|>|S_2|\) is impossible. The opposite case is symmetric.

Delete \(p\), and reconnect every vertex of \(S_1\cup S_2\) to \(u\). Since \(e\notin N(p)\), all these vertices lie on the left side of the hull.

For each \(x\in S_1\), Lemma~\ref{lem:upper-reroute-gain} gives the strict gain
\[
d(p,x)-d(u,x)
>
\frac{12}{13}d(u,p).
\]
Therefore the total gain from \(S_1\) is strictly larger than
\[
|S_1|\frac{12}{13}d(u,p).
\]

For each \(x\in S_2\), the triangle inequality gives
\[
d(u,x)\le d(u,p)+d(p,x),
\]
so
\[
d(u,x)-d(p,x)\le d(u,p).
\]
Thus the total loss from \(S_2\) is at most
\[
|S_2|d(u,p).
\]

Now assume \(|S_1|>|S_2|\). Since
\[
|S_1|+|S_2|\le 26
\]
by Claim~\ref{clm:min-side-size}, we get
\[
|S_2|\le 12.
\]
Also,
\[
|S_1|\ge |S_2|+1.
\]
Hence
\[
\begin{aligned}
|S_1|\frac{12}{13}
&\ge
(|S_2|+1)\frac{12}{13}  \\
&=
|S_2|+\frac{12-|S_2|}{13}.
\end{aligned}
\]
Since \(|S_2|\le 12\), the right-hand side is at least \(|S_2|\). Moreover, the gain inequality from Lemma~\ref{lem:upper-reroute-gain} is strict, so the total gain is strictly larger than
\[
|S_2|d(u,p).
\]
Thus the total gain strictly dominates the total loss.

Therefore deleting \(p\) and rerouting all edges to \(u\) gives a strictly cheaper triangulation, contradicting optimality. Hence \(|S_1|>|S_2|\) is impossible. By symmetry, \(|S_2|>|S_1|\) is also impossible, and so
\[
|S_1|=|S_2|.
\]
\end{proof}
\subsubsection{Lower degree bound}
\label{app:degree-lower-bound}

\begin{proof}[Proof of Lemma~\ref{lem:degree-lower-bound}]
Suppose a Steiner point \(p\) on a non-extreme hull edge \(uv\) has exactly two other neighbors \(w_1,w_2\).

At least one of \(w_1,w_2\) lies on the left part of the hull, not on an extreme edge. Without loss of generality call it \(w_2\). Since \(w_2\) lies below \(v\), the geometry of the hull gives
\[
d(p,w_2)\ge d(p,v).
\]
For the other neighbor \(w_1\), regardless of whether it lies on the left side or is the extreme vertex, the fact that \(p\in uv\) implies
\[
d(w_1,p)+d(p,v)>d(w_1,v).
\]
Combining,
\[
d(w_1,p)+d(w_2,p)>d(w_1,v).
\]
Thus we may delete \(p\) and replace the two edges \(pw_1,pw_2\) by the single edge \(vw_1\), strictly decreasing the cost. This contradicts optimality.
\end{proof}

\subsubsection{Consecutive non-extreme edges}
\label{app:consecutive-case}

\begin{proof}[Proof of Lemma~\ref{lem:no-consecutive}]
Suppose two consecutive non-extreme hull edges \(uv\) and \(vw\) contain Steiner points \(p_1\in uv\) and \(p_2\in vw\), and suppose \(p_1p_2\) is an edge of the triangulation.

If neither \(p_1\) nor \(p_2\) is adjacent to \(e\), then Lemma~\ref{lem:balanced-neighbors-no-extreme} and Lemma~\ref{lem:degree-lower-bound} imply that each point has at least two neighbors on each side of its supporting edge. But if \(p_1\) has another neighbor on the lower side of \(uv\), that edge blocks \(p_2\) from having another upper neighbor by planarity. The symmetric obstruction holds if \(p_2\) has another upper neighbor. Hence one of the points cannot have the required two-sided degree, a contradiction.

It remains to consider the case where at least one is adjacent to \(e\). Then both must be adjacent to \(e\); otherwise the same blocking argument applies to the one not adjacent to \(e\).

Assume \(p_1\) has larger \(y\)-coordinate, with
\[
p_1\in uv,
\qquad
p_2\in vw.
\]
Let \(h\) be the highest neighbor of \(p_1\) distinct from \(e\), and let \(l\) be the lowest neighbor of \(p_2\) distinct from \(e\).

We delete \(p_1,p_2\). We reconnect all neighbors of \(p_1\) above \(u\) to \(u\), and all neighbors of \(p_2\) below \(w\) to \(w\). These replacement edges are strictly shorter.

The remaining region is the hexagon
\[
h,u,v,w,l,e
\]
in counterclockwise order. We triangulate it by adding
\[
(h,l),\qquad (h,w),\qquad (u,w).
\]

By Observation~\ref{obs:y-difference},
\[
d(h,l)\le 1.03\cdot 2\cdot 8193.
\]
Also,
\[
\begin{aligned}
d(h,w)+d(u,w)
\le{}&
1.03(8193-1025)\\
&+1.03(4097-2049)\\
&+1.03(2049-1025).
\end{aligned}
\]
Thus
\[
\begin{aligned}
&d(h,l)+d(h,w)+d(u,w)\\
\le{}&
1.03\cdot 2\cdot 8193\\
&+1.03(8193-1025)\\
&+1.03(4097-2049)\\
&+1.03(2049-1025)\\
={}&27424.78.
\end{aligned}
\]
By Observation~\ref{obs:extreme-distance},
\[
d(e,p_1)>15500,
\qquad
d(e,p_2)>15500,
\]
hence
\[
d(e,p_1)+d(e,p_2)>31000.
\]

Therefore the new triangulation is strictly cheaper, a contradiction.
\end{proof}

\subsubsection{Eliminating Steiner points on non-extreme hull edges}
\label{app:no-nonextreme-details}

We collect here the geometric ingredients used in the proof of
Lemma~\ref{lem:no-nonextreme}.

\medskip

\begin{claim}\label{clm:extreme-neighbor-has-both-sides}
Let \(p\) lie on a non-extreme hull edge \(uv\). If the extreme vertex
\(e=(16000,0)\) is adjacent to \(p\), then \(p\) has at least one
additional neighbor above \(u\) and at least one below \(v\).
\end{claim}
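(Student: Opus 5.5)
We argue by contradiction: we assume $e\in N(p)$ while one side, say the region above $u$, contains no neighbor of $p$. The case with no neighbor below $v$ is symmetric, since reflecting across the $x$-axis exchanges the two chains and fixes $e$. We then show that deleting $p$ gives a strictly cheaper triangulation. The plan is to identify the local structure around $p$, replace $p$ by the endpoint $u$, and check that the only changed edge gets shorter.

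\textbf{Step 1 (local structure).} The neighbors of $p$ in angular order are $u$, then $e$, then some neighbors on or below $v$ (possibly none besides $v$), and finally $v$. There is nothing between $u$ and $e$, because no neighbor lies above $u$ and the extreme edges carry no Steiner points by Lemma~\ref{lem:no-extreme-edge}. Hence $u,p,e$ bound a triangle of the triangulation, so $ue$ is an edge. Note also that $p$ cannot have both $e$ and a neighbor above $u$ blocked unless the fan from $p$ on the upper side is just this single triangle.

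\textbf{Step 2 (deletion).} We delete $p$ and merge its fan into $u$. The edge $pu$ is a boundary piece and is absorbed into the hull edge $uv$. The edge $pe$ disappears, and $ue$ already exists, so no new edge is needed for it. For each remaining neighbor $z$ of $p$ below $v$, including the case $z=v$, we would replace $pz$ by $vz$ rather than by $uz$. Then $vz$ is either already a hull edge or a chord, and the resulting region around $v$ is still a fan, hence a valid triangulation. Since $p$ lies between $u$ and $v$ and $z$ lies below $v$ on the left chain, the slope bound from Observation~\ref{obs:y-difference} together with the argument of Lemma~\ref{lem:upper-reroute-gain} (applied with the roles of up and down reversed) gives $d(z,v)\le d(z,p)$. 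The same property was used in Lemma~\ref{lem:degree-lower-bound}. However, the triangle $p,v,e$ or the fan from $p$ toward $e$ also forces a new edge $ve$, or more precisely an edge from $e$ to the lowest remaining vertex. The loss is therefore the difference between $d(e,z_{\max})$ and $d(e,p)$, which is at most $d(p,z_{\max})$ by the triangle inequality. The gain is the removed edge $pe$ together with the shortenings.

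\textbf{Step 3 (main obstacle).} The difficulty is balancing the creation of an edge from $e$ to a vertex below $v$ against the deletion of $pe$. I would first try a cleaner rerouting: connect $e$ only to $u$, which is already present, and connect every lower neighbor $z$ of $p$ to $u$ instead. By the triangle inequality, $d(u,z)-d(p,z)\le d(u,p)$. The gain is $d(p,e)>15500$ by Observation~\ref{obs:extreme-distance}, while the total loss is at most $k\,d(u,p)\le 26\cdot 1.03\cdot 4096<15500$ only when $|u_y|$ is small. For large edges, I would instead bound $d(u,p)$ together with the neighbor count using Claim~\ref{clm:min-side-size}, and fall back on the triangle-inequality comparison $d(e,u)\le 16000$ from Observation~\ref{obs:extreme-distance}. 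Combining these estimates should give a strict decrease in cost, which contradicts optimality.
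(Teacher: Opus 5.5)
Your proposal has a genuine gap: no step actually establishes a strict decrease in cost.

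Step~2 begins with the right operation, which is exactly the paper's. You delete $p$ and reconnect the lower neighbors $z_1,\dots,z_k$ to $v$, and each new edge is no longer than the old one. But the bookkeeping afterwards is wrong. Since $upe$ and $pez_k$ are triangles, the edges $ue$ and $ez_k$ are already present. After the fan from $v$ is drawn, the leftover region is the convex quadrilateral $u,v,z_k,e$. It needs exactly one new diagonal, $uz_k$ or $ve$; no edge from $e$ to the lowest vertex has to be created.

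The paper adds $uz_k$. By Lemma~\ref{lem:no-short-edge} the upper endpoint has $|u_y|\le 4097$, so Observation~\ref{obs:y-difference} gives $d(u,z_k)\le 1.03(8193+4097)<15000$. Observation~\ref{obs:extreme-distance} gives $d(e,p)>15500$, so the deleted edge $pe$ alone pays for the diagonal. Adding $ve$ instead also works. Here $d(v,e)-d(p,e)\le d(p,v)$, and $vz_1$ is already an edge. Dropping $pz_1$ therefore saves $d(p,z_1)>d(p,v)$, because $\angle pvz_1$ is obtuse by the same slope argument.

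Instead of finishing this, in Step~3 you reroute everything to $u$, which is the wrong endpoint. As written, $26\cdot 1.03\cdot 4096\approx 1.1\times 10^5$ is not below $15500$. Even with $k\le 26$, the estimate $k\,d(u,p)<15500$ only covers edges with $u_y-v_y\le 512$.

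For the larger edges the loss is real, not an artifact of the bound. Apply the angle argument of Lemma~\ref{lem:upper-reroute-gain} at $p$: it gives $d(z,u)-d(z,p)\ge\frac{12}{13}d(p,u)$ for every lower neighbor $z$. Take $p$ near $v$ on the edge $(2037/5,2049)$--$(4084/5,4097)$ with nine or more lower neighbors. Then the loss exceeds $16000\ge d(e,p)$, and your rerouted triangulation is more expensive.

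Claim~\ref{clm:min-side-size} cannot repair this, since it bounds only $\min\{|S_1|,|S_2|\}$, which is $0$ here. The sentence ``combining these estimates should give a strict decrease'' is not an argument. The missing piece is the correct accounting of the single diagonal of $u,v,z_k,e$ in your Step~2 rerouting.
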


\begin{proof}
Assume first that \(p\) has no neighbor above \(u\) other than \(u\) and \(e\). Let
\[
z_1,\dots,z_k
\]
be all neighbors of \(p\) below \(v\), ordered so that \(z_1\) is closest to \(v\) and \(z_k\) is the farthest.

Remove \(p\), reconnect all \(z_i\) to \(v\), and triangulate the remaining quadrilateral \(u,v,z_k,e\) by adding the diagonal \(uz_k\). All the edges \(vz_i\) are shorter than the corresponding old edges \(pz_i\), so it only remains to compare \(uz_k\) with \(ep\).

Since \(p\) cannot lie on the edge
\[
(4084/5,4097)\text{--}(8179/5,8193),
\]
we have \(u_y\le 4097\). Hence by Observation~\ref{obs:y-difference},
\[
d(u,z_k)\le 1.03(8193+4097)<15000.
\]
On the other hand, since \(p\) lies on the left side of the polygon,
Observation~\ref{obs:extreme-distance} gives
\[
d(e,p)>15500.
\]
Thus the new triangulation is strictly cheaper, a contradiction.

The case where \(p\) has no neighbor below \(v\) other than \(v\) and \(e\) is symmetric.
\end{proof}

\medskip

\begin{claim}\label{clm:first-neighbor-long}
Let \(w_1\) (resp. \(z_1\)) be the first neighbor above \(u\) (resp. below \(v\)).
Then
\[
d(u,w_1)\ge \frac{12}{25}d(u,v),
\qquad
d(v,z_1)\ge \frac{12}{25}d(u,v).
\]
\end{claim}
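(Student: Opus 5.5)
The plan is to reduce the claim to a comparison of vertical gaps between consecutive hull edges. Here $w_1$ and $z_1$ are understood as neighbors on the left side of the hull, not $e$.

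\emph{Step 1: locate $w_1$ and $z_1$.} By Lemma~\ref{lem:no-consecutive}, no Steiner point lies on a non-extreme hull edge consecutive to $uv$. Hence the neighbor $w_1$ of $p$ above $u$ cannot lie strictly inside the hull edge $uu'$, where $u'$ is the original hull vertex following $u$ upward. So $w_1$ is either $u'$ or lies beyond it. The only case that needs separate treatment is when $uu'$ is an extreme edge, which happens only if $u$ is the topmost vertex; I will dispose of that case directly. In the main case, $w_1_y - u_y \ge u'_y-u_y$, and therefore
\[
d(u,w_1)\ge |w_{1,y}-u_y|\ge |u'_y-u_y|.
\]
The same argument applies to $z_1$ with the original vertex $v'$ preceding $v$, giving $d(v,z_1)\ge |v_y-v'_y|$.

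\emph{Step 2: upper-bound $d(u,v)$.} By Observation~\ref{obs:y-difference}, more precisely by the sharper constant $\sqrt{26}/5$ from its proof,
\[
d(u,v)\le \tfrac{\sqrt{26}}{5}\,|u_y-v_y|.
\]
It therefore suffices to show that each adjacent edge has vertical extent at least $\tfrac{12}{25}\cdot\tfrac{\sqrt{26}}{5}\approx 0.4895$ times that of $uv$. A sufficient target is a ratio of at least $1/2$, since $\tfrac12\cdot\tfrac{5}{\sqrt{26}}\approx 0.4903>\tfrac{12}{25}$.

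\emph{Step 3: compare vertical gaps.} On the upper chain, the edge $v_{i-1}v_i$ has vertical extent $2^{i-1}$ for $i\ge 2$, and the edge $v_0v_1$ has vertical extent $3$. The lower chain is symmetric. I will go through the possibilities.
\begin{itemize}
\item For a generic edge away from $v_0$, the neighbor on the side farther from the $x$-axis has twice the vertical extent. The neighbor on the side nearer the axis has half the vertical extent, giving exactly ratio $1/2$.
\item At the edges incident to $v_0$, the adjacent edge across the axis has the same extent $3$, so the ratio is $1$.
\item For the edge $v_1v_2$, with extent $2$, the neighboring edge $v_0v_1$ has extent $3$, so the ratio exceeds $1$.
\item For $v_0v_1$ itself, the edge $v_1v_2$ has extent $2$, giving ratio $2/3$.
\end{itemize}
In every case the ratio is at least $1/2$, so Step 2 yields both inequalities. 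The orientation conventions ("above $u$" versus "below $v$") carry over to the lower chain by reflecting across the $x$-axis.

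\emph{Main obstacle.} The step needing most care is Step 1: justifying that $w_1$ (respectively $z_1$) cannot be a Steiner point on the adjacent edge. This requires invoking Lemma~\ref{lem:no-consecutive} correctly. The top and bottom edges, where the adjacent edge is an extreme edge, need separate handling. For those I would use Lemma~\ref{lem:no-short-edge}, which excludes $p$ from the short edge, together with Lemma~\ref{lem:no-extreme-edge}, which excludes Steiner points on extreme edges. The ratio $1/2$ against $12/25\cdot\sqrt{26}/5$ is tight, so the constant $\sqrt{26}/5$ must be used exactly rather than the rounded $1.03$: with $1.03$ one only gets about $0.485$.
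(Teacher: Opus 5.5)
Your proof is correct, but it takes a different route from the paper's. The paper compares Euclidean edge lengths directly. Away from the axis, it uses that consecutive upper-chain edge lengths more than double, checking the edge at $v_0$ separately. Toward the axis, it writes $d(u,w_1)^2$ and $d(u,v)^2$ as quadratics in $A=2^{i-2}$ and verifies $625\,d(u,w_1)^2>144\,d(u,v)^2$.

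You instead bound both quantities by vertical extents, $d(u,w_1)\ge|u'_y-u_y|$ and $d(u,v)\le\frac{\sqrt{26}}{5}|u_y-v_y|$. This reduces everything to the fact that adjacent left-side edges have vertical extents in ratio at least $1/2$. The argument is uniform and avoids the algebra. Your explicit list of edges near $v_0$ also covers cases that the paper's closed-form parametrization of the next vertex does not literally describe, e.g.\ when that vertex is $v_0$ itself.

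Your Step~1 makes explicit something the paper uses tacitly when it asserts $d(u,w_1)\ge d(u,u^+)$: that $w_1$ is not a Steiner point inside $uu^+$. Lemma~\ref{lem:no-consecutive} is the right tool here, since $pw_1$ is then an edge, which is exactly the configuration its appendix proof treats. Lemma~\ref{lem:no-short-edge} alone already makes your ``separate case'' of an extreme adjacent edge vacuous.

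What the paper's length-based argument keeps, and yours loses, is the sharp far-side bound $d(u,u^+)>2\,d(u,v)$; your vertical-extent bounds give only about $1.96$. That sharp bound is the fact used in the balanced case of Lemma~\ref{lem:no-nonextreme}.

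One correction: your closing remark about needing $\sqrt{26}/5$ is wrong. With $1.03$ you get $d(u,w_1)\ge\frac{1}{2.06}\,d(u,v)\approx 0.485\,d(u,v)$. The value $0.485$ must be compared with $\frac{12}{25}=0.48$, not with $0.4895$, so the rounded constant also suffices.
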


\begin{proof}
We prove the first inequality; the second is symmetric.

There are two possibilities.

\smallskip
\noindent
\emph{Case 1: \(u,v\) lie on the upper chain.}
First suppose that \(uv\) is not the edge incident to \(v_0\). Then \(w_1\) lies on a hull edge above \(u\), so
\[
d(u,w_1)\ge d(u,u^+),
\]
where \(u^+\) is the next original hull vertex above \(u\). Since the successive edge lengths on the upper chain grow by more than a factor of \(2\), we have
\[
d(u,u^+)>2\,d(u,v),
\]
and hence certainly
\[
d(u,w_1)\ge \frac{12}{25}\,d(u,v).
\]
For the edge incident to \(v_0\), we check directly. In this case
\[
d(u,u^+)^2
=
2^2+\left(\frac15\right)^2,
\qquad
d(u,v)^2
=
3^2+\left(\frac15\right)^2.
\]
Therefore
\[
d(u,u^+)^2
>
\left(\frac{12}{25}\right)^2 d(u,v)^2.
\]
Thus in all cases
\[
d(u,w_1)\ge d(u,u^+)\ge \frac{12}{25}d(u,v).
\]

\smallskip
\noindent
\emph{Case 2: \(u,v\) lie on the lower chain.}
Write
\[
u=\left(\frac{2^{i-1}-(i-1)}{5},-(2^{i-1}+1)\right),
\]
\[
v=\left(\frac{2^{i}-i}{5},-(2^{i}+1)\right).
\]
Then
\[
\begin{aligned}
d(u,v)^2
&=
\left(2^{i-1}\right)^2
+\left(\frac{2^{i-1}-1}{5}\right)^2.
\end{aligned}
\]

In the worst case, \(w_1\) is the next original hull vertex above \(u\), namely
\[
w_1=
\left(\frac{2^{i-2}-(i-2)}{5},-(2^{i-2}+1)\right).
\]
Thus
\[
\begin{aligned}
d(u,w_1)^2
&=
\left(2^{i-2}\right)^2
+\left(\frac{2^{i-2}-1}{5}\right)^2.
\end{aligned}
\]

Set \(A=2^{i-2}\). Then
\[
\begin{aligned}
d(u,w_1)^2
&=
\frac{26A^2-2A+1}{25},\\
d(u,v)^2
&=
\frac{104A^2-4A+1}{25}.
\end{aligned}
\]
Therefore
\[
\begin{aligned}
625\,d(u,w_1)^2
-144\,d(u,v)^2
&=
25(26A^2-2A+1)\\
&\qquad
-\frac{144}{25}(104A^2-4A+1)\\
&=
\frac{1274A^2-674A+481}{25}\\
&>0.
\end{aligned}
\]

Hence
\[
25\,d(u,w_1)>12\,d(u,v),
\]
which gives
\[
d(u,w_1)\ge \frac{12}{25}\,d(u,v).
\]

This proves the first inequality.
\end{proof}

\medskip

\begin{claim}\label{clm:extreme-loss-small}
If \(u,p\) lie on the same hull edge and
\[
|u_y|,|p_y|\le 513,
\]
then
\[
|d(e,u)-d(e,p)|\le \frac{5}{13}d(p,u).
\]
\end{claim}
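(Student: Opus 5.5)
The plan is to treat $f(r):=d(e,r)$ as a smooth function along the segment from $p$ to $u$ and bound its directional derivative uniformly. Parametrize $r(t)=p+t(u-p)$ for $0\le t\le 1$. Since $u$ and $p$ lie on the same hull edge, every point $r(t)$ also lies on that edge. Because $|y|$ is convex along the segment, every $r(t)$ satisfies $|r(t)_y|\le 513$. The function $t\mapsto d(e,r(t))$ is differentiable, since $e$ is far from the left side. Its derivative is
\[
\frac{(u-p)\cdot(r(t)-e)}{\|r(t)-e\|}.
\]
Hence $|d(e,u)-d(e,p)|\le d(p,u)\cdot\max_t|\cos\theta(t)|$, where $\theta(t)$ is the angle between the edge direction $u-p$ and the vector $r(t)-e$. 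So it suffices to show that $|\cos\theta(t)|\le 5/13$ at every point of the segment.

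To bound this cosine, I would write $u-p=(\delta_x,\delta_y)$ and $r(t)-e=(X,Y)$. The slope bound established in the proof of Observation~\ref{obs:y-difference} gives $|\delta_x|\le|\delta_y|/5$ on every non-extreme hull edge; the edges at $v_0$ have slope $1/15$, which is even better. Since $|Y|\le 513$ and $0\le r_x\le |r_y|/5\le 513/5$, we get $|X|\ge 16000-103>15800$. Then I would split the cosine as
\[
|\cos\theta|\le \frac{|\delta_x|\,|X|}{\|\delta\|\,\|R\|}+\frac{|\delta_y|\,|Y|}{\|\delta\|\,\|R\|},
\qquad R=(X,Y).
\]
For the first term, use $\|R\|\ge|X|$ and $|\delta_x|/\|\delta\|\le (1/5)/\sqrt{1+1/25}=1/\sqrt{26}<0.197$. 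For the second term, use $|\delta_y|\le\|\delta\|$ and $|Y|/\|R\|\le 513/15800<0.033$. Together these give $|\cos\theta|<0.23<5/13$.

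Integrating the derivative bound over $t\in[0,1]$ then gives $|d(e,u)-d(e,p)|\le \tfrac{5}{13}\,d(p,u)$, in fact with considerable slack. I do not expect a real obstacle here. The only points needing care are checking that the hypothesis $|y|\le 513$ propagates to the whole segment, which follows from $u$ and $p$ being on the same edge, and invoking the coordinate bound $x\le |y|/5$ for the horizontal separation from $e$. Both are already established in the appendix proofs of Observations~\ref{obs:y-difference} and~\ref{obs:dist-to-extreme-line}.
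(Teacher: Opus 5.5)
Your proposal is correct and follows essentially the same route as the paper: bound the directional derivative of $d(e,\cdot)$ along the edge by splitting it into an $x$-part controlled by the slope bound $1/\sqrt{26}$ and a $y$-part controlled by $513/(16000-x)$, then integrate. The paper phrases this step via the mean value theorem and uses $x\le 101$ where you use $x\le 513/5$. Your explicit check that $|y|\le 513$ holds along the whole segment is a small detail the paper leaves implicit.
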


\begin{proof}
Let
\[
f(x,y):=d((x,y),e)=\sqrt{(16000-x)^2+y^2}.
\]
Let \(\gamma\) be the segment from \(p\) to \(u\), and let \(\tau\) be a unit tangent vector to \(\gamma\).

Since \(u,p\) lie on the same non-extreme hull edge, the slope bound from Observation~\ref{obs:y-difference} gives
\[
|\tau_x|\le \frac1{\sqrt{26}},
\qquad
|\tau_y|\le 1.
\]

By the mean value theorem along \(\gamma\),
\[
|d(e,u)-d(e,p)|
\le
\sup_{r\in\gamma} |\nabla f(r)\cdot \tau|\,d(p,u).
\]
For \(r=(x,y)\),
\[
\nabla f(r)=
\left(
\frac{x-16000}{d(r,e)},
\frac{y}{d(r,e)}
\right).
\]
Thus
\[
|\nabla f(r)\cdot \tau|
\le
\frac{16000-x}{d(r,e)}|\tau_x|
+
\frac{|y|}{d(r,e)}|\tau_y|.
\]
Since \(d(r,e)\ge 16000-x\), \(|y|\le 513\), and \(x\le 101\) on this part of the hull, we get
\[
|\nabla f(r)\cdot \tau|
\le
\frac1{\sqrt{26}}
+
\frac{513}{16000-101}.
\]
Finally,
\[
\frac1{\sqrt{26}}
+
\frac{513}{15899}
<
\frac{5}{13}.
\]
Therefore
\[
|d(e,u)-d(e,p)|
\le
\frac{5}{13}d(p,u).
\]
\end{proof}

\medskip

\begin{claim}\label{clm:large-height-small-side}
If
\[
|p_y|>513,
\]
then
\[
\min\{|S_1|,|S_2|\}\le 5.
\]
\end{claim}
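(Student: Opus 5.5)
Since $|p_y|>513$, the edge $uv$ containing $p$ lies on one of the upper hull edges $v_i v_{i+1}$ (or the symmetric lower ones) with $i\ge 9$. By Lemma~\ref{lem:no-short-edge} it cannot be the short edge adjacent to an extreme edge, so $9\le i\le 11$. By the reflection symmetry of $P$ it suffices to treat $p$ on the upper chain. There we show that the side of $uv$ away from $v_0$, namely the set $S_1$ of neighbors above $u$, has at most $5$ elements.

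\textbf{Counting candidates above $u$.} A neighbor above $u$ that is not $e$ is either an original hull vertex $v_j$ with $j\ge i+2$ (after relabeling, $u=v_{i+1}$), or a boundary Steiner point on an upper hull edge above $u$. We count original vertices first. Above $u$ there are at most the vertices $v_{i+2},\dots,v_{13}$, and since $i\ge 9$ this is at most $13-(i+1)\le 3$ of them.

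Next we count Steiner points. Lemma~\ref{lem:no-short-edge} forbids Steiner points on the edge $v_{12}v_{13}$. Lemma~\ref{lem:no-consecutive} forbids Steiner points on the edge $v_{i+1}v_{i+2}$ adjacent to $uv$, which is consecutive to it. Lemma~\ref{lem:one-point-per-edge} allows at most one Steiner point per remaining edge. Hence at most one Steiner point, lying on $v_{i+2}v_{i+3}$ when $i=9$, can be a neighbor. The blocking argument of Claim~\ref{clm:min-side-size} (a Steiner neighbor on edge $xy$ removes one endpoint from $N(p)$) shows that such a point does not increase the count.

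Altogether $|S_1|\le 3+1\le 5$, with slack. For the lower chain the reflected argument gives $|S_2|\le 5$. In either case $\min\{|S_1|,|S_2|\}\le 5$.

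\textbf{Main obstacle.} The step needing the most care is the orientation bookkeeping: we must make sure that the side being bounded is the one toward the short edge. On the lower chain, $S_1$ denotes the neighbors "above $u$", which point toward $v_0$, so there the bound must be applied to $S_2$ instead. This is harmless, because the claim only concerns the minimum of $|S_1|$ and $|S_2|$. A second point to check is that the rules excluding Steiner points are applied consistently: the case $i=11$ uses Lemma~\ref{lem:no-short-edge}, while the case $i=10$ uses Lemma~\ref{lem:no-consecutive} on $v_{11}v_{12}$.
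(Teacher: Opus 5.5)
Your proof is correct, but it controls the Steiner neighbors by a different mechanism than the paper. The paper never identifies the edge $uv$. It observes that at most four original vertices on $p$'s side have larger $|y|$ than $p$ (those with $|y|\in\{2^{10}+1,\dots,2^{13}+1\}$). It then applies the blocking argument of Claim~\ref{clm:min-side-size}, in which $a$ Steiner neighbors eliminate at least $a-1$ original ones, to get $4+a-(a-1)=5$.

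You instead locate $uv=v_iv_{i+1}$ with $9\le i\le 11$ via Lemma~\ref{lem:no-short-edge}, and bound the Steiner neighbors edge by edge using Lemmas~\ref{lem:no-short-edge}, \ref{lem:no-consecutive} and~\ref{lem:one-point-per-edge}. This gives the sharper bound $4$ and makes the blocking argument unnecessary; that argument is only sketched in Claim~\ref{clm:min-side-size}. The cost is that you lean on more prior lemmas.

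Two small remarks. First, you should cite Lemma~\ref{lem:no-extreme-edge} to exclude Steiner neighbors on the upper extreme edge $v_{13}e$, since Lemma~\ref{lem:one-point-per-edge} does not apply there. Second, you can drop Lemma~\ref{lem:no-consecutive}, whose appendix proof only treats adjacent Steiner points. Without it, the worst case $i=9$ still gives at most $3$ original vertices plus one Steiner point on each of $v_{10}v_{11}$ and $v_{11}v_{12}$, i.e.\ at most $5$.
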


\begin{proof}
Assume first that \(p_y>513\); the case \(p_y<-513\) is symmetric.

On the same side of the hull as \(p\), the only original hull vertices
with larger \(y\)-coordinate have
\[
y\in\{2^{10}+1,2^{11}+1,2^{12}+1,2^{13}+1\}.
\]
Thus there are at most \(4\) original hull vertices available on that side.

Now suppose \(a\) additional boundary Steiner points on this side are
adjacent to \(p\). As in Claim~\ref{clm:min-side-size}, these \(a\)
additional Steiner neighbors eliminate at least \(a-1\) original hull
vertices from being neighbors of \(p\). Hence the number of neighbors of
\(p\) on this side is at most
\[
4+a-(a-1)=5.
\]
Therefore one of \(S_1,S_2\) has size at most \(5\), so
\[
\min\{|S_1|,|S_2|\}\le 5.
\]
\end{proof}

\medskip

\begin{cor}\label{cor:height-dichotomy}
For every Steiner point \(p\), one of the following holds:
\begin{itemize}
    \item \(|p_y|\le 513\) and
    \[
    d(e,u)-d(e,p)\le \frac{5}{13}d(p,u),
    \]
    \item or \(\min\{|S_1|,|S_2|\}\le 5\).
\end{itemize}
\end{cor}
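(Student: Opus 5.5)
The corollary is a direct dichotomy on the height of \(p\), so I would prove it by splitting on whether \(|p_y|\le 513\) or \(|p_y|>513\) and invoking Claim~\ref{clm:extreme-loss-small} or Claim~\ref{clm:large-height-small-side}, respectively. Throughout, \(p\) is a boundary Steiner point on a non-extreme hull edge \(uv\). Lemma~\ref{lem:no-extreme-edge} guarantees that no Steiner point lies on an extreme edge, and the corollary is only used in that setting. I take \(u\) to be the endpoint toward which we reroute, as in the proof of Lemma~\ref{lem:no-nonextreme}.

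\emph{Case \(|p_y|>513\).} Here Claim~\ref{clm:large-height-small-side} applies verbatim and gives \(\min\{|S_1|,|S_2|\}\le 5\), which is the second alternative.

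\emph{Case \(|p_y|\le 513\).} I want the first alternative, which needs \(d(e,u)-d(e,p)\le\frac{5}{13}d(p,u)\). Claim~\ref{clm:extreme-loss-small} gives this, but only under the hypothesis \(|u_y|\le 513\), so the key step is verifying that hypothesis for the endpoint \(u\). Hull vertices have \(|y|\)-coordinates \(0\) or \(2^i+1\), so an edge whose interior point has \(|p_y|\le 513\) has either both endpoints with \(|y|\le 513\), or it is the edge between the vertices with \(|y|=513\) and \(|y|=1025\).

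\emph{The straddling edge.} In that last subcase \(|u_y|\) may equal \(1025\), and Claim~\ref{clm:extreme-loss-small} is not directly available. I would handle it in one of two ways.

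First, I can rerun the mean-value argument of Claim~\ref{clm:extreme-loss-small} with \(|y|\le 1025\) and \(x\le 202.8\). The bound becomes \(1/\sqrt{26}+1025/15797\approx 0.196+0.065<5/13\approx 0.385\), so the same conclusion holds. The proof of Claim~\ref{clm:extreme-loss-small} only uses \(|y|\) being small relative to \(16000\), so this extension is mechanical.

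Alternatively, I could note that for such \(p\) the count argument of Claim~\ref{clm:large-height-small-side} still applies using only vertices above \(1025\). There are three such vertices, which would give the second alternative directly.

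The main obstacle is exactly this boundary case, where \(p\) is low but the endpoint \(u\) is not. I would resolve it with the first option, extending the gradient estimate of Claim~\ref{clm:extreme-loss-small} to \(|y|\le 1025\). This also matches how the small-\(y\) bound is used in Subcase~2b of Lemma~\ref{lem:no-nonextreme}.
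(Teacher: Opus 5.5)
Your argument is correct and matches the paper's, which simply splits on whether $|p_y|\le 513$ and cites Claims~\ref{clm:extreme-loss-small} and~\ref{clm:large-height-small-side}. The straddling-edge case you treat as the main obstacle is vacuous, though. A Steiner point lies in the relative interior of its edge, and $|y|$ is strictly monotone along that edge. Since vertex heights jump from $513$ to $1025$, the condition $|p_y|\le 513$ forces both endpoints to have $|y|\le 513$, so the hypothesis $|u_y|\le 513$ of Claim~\ref{clm:extreme-loss-small} holds automatically. Your extension of the gradient bound to $|y|\le 1025$ is therefore valid but not needed.
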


\begin{proof}
Immediate from Claims~\ref{clm:extreme-loss-small}
and~\ref{clm:large-height-small-side}.
\end{proof}

\medskip

\begin{proof}[Detailed proof of Lemma~\ref{lem:no-nonextreme}]
Assume \(p\in uv\), where \(u\) is the upper endpoint and \(v\) is the lower endpoint. Let
\[
k=|S_1|,\qquad \ell=|S_2|,
\]
and let
\[
I_e=
\begin{cases}
1,& e\in N(p),\\
0,& e\notin N(p).
\end{cases}
\]

If \(I_e=0\), then Lemma~\ref{lem:balanced-neighbors-no-extreme} gives
\[
k=\ell.
\]
If \(I_e=1\), then Claim~\ref{clm:extreme-neighbor-has-both-sides} gives
\[
k,\ell\ge 1.
\]

\medskip
\noindent
\textbf{Step 1: Rerouting gains.}

First reroute all neighbors of \(p\) to \(u\). The gain has four parts.

For each \(w_i\in S_1\), Lemma~\ref{lem:upper-reroute-gain} gives
\[
d(w_i,p)-d(w_i,u)\ge \frac{12}{13}d(p,u).
\]
Thus the gain from \(S_1\) is at least
\[
\frac{12}{13}k\,d(p,u).
\]

For each \(z_j\in S_2\), the triangle inequality gives
\[
d(z_j,u)-d(z_j,p)\le d(p,u).
\]
Thus the loss from \(S_2\) is at most
\[
\ell\,d(p,u).
\]

If \(e\in N(p)\), replacing \(ep\) by \(eu\) costs at most
\[
d(p,u).
\]
If moreover \(|p_y|\le 513\), then Claim~\ref{clm:extreme-loss-small} improves this to
\[
d(e,u)-d(e,p)\le \frac{5}{13}d(p,u).
\]

Finally, since \(w_1\) is the first neighbor above \(u\), the edge \(uw_1\) is already present. Hence the edge \(pw_1\) disappears without needing a replacement, giving extra gain at least
\[
d(u,w_1).
\]

Therefore
\[
G_u
\ge
d(p,u)\left(\frac{12}{13}k-\ell-I_e\right)
+d(u,w_1).
\]
If \(|p_y|\le 513\), then we have the stronger bound
\[
G_u
\ge
d(p,u)\left(\frac{12}{13}k-\ell-\frac{5}{13}I_e\right)
+d(u,w_1).
\]

Symmetrically, rerouting all neighbors to \(v\) gives
\[
G_v
\ge
d(p,v)\left(\frac{12}{13}\ell-k-I_e\right)
+d(v,z_1),
\]
and if \(|p_y|\le 513\),
\[
G_v
\ge
d(p,v)\left(\frac{12}{13}\ell-k-\frac{5}{13}I_e\right)
+d(v,z_1).
\]

\medskip
\noindent
\textbf{Step 2: Balanced case \(k=\ell\).}

If \(k=\ell\), we reroute toward the endpoint with larger \(|y|\). Assume without loss of generality that this endpoint is \(u\). Then the first neighbor on that side satisfies
\[
d(u,w_1)\ge 2d(u,v).
\]
Using \(d(p,u)\le d(u,v)\), we get
\[
\begin{aligned}
G_u
&\ge
d(p,u)\left(-\frac{k}{13}-I_e\right)+2d(u,v)\\
&\ge
d(p,u)\left(2-\frac{k}{13}-I_e\right).
\end{aligned}
\]
By Claim~\ref{clm:min-side-size}, \(k\le 13\). Also \(I_e\le 1\). Therefore
\[
2-\frac{k}{13}-I_e\ge 0.
\]
The inequality is strict because \(d(u,w_1)\ge 2d(u,v)>2d(p,u)\) unless \(p=v\), which is impossible since \(p\) is an interior Steiner point of the edge. Hence
\[
G_u>0.
\]

\medskip
\noindent
\textbf{Step 3: Unbalanced case.}

If \(k>\ell\), we reroute to \(u\). If \(k<\ell\), we reroute to \(v\). The two arguments are symmetric, so assume
\[
k>\ell.
\]
Then \(k\ge \ell+1\). By Claim~\ref{clm:first-neighbor-long},
\[
d(u,w_1)\ge \frac{12}{25}d(u,v)>\frac{12}{25}d(p,u).
\]
Hence
\[
G_u
>
d(p,u)
\left(
\frac{12}{13}k-\ell-I_e+\frac{12}{25}
\right).
\]

\smallskip
\noindent
\emph{Subcase 3a: \(|p_y|>513\).}
By Claim~\ref{clm:large-height-small-side},
\[
\min\{k,\ell\}\le 5.
\]
Since \(k>\ell\), this gives
\[
\ell\le 5.
\]
Using \(k\ge \ell+1\), we obtain
\[
\begin{aligned}
G_u
&>
d(p,u)
\left(
\frac{12}{13}(\ell+1)-\ell-I_e+\frac{12}{25}
\right)\\
&=
d(p,u)
\left(
\frac{12-\ell}{13}-I_e+\frac{12}{25}
\right).
\end{aligned}
\]
Since \(\ell\le 5\) and \(I_e\le 1\),
\[
\frac{12-\ell}{13}-I_e+\frac{12}{25}
\ge
\frac{7}{13}-1+\frac{12}{25}
=
\frac{6}{325}
>0.
\]
Thus
\[
G_u>0.
\]

\smallskip
\noindent
\emph{Subcase 3b: \(|p_y|\le 513\).}
Now use the improved extreme-vertex bound from Claim~\ref{clm:extreme-loss-small}. We have
\[
G_u
>
d(p,u)
\left(
\frac{12}{13}k-\ell-\frac{5}{13}I_e+\frac{12}{25}
\right).
\]
Using \(k\ge \ell+1\) and \(I_e\le 1\),
\[
\begin{aligned}
G_u
&>
d(p,u)
\left(
\frac{12}{13}(\ell+1)-\ell-\frac{5}{13}+\frac{12}{25}
\right)\\
&=
d(p,u)
\left(
\frac{7-\ell}{13}+\frac{12}{25}
\right).
\end{aligned}
\]
By Claim~\ref{clm:min-side-size}, we have
\[
\ell\le 13.
\]
Therefore
\[
\frac{7-\ell}{13}+\frac{12}{25}
\ge
-\frac{6}{13}+\frac{12}{25}
=
\frac{6}{325}
>0.
\]
Hence
\[
G_u>0.
\]

\medskip
In every case, one of the two reroutings gives a strictly cheaper triangulation. This contradicts optimality, and proves the lemma.
\end{proof}
\end{document}